\providecommand{\U}[1]{\protect\rule{.1in}{.1in}}
\providecommand{\U}[1]{\protect\rule{.1in}{.1in}}
\newtheorem{assumption}{Assumption}
\newtheorem{theorem}{Theorem}
\newtheorem{corollary}{Corollary}
\newtheorem{lemma}{Lemma}
\newtheorem{proposition}{Proposition}
\newtheorem{remark}{Remark}
\newtheorem{definition}{Definition}
\useunder{\uline}{\ul}{}
\newcommand{\multiline}[1]{  \begin{tabularx}{\dimexpr\linewidth-\ALG@thistlm}[t]{@{}X@{}}
#1
\end{tabularx}
}
\setlist[itemize]{leftmargin=*}
\newcommand{\R}{\mathbb{R}}
\newcommand{\N}{\mathbb{N}}
\newcommand{\T}{\top}
\newcommand{\I}{\mathbf{I}}
\newcommand{\0}{\mathbf{0}}
\newcommand{\F}{\mathcal{F}}
\newcommand{\X}{\mathcal{X}}
\newcommand{\diag}{\text{diag}}
\newcommand{\tsup}[1]{\textsuperscript{#1}}
\newcommand{\mb}[1]{\mathbf{#1}}
\renewcommand{\H}{\mathcal{H}}
\newcommand{\bm}[1]{\begin{bmatrix}#1\end{bmatrix}}
\title{\LARGE \bf
Dissipativity-Based Decentralized Co-Design of Distributed Controllers and Communication Topologies for Vehicular Platoons
}
\author{Shirantha Welikala, Zihao Song, Panos J. Antsaklis and Hai Lin 
\thanks{
The first two authors contributed equally to this paper.

The support of the National Science Foundation (Grant No. CNS-1830335, IIS-2007949) is gratefully acknowledged. The authors are with the Department of Electrical Engineering, College of Engineering, University of Notre Dame, IN 46556, \texttt{{\small \{wwelikal,zsong2,hlin1,pantsakl\}@nd.edu}}.}}
\begin{document}

\maketitle

\pagenumbering{arabic}
\thispagestyle{plain}
\pagestyle{plain}

\begin{abstract}

Vehicular platoons provide an appealing option for future transportation systems. Most of the existing work on platoons separated the design of the controller and its communication topologies. However, it is beneficial to design both the platooning controller and the communication topology simultaneously, i.e., controller and topology co-design, especially in the cases of platoon splitting and merging. 
We are, therefore, motivated to propose a co-design framework for vehicular platoons that maintains both the compositionality of the controller and the string stability of the platoon, which enables the merging and splitting of the vehicles in a platoon.
To this end, we first formulate the co-design problem as a centralized linear matrix inequality (LMI) problem and then decompose it using Sylvester's criterion to obtain a set of smaller decentralized LMI problems that can be solved sequentially at individual vehicles in the platoon.
Moreover, in the formulated decentralized LMI problems, we encode a specifically derived local LMI to enforce the $L_2$ stability of the closed-loop platooning system, further implying the $L_2$ weak string stability of the vehicular platoon. 
Finally, to validate the proposed co-design method and its features in terms of merging/splitting, we provide an extensive collection of simulation results generated from a specifically developed simulation framework\footnote{Available in GitHub: \href{HTTP://github.com/NDzsong2/Longitudinal-Vehicular-Platoon-Simulator.git}{HTTP://github.com/NDzsong2/Longitudinal-Vehicular-Platoon-Simulator.git}.} that we have made publicly available.

\end{abstract}

\section{Introduction}\label{Sec:Introduction}






Vehicular platoon is a promising solution for future transportation development. 
By arranging all vehicles in closely spaced groups, platoons lead to a decreased aerodynamic drag and a better utilization of road infrastructures \cite{besselink2017string}. In this way, it can significantly increase the highway capacity, enhance the traffic safety and reduce fuel consumption and exhaust emissions \cite{jia2015survey}. 


In vehicular platoons, there are many interesting control problems, such as designing controllers to achieve the desired inter-vehicle distances and speeds. 
Plenty of control methods have been proposed, such as linear control (e.g., PID \cite{fiengo2019distributed}, LQR/LQG \cite{tavan2015optimal,wang2022optimal}, $H_{\infty}$ control \cite{ploeg2013controller,herman2014nonzero}) and nonlinear control approaches (e.g., model predictive control (MPC) \cite{chen2018robust}, optimal control \cite{morbidi2013decentralized,ploeg2013controller}, consensus-based control \cite{syed2012coordinated}, sliding-mode control (SMC) \cite{xiao2011practical,guo2016distributed}, backstepping control \cite{zhu2018distributed,chou2019backstepping} and intelligent control \cite{ji2018adaptive,li2010design}).

In most of these existing works, it is usually assumed that the communication topology of the platoon is fixed.
However, in many practical scenarios, the communication topology is dynamic, and vehicles can join or leave a platoon anytime. For example, dynamic changes, such as merging and splitting of the vehicles, frequently occur when ramp vehicles merge into mainlines, when long platoons pass intersections, and when vehicles exit a platoon and make lane changes \cite{guo2015communication}. This poses significant challenges in the controller design for platoons since we must  simultaneously guarantee the designed controller's compositionality and the platoon's string stability after vehicles join/leave the corresponding formations.
On the one hand, the compositionality of the controllers implies that the designed controllers for other remaining vehicles in the platoon are not required to be re-designed, e.g., re-tuning the control parameters from scratch, when such maneuverings occur.
On the other hand, the string stability of the platoons captures the uniform boundedness of the tracking errors as they propagate along the downstream direction of the platoons. 


The study of the platooning control that enables merging and splitting
is only recent. Methods, such as PID \cite{dasgupta2017merging}, MPC \cite{goli2019mpc}, and cooperative approaches \cite{morales2016merging}, have been proposed in the literature, which are either heuristic or traditional formation-based control without the assuring of the string stability. More importantly, re-designing or re-tuning of the controllers are needed for those remaining vehicles after the vehicular joining/leaving the platoon. In other words, these methods are not compositional and lack formal string stability guarantee. 

We are, therefore, motivated to study the platooning control that can simultaneously guarantee the designed controller's compositionality and the platoon's string stability when vehicles join/leave the corresponding formations. Furthermore, it is known that the communication topology of the platoons plays a significant role in the performance of the platoons, e.g., stability margin, communication cost, security, etc. However, the study of topology-related impacts on the platoon is very sparse. Existing work includes the design of innovative communication topologies \cite{orki2019control,yan2022pareto} and the analysis of the influence of the communication topologies on the platoon properties \cite{zheng2015stability,hao2011stability,hao2012achieving}. In these existing works, the topology study is separated from the controller synthesis. We believe that a better practice is to design the controller and the communication topology together, 
especially when vehicles join/leave the platoon, where the change of the topology will have an impact on the designed controller for the platoon.
Therefore, in this paper, we design the communication topology and the distributed controller simultaneously for the platoon, which is called a co-design problem.  


Our solution to the co-design problem roots in our recent work on a dissipativity-based design framework for general networked systems. Based on this framework, the platooning controller and topology co-design problem is formulated as a centralized Linear Matrix Inequality (LMI), and a decentralized way to solve such an LMI is proposed.  Unlike the existing methods, our proposed co-design method is resilient (compositional) concerning the joining/leaving of the vehicles in a platoon, i.e., there is no need to re-design the controllers for all the remaining vehicles but rather modify the controllers for those affected vehicles. It is shown that the designed controller ensures the $L_2$ stability of the closed-loop platooning system, which further implies the $L_2$ weak string stability. 
Hence, the proposed co-design approach ensures compositionality and string stability. 

This current work is closely related to our precious work on general networked dynamic systems and on a distributed backstepping controller for platoons \cite{Zihao2022Ax}. However, the novelty of this work lies in the following aspects:

\begin{enumerate}
    \item By assuming all the followers can get access to the leader's information, we propose a distributed platooning controller that can be designed in an incremental (compositional) manner based on the corresponding LMIs. 
    \item For the closed-loop platoon system under this distributed controller, we propose an optimization-based topology synthesis approach to determine the control parameters for the controller, which also represent the weighted communication links between vehicles;
    \item To support the feasibility of this controller-topology synthesis (both in the centralized and decentralized settings), we introduce local controllers at the vehicles along with a specifically developed local controller design scheme;
    \item The resulting topology (controller with certain parameters) guarantees the internal and the $L_2$ weak string stability of the closed-loop platooning system and is also security-aware and robust with respect to disturbances. 
    
\end{enumerate}

This paper is organized as follows. Section \ref{Sec:Preliminaries} summarizes several important notations and preliminary concepts about the dissipativity, networked systems, decentralization approach and string stability, where the proofs are included in Section \ref{Sec:Appendix}. The platooning problem formulation is illustrated in Section \ref{Sec:Platooning_Problem_Formulation}, followed by a centralized and decentralized co-design of the platoon in Section \ref{Sec:CentralizedCoDesign} and \ref{Sec:DecentralizedCoDesign}, respectively. In Section \ref{Sec:NumericalResults}, the simulation results for merging/splitting of the vehicles are presented. Finally, a conclusion remark is made in Section \ref{Sec:Conclusion}.


\section{Preliminaries}\label{Sec:Preliminaries}


\subsection{Notations}

The sets of real and natural numbers are denoted by $\R$ and $\N$, respectively. We define $\N_N\triangleq\{1,2,\ldots,N\}$ where $N\in\N$. 
An $n\times m$ block matrix $A$ can be represented as $A=[A_{ij}]_{i\in\N_n, j\in\N_m}$ where $A_{ij}$ is the $(i,j)$\tsup{th} block of $A$ (for indexing purposes, either subscripts or superscripts may be used, i.e., $A_{ij} \equiv A^{ij}$). 
$[A_{ij}]_{j\in \N_m}$ and $\diag([A_{ii}]_{i\in\N_n})$ represent a block row matrix and a block diagonal matrix, respectively. We define $\{A^i\} \triangleq \{A_{ii}\}\cup\{A_{ij},j\in\N_{i-1}\}\cup\{A_{ji}:j\in\N_i\}$. If $\Psi\triangleq[\Psi^{kl}]_{k,l \in \N_m}$ where each $\Psi^{kl}\triangleq[\Psi^{kl}_{ij}]_{i,j\in\N_n}$, the \emph{block element-wise form} of $\Psi$ is denoted as $\mbox{BEW}(\Psi) \triangleq [[\Psi^{kl}_{ij}]_{k,l\in\N_m}]_{i,j\in\N_n}$ (e.g., see \cite{WelikalaP32022}). 
The transpose of a matrix $A$ is denoted by $A^\T$ and $(A^\T)^{-1} = A^{-\T}$. 
The zero and identity matrices are denoted by $\0$ and $\I$, respectively (dimensions will be clear from the context). A symmetric positive definite (semi-definite) matrix $A\in\R^{n\times n}$ is represented as $A=A^\T>0$ ($A=A^\T \geq 0$). We assume $A>0 \iff A=A^\T>0$. The symbol $\star$ represents redundant conjugate matrices (e.g., $A^\T B\, \star \equiv A^\T B A$). 
The symmetric part of a matrix $A$ is defined as $\H(A) \triangleq A+A^\T$. 
$\mb{1}_{\{\cdot\}}$ is the indicator function and $\mathsf{e}_{ij} \triangleq \mb{1}_{\{i=j\}}$. 
We use $\mathcal{K}$, $\mathcal{K}_{\infty}$ and $\mathcal{KL}$ to denote different classes of comparison functions (e.g., see \cite{sontag1995characterizations}). 
For a vector $x\in\mathbb{R}^n$, the Euclidean norm of it is given by $|x|_2\triangleq |x| \triangleq \sqrt{x^\T x}$. For a time-dependent vector $x(t)\in\mathbb{R}^n$, the $\mathcal{L}_2$ and $\mathcal{L}_{\infty}$ norms of it are given by $\|x(\cdot)\|\triangleq\sqrt{\int_{0}^{\infty}|x(\tau)|^2d\tau}$ and $\|x(\cdot)\|_{\infty} \triangleq \sup_{t\geq 0}\ |x(t)|$, respectively. 

\subsection{Dissipativity} 

Consider the dynamic system  
\begin{equation}\label{Eq:GeneralSystem}
\begin{aligned}
    \dot{x}(t) = f(x(t),u(t)),\\
    y(t) = h(x(t),u(t)),
    \end{aligned}
\end{equation}
where $x(t)\in\R^{n}$, $u(t)\in \R^{q}$, $y(t)\in\R^{m}$, and $f:\R^{n}\times \R^{q} \rightarrow \R^{n}$ and $h:\R^{n}\times \R^{q}\rightarrow \R^{m}$ are continuously differentiable. 

The equilibrium points of \eqref{Eq:GeneralSystem} are such that there exists a set $\X\subset \R^{n}$ where for every $x^*\in\X$, there is a unique $u^*\in\R^{q}$ that satisfies $f(x^*,u^*)=\0$ while both $u^*$ and $y^*\triangleq h(x^*,u^*)$ being implicit functions of $x^*$. For the dissipativity analysis of \eqref{Eq:GeneralSystem} without the explicit knowledge of its equilibrium points, the \emph{equilibrium-independent dissipativity} (EID) property \cite{Arcak2022} defined next can be used. Note that it also includes the conventional dissipativity property \cite{Willems1972a}, particularly when $\X=\{\0\}$ and for $x^*=\0(\in\X)$, the corresponding $u^*=\0$ and $y^*=\0$. 

\begin{definition}\label{Def:EID}
The system \eqref{Eq:GeneralSystem} is EID under supply rate $s:\R^{q}\times\R^{m}\rightarrow \R$ if there exists a continuously differentiable storage function $V:\R^{n}\times \X \rightarrow \R$ satisfying:
$V(x,x^*)>0$ with $x \neq x^*$, 
$V(x^*,x^*)=0$, and 
$$\dot{V}(x,x^*) = \nabla_x V(x,x^*)f(x,u)\leq  s(u-u^*,y-y^*),$$ 
for all $(x,x^*,u)\in\R^{n}\times \X \times \R^{q}$.
\end{definition}

This EID property can be specialized based on the used supply rate $s(\cdot,\cdot)$. In the sequel, we use the concept of $X$-EID property \cite{WelikalaP52022}, which is defined using a quadratic supply rate determined by the coefficients matrix $X=X^\T \in\R^{q+m}$.

\begin{definition}\label{Def:X-EID}
The system \eqref{Eq:GeneralSystem} is $X$-EID (with $X \triangleq [X^{kl}]_{k,l\in\N_2}$) if it is EID under the quadratic supply rate:
\begin{center}
    $s(u-u^*,y-y^*) \triangleq  
    \bm{u-u^*\\y-y^*}^\T 
    \bm{X^{11} & X^{12}\\X^{21} & X^{22}}
    \bm{u-u^*\\y-y^*}.$
\end{center}
\end{definition}

Depending on the choice of the coefficients matrix $X$ in the above-defined $X$-EID property, it can represent several properties of interest, as detailed in the following remark.

\begin{remark}\label{Rm:X-DissipativityVersions}
If the system \eqref{Eq:GeneralSystem} is $X$-EID with: 
\begin{enumerate}
    \item $X = \scriptsize \bm{\0 & \frac{1}{2}\I \\ \frac{1}{2}\I & \0}$, then it is \emph{passive};
    \item $X = \scriptsize \bm{-\nu\I & \frac{1}{2}\I \\ \frac{1}{2}\I & -\rho\I}$, then it is \emph{strictly passive} ($\nu$ and $\rho$ are the input feedforward and output feedback passivity indices \cite{WelikalaP42022}, respectively);
    \item $X = \scriptsize \bm{\gamma^2\I & \0 \\ \0 & -\I}$, then it is \emph{$L_2$-stable}  ($\gamma$ is the $L_2$-gain, denoted as L2G($\gamma$)).
\end{enumerate}
In particular, if \eqref{Eq:GeneralSystem} is strictly passive with passivity indices $\nu$ and $\rho$ (as in the above Case 2), we also denote this as \eqref{Eq:GeneralSystem} being IF-OFP($\nu$,$\rho$). Moreover, we use the notations: IFP($\nu$)$\triangleq$IF-OFP($\nu$,0), OFP($\rho$)$\triangleq$IF-OFP(0,$\rho$) and VSP($\tau$)$\triangleq$IF-OFP($\tau,\tau$). 
\end{remark}

If the system \eqref{Eq:GeneralSystem} is linear time-invariant (LTI), a necessary and sufficient condition for it to be $X$-EID is given in the next proposition - as a linear matrix inequality (LMI) problem.

\begin{proposition}\label{Pr:LTISystemXDisspativity}
The linear time-invariant (LTI) system
\begin{equation}\label{Eq:Pr:LTISystemXDisspativity1}
\begin{aligned}
    \dot{x}(t) =&\ A x(t) + B u(t),\\
    y(t) =&\ Cx(t) + Du(t),
\end{aligned}
\end{equation}
is $X$-EID if and only if there exists $P>0$ such that 
\begin{align}\label{Eq:Pr:LTISystemXDisspativity2}
    \scriptsize \bm{
-\mathcal{H}(PA)+ C^\T X^{22} C & -PB + C^\T X^{21} + C^\T X^{22} D\\
\star & X^{11} + \mathcal{H}(X^{12} D) + D^\T X^{22}D
}\geq 0.
\end{align}
\end{proposition}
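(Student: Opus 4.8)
The plan is to prove both implications by testing the natural quadratic storage function and reducing the dissipation inequality of Definition~\ref{Def:EID} to the matrix inequality~\eqref{Eq:Pr:LTISystemXDisspativity2} through a purely algebraic expansion; no completion of squares is required.

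\textbf{Sufficiency.} Given $P>0$ satisfying~\eqref{Eq:Pr:LTISystemXDisspativity2}, I would take $V(x,x^*) \triangleq (x-x^*)^\T P (x-x^*)$, which trivially has $V(x^*,x^*)=0$ and $V(x,x^*)>0$ for $x\neq x^*$. Introducing the shifted variables $\tilde x \triangleq x-x^*$, $\tilde u \triangleq u-u^*$ and $\tilde y \triangleq y-y^* = C\tilde x + D\tilde u$, and invoking the equilibrium relation $Ax^*+Bu^*=\0$ to write $Ax+Bu = A\tilde x + B\tilde u$, one obtains
\[
\dot V(x,x^*) = 2\tilde x^\T P(A\tilde x + B\tilde u) = \tilde x^\T \H(PA)\,\tilde x + 2\tilde x^\T PB\,\tilde u .
\]
Expanding the quadratic supply rate of Definition~\ref{Def:X-EID} with $\tilde y = C\tilde x + D\tilde u$ and using $X^{21}=(X^{12})^\T$, $X^{22}=(X^{22})^\T$ gives
\[
s(\tilde u,\tilde y) = \tilde x^\T C^\T X^{22} C\,\tilde x + 2\tilde x^\T\!\big(C^\T X^{21} + C^\T X^{22} D\big)\tilde u + \tilde u^\T\!\big(X^{11} + \H(X^{12} D) + D^\T X^{22} D\big)\tilde u .
\]
Subtracting, $s(\tilde u,\tilde y) - \dot V(x,x^*) = \bm{\tilde x \\ \tilde u}^\T M \bm{\tilde x \\ \tilde u}$, where $M$ is exactly the matrix on the left-hand side of~\eqref{Eq:Pr:LTISystemXDisspativity2}. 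Since $x$ and $u$ are arbitrary, $(\tilde x,\tilde u)$ ranges over all of $\R^n\times\R^q$, so $\dot V \le s$ for every admissible triple iff $M\ge 0$, i.e.\ iff~\eqref{Eq:Pr:LTISystemXDisspativity2} holds; hence the system is EID under this quadratic supply rate, i.e.\ $X$-EID.

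\textbf{Necessity.} Here I would reverse the computation. For any fixed $x^*\in\X$, passing to the coordinates $(\tilde x,\tilde u,\tilde y)$ above turns the $X$-EID property into ordinary dissipativity of the LTI system $\dot{\tilde x}=A\tilde x+B\tilde u$, $\tilde y = C\tilde x + D\tilde u$ with respect to the quadratic supply rate $s$. By the classical storage-function theory for linear systems with quadratic supply rates --- equivalently, by the fact that the available storage of such a system is itself a quadratic form $\tilde x^\T P\tilde x$ with $P=P^\T\ge 0$ (and $P>0$ under the nondegeneracy of the realization) --- there exists $P>0$ for which $V(x,x^*)=\tilde x^\T P\tilde x$ is a valid storage function. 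Substituting this $V$ into $\dot V\le s$ and rearranging exactly as in the sufficiency step produces $M\ge 0$, which is~\eqref{Eq:Pr:LTISystemXDisspativity2}.

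The sufficiency direction is essentially bookkeeping: the only things to watch are the sign conventions and the symmetrization of the cross terms via $\H(\cdot)$. The real obstacle is the necessity direction, where one must pass from the arbitrary continuously differentiable storage function granted by Definition~\ref{Def:EID} to a \emph{quadratic} one with a strictly positive definite Lyapunov matrix; this is precisely where the linearity of~\eqref{Eq:Pr:LTISystemXDisspativity1} is indispensable (the converse dissipativity/KYP machinery), and the strictness $P>0$ rather than $P\ge 0$ is the subtle point, resting on the implicit reachability/observability of the given realization.
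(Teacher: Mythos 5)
Your proposal is correct and takes essentially the same route as the paper: the quadratic storage function $V(x,x^*)=(x-x^*)^\T P(x-x^*)$, the shift to $(\tilde{x},\tilde{u})$ via $Ax+Bu=A\tilde{x}+B\tilde{u}$, and the observation that $s-\dot{V}$ is exactly the quadratic form of the matrix in \eqref{Eq:Pr:LTISystemXDisspativity2}, so the dissipation inequality for all $(\tilde{x},\tilde{u})$ is equivalent to $W\geq 0$. On the converse direction you are in fact more careful than the paper, which essentially argues the ``LMI $\Rightarrow$ $X$-EID'' implication twice and defers the genuine necessity (that the storage function may be taken quadratic with $P>0$, resting on KYP-type converse results and nondegeneracy of the realization) to an external citation, exactly the subtlety you flag.
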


The following corollary considers a particular LTI system with a local controller (a configuration which we will encounter later on) and provides an LMI problem for synthesizing this local controller so as to enforce/optimize $X$-EID. 

\begin{corollary}\label{Co:LTISystemXDisspativation}
The LTI system
\begin{equation}\label{Eq:Co:LTISystemXDisspativation1}
    \dot{x}(t) = (A+BL)x(t) + \eta(t),
\end{equation}
is $X$-EID with $X^{22}<0$ from external input $\eta(t)$ to state $x(t)$ if and only if there exists $P>0$ and $K$ such that 
\begin{equation}\label{Eq:Co:LTISystemXDisspativation2}
\bm{-(X^{22})^{-1} & P & \0 \\
\star &-\mathcal{H}(AP+BK)& -I + PX^{21}\\
\star &\star & X^{11}}\geq 0
\end{equation}  
and $L=KP^{-1}$.
\end{corollary}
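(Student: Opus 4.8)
The plan is to derive \eqref{Eq:Co:LTISystemXDisspativation2} from Proposition \ref{Pr:LTISystemXDisspativity} through a congruence transformation, a linearizing change of variables, and a Schur complement, all of which are reversible so that the ``if and only if'' follows in both directions.

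First I would note that \eqref{Eq:Co:LTISystemXDisspativation1} is the LTI system \eqref{Eq:Pr:LTISystemXDisspativity1} with state matrix $A+BL$, input matrix $\I$ (the external input being $\eta$), output matrix $C=\I$, and feedthrough $D=\0$ (since the ``output'' here is the state $x$). Substituting these into \eqref{Eq:Pr:LTISystemXDisspativity2}, Proposition \ref{Pr:LTISystemXDisspativity} gives that \eqref{Eq:Co:LTISystemXDisspativation1} is $X$-EID if and only if there exists $\bar{P}>0$ such that
\[
\bm{-\mathcal{H}(\bar{P}(A+BL))+X^{22} & -\bar{P}+X^{21}\\ \star & X^{11}}\geq 0 .
\]
Next I would set $P\triangleq\bar{P}^{-1}>0$ and apply a congruence transformation to this inequality by the invertible symmetric matrix $\diag(P,\I)$, which preserves the sign. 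Using $P\bar{P}=\bar{P}P=\I$, the $(1,1)$ block becomes $-\mathcal{H}((A+BL)P)+PX^{22}P$ and the $(1,2)$ block becomes $-\I+PX^{21}$. Introducing the new decision variable $K\triangleq LP$ (equivalently $L=KP^{-1}$) replaces $(A+BL)P$ by $AP+BK$, yielding
\[
\bm{-\mathcal{H}(AP+BK)+PX^{22}P & -\I+PX^{21}\\ \star & X^{11}}\geq 0 .
\]

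Finally I would linearize the quadratic term $PX^{22}P$. Since $X^{22}<0$, the matrix $-(X^{22})^{-1}$ is positive definite, so by the Schur complement lemma the inequality \eqref{Eq:Co:LTISystemXDisspativation2} is equivalent to positive semidefiniteness of the Schur complement of its $(1,1)$ block, which, using $(-(X^{22})^{-1})^{-1}=-X^{22}$, is exactly the $2\times 2$ inequality obtained above. Chaining these equivalences, and noting that $P=\bar{P}^{-1}$ and $L=KP^{-1}$ are bijective changes of variables, completes the proof.

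There is no deep obstacle here; the argument is a routine ``change of variables'' linearization of a dissipativity LMI. The points that need care are purely bookkeeping: applying Proposition \ref{Pr:LTISystemXDisspativity} with the output equal to the full state ($C=\I$, $D=\0$), respecting the symmetry conventions $X^{12}=(X^{21})^\T$ and $X^{22}=(X^{22})^\T$ so that the congruence and Schur steps reproduce exactly the block pattern of \eqref{Eq:Co:LTISystemXDisspativation2}, and using the hypothesis $X^{22}<0$ precisely where it is needed, namely to make $-(X^{22})^{-1}$ well defined and positive definite so that the Schur complement step is valid.
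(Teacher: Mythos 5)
Your proposal is correct and follows essentially the same route as the paper's own proof: specialize Proposition \ref{Pr:LTISystemXDisspativity} with $A\rightarrow A+BL$, $B\rightarrow\I$, $C\rightarrow\I$, $D\rightarrow\0$, apply the congruence transformation $\diag(P^{-1},\I)$, linearize via $K\triangleq LP^{-1}$ together with the renaming $P\triangleq P^{-1}$, and finish with a Schur complement using $X^{22}<0$. Your version is in fact slightly cleaner in that it keeps $\bar{P}$ and $P=\bar{P}^{-1}$ notationally distinct and explicitly notes that every step is an equivalence, which the ``if and only if'' requires.
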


\subsection{Networked Systems}

\subsubsection{\textbf{Configuration}}
Consider the networked system $\Sigma$ shown in Fig. \ref{Fig:Interconnection2} comprised of $N$ independent dynamic subsystems $\{\Sigma_i:i\in\N_N\}$ and a static interconnection matrix $M$ that defines how the subsystems, an exogenous input signal $w(t) \in \R^{r}$ (e.g., disturbance) and an interested output signal $z(t) \in\R^{l}$ (e.g., performance) are interconnected with each other.

The dynamics of the subsystem $\Sigma_i, i\in\N_N$ are given by
\begin{equation}\label{Eq:SubsystemDynamics}
    \begin{aligned}
        \dot{x}_i(t) = f_i(x_i(t),u_i(t)),\\
        y_i(t) = h_i(x_i(t),u_i(t)),
    \end{aligned}
\end{equation}
where $x_i(t)\in\R^{n_i}, u_i(t)\in\R^{q_i}, y_i(t)\in\R^{m_i}$ and $u \triangleq \bm{u_i^\T}_{i\in\N_N}^\T, y \triangleq \bm{y_i^\T}_{i\in\N_N}^\T$. Analogous to \eqref{Eq:GeneralSystem}, each subsystem $\Sigma_i, i\in\N_N$ is assumed to have a set $\mathcal{X}_i \subset \R^{n_i}$ where for every $x_i^* \in \mathcal{X}_i$, there is a unique $u_i^* \in \R^{q_i}$ that satisfies $f_i(x_i^*,u_i^*)=0$ while both $u_i^*$ and $y_i^*\triangleq h_i(x_i^*,u_i^*)$ being implicit functions of $x_i^*$. Moreover, each subsystem $\Sigma_i, i\in\N_N$ is also assumed to be $X_i$-EID where $X_i = X_i^\T \triangleq [X_i^{kl}]_{k,l\in\N_2}$ (see Def. \ref{Def:X-EID}).

On the other hand, the interconnection matrix $M$ and the corresponding interconnection relationship are given by: 
\begin{equation}\label{Eq:NSC2Interconnection}
    \bm{u\\z}= \bm{M_{uy} & M_{uw} \\ M_{zy} & M_{zw}}\bm{y\\w} \equiv  M \bm{y\\w}.
\end{equation}
Note that, we assume $w\triangleq [w_i^\T]_{i\in\N_N}^\T$ and $z\triangleq[z_i^\T]_{i\in\N_N}^\T$, where each $w_i\in\R^{r_i}$ and $z_i\in\R^{l_i}$ respectively represent an exogenous disturbance input and an interested performance output corresponding to the subsystem $\Sigma_i,i\in\N_N$. Consequently, $w$ and $z$ are structured similarly to $u$ and $y$ in \eqref{Eq:NSC2Interconnection}.

Finally, the networked system $\Sigma$ is assumed to have a set of equilibrium states $\X \subseteq \bar{\X} \subset \R^n$ where $\bar{\X} \triangleq \X_1\times \X_2 \times \cdots \times \X_N$ and $n\triangleq\sum_{i\in\N_N} n_i$. Note that, based on the subsystem equilibrium points, for each $x^* \triangleq [(x_i^*)^\T]_{i\in\N_N}^\T \in \X$, there exists corresponding $u^*\triangleq [(u_i^*)^\T]_{i\in\N_N}^\T$ and $y^* \triangleq [(y_i^*)^\T]_{i\in\N_N}^\T$ that are implicit functions of $x^*$. Note also that, based on the networked system's equilibrium points, for each $x^* \in \X$, there is a unique $w^* \in \R^{r}$ that satisfies $u^*=M_{uy}y^*+M_{uw}w^*$ while both $w^*$ and $z^* \triangleq M_{zy}y^* + M_{zw}w^*$ being implicit functions of $x^*$. 

\vspace{-3mm}
\begin{figure}[!h]
\centering
\captionsetup{justification=centering}
\includegraphics[width=1.5in]{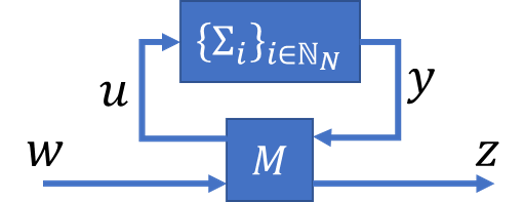}
\caption{A generic networked system $\Sigma$.}
\label{Fig:Interconnection2}
\end{figure}

\subsubsection{\textbf{Dissipativity Analysis}}
Inspired by \cite{Arcak2022}, the following proposition exploits the $X_i$-EID properties of the individual subsystems $\Sigma_i,i\in\N_N$ to formulate an LMI problem so as to analyze the $\textbf{Y}$-EID property of the networked system $\Sigma$, where $\textbf{Y}=\textbf{Y}^\T=\bm{\textbf{Y}^{kl}}_{k,l\in\N_2}$ is prespecified (e.g., based on Rm. \ref{Rm:X-DissipativityVersions}).

\begin{proposition}\label{Pr:NSC2Dissipativity}
The networked system $\Sigma$ is $\textbf{Y}$-EID if there exist scalars $p_i \geq 0, \forall i \in \N_N$ such that 
\begin{equation}\label{Eq:Pr:NSC2Dissipativity1}
\scriptsize \bm{M_{uy} & M_{uw} \\ \I & \0 \\ 
\0 & \I \\ M_{zy} & M_{zw}}^\T
\bm{\textbf{X}_p & \0 \\ \0 & -\textbf{Y}}
\bm{M_{uy} & M_{uw} \\ \I & \0 \\ 
\0 & \I \\ M_{zy} & M_{zw}} \normalsize \leq 0,
\end{equation}
where $\textbf{X}_p = [\textbf{X}_p^{kl}]_{k,l\in\N_2}$ with $\textbf{X}_p^{kl} \triangleq \diag(p_i X_i^{kl}:i\in\N_N)$.
\end{proposition}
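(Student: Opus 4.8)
The plan is to run the standard ``sum of storage functions'' dissipativity argument for interconnected systems, but carried out entirely at the level of the quadratic supply-rate certificates so that it reduces to the single matrix inequality \eqref{Eq:Pr:NSC2Dissipativity1}. First I would take the candidate storage function for $\Sigma$ to be $V(x,x^*) \triangleq \sum_{i\in\N_N} p_i V_i(x_i,x_i^*)$, where each $V_i$ is the storage function certifying that $\Sigma_i$ is $X_i$-EID and each $p_i\geq 0$ is the scalar appearing in the statement. Since $p_i\geq 0$ and each $V_i$ is positive definite about $x_i^*$ (and vanishes there), $V$ is a legitimate storage function candidate for $\Sigma$ about any $x^*\in\X$. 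Differentiating along the trajectories of the networked system and invoking Def.~\ref{Def:EID}/\ref{Def:X-EID} for each subsystem gives
\begin{equation*}
\dot V(x,x^*) \;\leq\; \sum_{i\in\N_N} p_i \bm{u_i-u_i^*\\ y_i-y_i^*}^\T \bm{X_i^{11} & X_i^{12}\\ X_i^{21} & X_i^{22}} \bm{u_i-u_i^*\\ y_i-y_i^*}.
\end{equation*}

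Next I would rewrite this bound compactly. Stacking the subsystem signals and using the block-diagonal matrices $\textbf{X}_p^{kl} \triangleq \diag(p_i X_i^{kl}:i\in\N_N)$, the right-hand side becomes $\bm{u-u^*\\ y-y^*}^\T \textbf{X}_p \bm{u-u^*\\ y-y^*}$. Now I would substitute the interconnection relations \eqref{Eq:NSC2Interconnection} in deviation form, namely $u-u^* = M_{uy}(y-y^*) + M_{uw}(w-w^*)$ and $z-z^* = M_{zy}(y-y^*)+M_{zw}(w-w^*)$, which hold because $(u^*,y^*,w^*,z^*)$ is itself an equilibrium tuple satisfying \eqref{Eq:NSC2Interconnection}. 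This expresses $u-u^*$ as a linear image of $\xi \triangleq \bm{(y-y^*)^\T & (w-w^*)^\T}^\T$, and likewise $z-z^*$. Hence both $\dot V$'s upper bound and the desired supply term $s_{\textbf{Y}}(w-w^*,z-z^*) = \bm{w-w^*\\ z-z^*}^\T \textbf{Y} \bm{w-w^*\\ z-z^*}$ can be written as quadratic forms $\xi^\T(\cdot)\xi$, using the tall stacked matrix
\begin{equation*}
\mathcal{M} \triangleq \bm{M_{uy} & M_{uw}\\ \I & \0\\ \0 & \I\\ M_{zy} & M_{zw}},
\end{equation*}
which maps $\xi$ to $\bm{(u-u^*)^\T & (y-y^*)^\T & (w-w^*)^\T & (z-z^*)^\T}^\T$. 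Concretely, $\dot V - s_{\textbf{Y}} \leq \xi^\T \mathcal{M}^\T \,\mathrm{diag}(\textbf{X}_p,-\textbf{Y})\, \mathcal{M}\, \xi$, and condition \eqref{Eq:Pr:NSC2Dissipativity1} is exactly $\mathcal{M}^\T \mathrm{diag}(\textbf{X}_p,-\textbf{Y})\mathcal{M} \leq 0$. Therefore \eqref{Eq:Pr:NSC2Dissipativity1} forces $\dot V(x,x^*) \leq s_{\textbf{Y}}(w-w^*,z-z^*)$ for all admissible signals, which is precisely the statement that $\Sigma$ is $\textbf{Y}$-EID with storage function $V$.

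The routine parts are the differentiation and the bookkeeping that reassembles the per-subsystem inequalities into the block form; these just use linearity and the block-diagonal structure of $\textbf{X}_p$. The one place that needs a little care — and what I would flag as the main subtlety rather than a deep obstacle — is the handling of equilibria: one must check that for each networked equilibrium $x^*\in\X$ the induced $u^*,y^*,w^*,z^*$ genuinely satisfy the interconnection identity, so that subtracting equilibrium values is consistent across \eqref{Eq:SubsystemDynamics} and \eqref{Eq:NSC2Interconnection}; this is guaranteed by the standing assumptions on $\X$ stated just before Fig.~\ref{Fig:Interconnection2}. A secondary point is that the inequality in \eqref{Eq:Pr:NSC2Dissipativity1} need only be tested on the range of $\mathcal{M}$, i.e. on vectors of the form $\mathcal{M}\xi$; asking it as a bona fide matrix inequality $\mathcal{M}^\T(\cdot)\mathcal{M}\leq 0$ is a (sufficient) strengthening, which is why the proposition is stated as an implication (``if there exist $p_i$'') rather than an equivalence.
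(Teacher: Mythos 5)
Your proposal is correct and follows essentially the same route as the paper's own proof: the weighted sum $V=\sum_i p_i V_i$ as storage function, stacking the subsystem supply rates into $\textbf{X}_p$, substituting the interconnection relations in deviation form, and reading off \eqref{Eq:Pr:NSC2Dissipativity1} as the resulting quadratic-form condition. Your closing remarks on equilibrium consistency and on the matrix inequality being a sufficient strengthening are sound additions but do not change the argument.
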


\subsubsection{\textbf{Topology Synthesis}}
Inspired by \cite{WelikalaP52022}, the next proposition formulates an LMI problem for synthesizing the interconnection matrix $M$ \eqref{Eq:NSC2Interconnection} so as to enforce $\textbf{Y}$-EID property for the networked system $\Sigma$. However, as in \cite{WelikalaP52022}, we first need to make two mild assumptions.

\begin{figure*}[!b]
\vspace{-5mm}
\centering
\hrulefill
\begin{equation}\label{Eq:Pr:NSC2Synthesis2}
	\scriptsize \bm{
		\textbf{X}_p^{11} & \0 & L_{uy} & L_{uw} \\
		\0 & -\textbf{Y}^{22} & -\textbf{Y}^{22}M_{zy} & -\textbf{Y}^{22} M_{zw}\\ 
		L_{uy}^\T & -M_{zy}^\T\textbf{Y}^{22} & -L_{uy}^\T \textbf{X}^{12}-\textbf{X}^{21}L_{uy}-\textbf{X}_p^{22} & -\textbf{X}^{21}L_{uw}+M_{zy}^\T\textbf{Y}^{21} \\
		L_{uw}^\T & -M_{zw}^\T\textbf{Y}^{22} & -L_{uw}^\T \textbf{X}^{12}+\textbf{Y}^{12} M_{zy} &  M_{zw}^\T\textbf{Y}^{21} +  \textbf{Y}^{12}M_{zw} + \textbf{Y}^{11}
	} \normalsize >0
\end{equation}
\begin{equation}\label{Eq:Pr:NSC2Synthesis2Alternative} 
\scriptsize \bm{
 -\textbf{Y}^{22} & -\textbf{Y}^{22}M_{zy} & -\textbf{Y}^{22}M_{zw} \\
 -M_{zy}^\T\textbf{Y}^{22} & -L_{uy}^\T \textbf{X}^{12} - \textbf{X}^{21}L_{uy} - \textbf{X}_p^{22} + \alpha^2\textbf{X}_p^{11}-\alpha(L_{uy}^\T + L_{uy}) & -\textbf{X}^{21}L_{uw} + M_{zy}^\T \textbf{Y}^{21} + \alpha^2\textbf{X}_p^{11}-\alpha(L_{uy}^\T + L_{uw}) \\
-M_{zw}^\T\textbf{Y}^{22} & -L_{uw}^\T \textbf{X}^{12} + \textbf{Y}^{12}M_{zy} + \alpha^2\textbf{X}_p^{11}-\alpha(L_{uw}^\T + L_{uy}) & M_{zw}^\T \textbf{Y}^{21} +  \textbf{Y}^{12}M_{zw} + \textbf{Y}^{11} + \alpha^2\textbf{X}_p^{11}-\alpha(L_{uw}^\T + L_{uw}) 
} \normalsize >0, \alpha \in \R
\end{equation}
\end{figure*}

\begin{assumption}\label{As:NegativeDissipativity}
The given $\textbf{Y}$-EID specification for the networked system $\Sigma$ is such that $\textbf{Y}^{22}<0$. 
\end{assumption}

\begin{remark}\label{Rm:As:NegativeDissipativity}
Based on Rm. \ref{Rm:X-DissipativityVersions}, As. \ref{As:NegativeDissipativity} holds if the networked system $\Sigma$ is required to be: (1) L2G($\gamma$), or (2) OFP($\rho$) (or IF-OFP($\nu,\rho$)) with some $\rho>0$, i.e., $L_2$-stable or passive, respectively. Thus, as it is always desirable to make the networked system $\Sigma$ either $L_2$-stable or passive, As.  \ref{As:NegativeDissipativity} is mild.
\end{remark}

\begin{assumption}\label{As:PositiveDissipativity}
In the networked system $\Sigma$, each subsystem $\Sigma_i$ is $X_i$-EID with either $X_i^{11} > 0$ or $X_i^{11} < 0, \forall i\in\N_N$. 
\end{assumption}

\begin{remark}\label{Rm:As:PositiveDissipativity}
According to Rm. \ref{Rm:X-DissipativityVersions}, the above assumption directly holds if each subsystem $\Sigma_i,i\in\N_N$ is: (1) L2G($\gamma_i$) (as $X_i^{11} = \gamma_i^2 \I > 0$), or (2) IFP($\nu_i$) (or IF-OFP($\nu_i,\rho_i$)) with $\nu_i<0$ (i.e., non-passive), or (3) IFP($\nu_i$) (or IF-OFP($\nu_i,\rho_i$)) with $\nu_i>0$ (i.e., passive). Therefore, As.  \ref{As:PositiveDissipativity} is also mild. 
\end{remark}

\begin{proposition}\label{Pr:NSC2Synthesis}
Under Assumptions \ref{As:NegativeDissipativity} and \ref{As:PositiveDissipativity}, the networked system $\Sigma$ can be made $\textbf{Y}$-EID by synthesizing its interconnection matrix $M$  \eqref{Eq:NSC2Interconnection} using the LMI problem:
\begin{equation}\label{Eq:Pr:NSC2Synthesis}
    \begin{aligned}
    \mbox{Find: }& L_{uy}, L_{uw}, M_{zy}, M_{zw}, \{p_i: i\in\N_N\}\\
    \mbox{Sub. to: }& p_i > 0, \forall i\in\N_N, \mbox{ and }\\
    & \begin{cases} 
    \eqref{Eq:Pr:NSC2Synthesis2} \ \ \ \mbox{if }      &X_i^{11}>0, \forall i\in\N_N,\\ 
    \eqref{Eq:Pr:NSC2Synthesis2Alternative} \ \ \ \mbox{else if}  &X_i^{11}<0, \forall i\in\N_N, 
    \end{cases}
    \end{aligned}
\end{equation}
where $\textbf{X}^{12} \triangleq \diag((X_i^{11})^{-1}X_i^{12}:i\in\N_N)$, $\textbf{X}^{21} \triangleq (\textbf{X}^{12})^\T$ with 
$M_{uy} \triangleq (\textbf{X}_p^{11})^{-1} L_{uy}$ and $M_{uw} \triangleq  (\textbf{X}_p^{11})^{-1} L_{uw}$.
\end{proposition}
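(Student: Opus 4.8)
The plan is to convexify the bilinear matrix inequality supplied by Proposition~\ref{Pr:NSC2Dissipativity}. That proposition already reduces $\textbf{Y}$-EID of $\Sigma$ to the existence of $p_i\geq 0$ satisfying \eqref{Eq:Pr:NSC2Dissipativity1}, so the task is purely to rewrite \eqref{Eq:Pr:NSC2Dissipativity1} as an LMI in a suitable set of variables. First I would substitute the interconnection relations $u-u^*=M_{uy}(y-y^*)+M_{uw}(w-w^*)$ and $z-z^*=M_{zy}(y-y^*)+M_{zw}(w-w^*)$ into \eqref{Eq:Pr:NSC2Dissipativity1} and expand the left-hand side into a symmetric $2\times 2$ block matrix $N=[N_{kl}]_{k,l\in\N_2}$, the two blocks corresponding to the ``$y$-direction'' and the ``$w$-direction''; the condition becomes $N\geq 0$, which I would strengthen to the strict LMI $N>0$ (a mild and standard tightening). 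Inspecting the blocks, one sees that $N$ is affine in $M_{zy},M_{zw}$ and in $\{p_i\}$ except for the curvature terms $M_{uy}^\T\textbf{X}_p^{11}M_{uy}$, $M_{uw}^\T\textbf{X}_p^{11}M_{uw}$, their cross term, and $M_{zy}^\T\textbf{Y}^{22}M_{zy}$, $M_{zw}^\T\textbf{Y}^{22}M_{zw}$ and its cross term.

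Next I would linearize in the controller blocks by the change of variables $L_{uy}\triangleq\textbf{X}_p^{11}M_{uy}$ and $L_{uw}\triangleq\textbf{X}_p^{11}M_{uw}$. This is legitimate precisely because Assumption~\ref{As:PositiveDissipativity}, together with $p_i>0$, forces $\textbf{X}_p^{11}=\diag(p_iX_i^{11}:i\in\N_N)$ to be sign-definite, hence invertible. Two bookkeeping identities, $(\textbf{X}_p^{11})^{-1}\textbf{X}_p^{12}=\diag((X_i^{11})^{-1}X_i^{12}:i\in\N_N)=\textbf{X}^{12}$ and $\textbf{X}_p^{21}M_{uy}=\textbf{X}^{21}L_{uy}$ (with the $M_{uw}$ analogue), are what allow the $\textbf{X}_p^{12},\textbf{X}_p^{21}$ cross-terms to be re-expressed without $\{p_i\}$, which is exactly why $\textbf{X}^{12},\textbf{X}^{21}$ are defined as in the statement. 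After this substitution the only remaining non-convexities are $L_{uy}^\T(\textbf{X}_p^{11})^{-1}L_{uy}$ (and the $L_{uw}$ and cross versions) and $M_{zy}^\T(-\textbf{Y}^{22})M_{zy}$ (likewise), all of which are amenable to a Schur complement, provided the relevant matrices have the right sign.

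The argument then splits on the sign of $X_i^{11}$. If $X_i^{11}>0$ for all $i$, then $\textbf{X}_p^{11}>0$ and, by Assumption~\ref{As:NegativeDissipativity}, $-\textbf{Y}^{22}>0$; I would apply the Schur complement lemma in reverse, appending two leading block rows/columns carrying $\textbf{X}_p^{11}$ and $-\textbf{Y}^{22}$, which turns $N>0$ into exactly \eqref{Eq:Pr:NSC2Synthesis2}. Because the appended leading $2\times 2$ block is positive definite, \eqref{Eq:Pr:NSC2Synthesis2} is in fact equivalent to $N>0$, so its feasibility implies \eqref{Eq:Pr:NSC2Dissipativity1} and Proposition~\ref{Pr:NSC2Dissipativity} delivers $\textbf{Y}$-EID, with $M_{uy},M_{uw}$ recovered from $M_{uy}=(\textbf{X}_p^{11})^{-1}L_{uy}$ and $M_{uw}=(\textbf{X}_p^{11})^{-1}L_{uw}$. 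If instead $X_i^{11}<0$ for all $i$, then $\textbf{X}_p^{11}<0$ and the reverse-Schur step on the $\textbf{X}_p^{11}$-term would have the wrong sign; here I would only Schur out the $-\textbf{Y}^{22}$-term (producing the first block row/column of \eqref{Eq:Pr:NSC2Synthesis2Alternative}) and bound the offending $\textbf{X}_p^{11}$-block from below by completing the square: using $(M_{uy}+\alpha\I)^\T(-\textbf{X}_p^{11})(M_{uy}+\alpha\I)\geq 0$ and the $M_{uw}$ and cross analogues, with a free scalar $\alpha\in\R$, and re-expressing the lower bound in $L_{uy},L_{uw},\{p_i\}$ via $L_{uy}^\T=M_{uy}^\T\textbf{X}_p^{11}$, yields precisely the $\alpha$-dependent blocks $\alpha^2\textbf{X}_p^{11}-\alpha(L_{uy}^\T+L_{uy})$, etc.\ appearing in \eqref{Eq:Pr:NSC2Synthesis2Alternative}; feasibility of \eqref{Eq:Pr:NSC2Synthesis2Alternative} for some $\alpha$ is then sufficient (though no longer equivalent) for $N>0$, hence for $\textbf{Y}$-EID.

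The main obstacle I anticipate is this second case: once $\textbf{X}_p^{11}$ is negative definite one cannot keep the $(\textbf{X}_p^{11})^{-1}$ term, so the completion-of-squares relaxation with the extra parameter $\alpha$ is forced, and one must verify that it reproduces \emph{every} diagonal and off-diagonal $\alpha$-block of \eqref{Eq:Pr:NSC2Synthesis2Alternative} consistently; this implicitly also requires the subsystem input, output, and disturbance dimensions to be compatible so that $M_{uy}+\alpha\I$ and $L_{uy}^\T+L_{uw}$ are well defined. Everything else is careful but routine: the expansion of the quadratic form, tracking the cross-terms, and matching the two reverse-Schur expansions block-by-block against \eqref{Eq:Pr:NSC2Synthesis2} and \eqref{Eq:Pr:NSC2Synthesis2Alternative}. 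One small loose end is that Proposition~\ref{Pr:NSC2Dissipativity} only asks for $p_i\geq 0$, whereas the change of variables needs $p_i>0$; this is harmless, since restricting to strictly positive $p_i$ only shrinks the feasible set.
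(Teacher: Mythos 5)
Your proposal is correct and follows essentially the same route as the paper: restate the condition of Prop.~\ref{Pr:NSC2Dissipativity} as a quadratic matrix inequality with middle block $\diag(\textbf{X}_p^{11},-\textbf{Y}^{22})$, change variables to $L_{uy},L_{uw}$, and then split on the sign of $X_i^{11}$. The only difference is that the paper delegates the two key algebraic steps to Lemmas 1 and 2 of \cite{WelikalaP52022}, whereas you correctly reconstruct their content (the reverse Schur complement for the positive-definite case, and the completion-of-squares relaxation with the free scalar $\alpha$ — up to an immaterial sign convention on $\alpha$ — for the negative-definite case).
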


Finally, we point out that the proposed synthesis technique for the interconnection matrix $M$ given in Prop. \ref{Pr:NSC2Synthesis} can be used even when $M$ is partially known/fixed - as it will only reduce the number of variables in the LMI problem \eqref{Eq:Pr:NSC2Synthesis}. Note also that, both the proposed analysis technique given in Prop. \ref{Pr:NSC2Dissipativity} and the synthesis technique given in Prop. \ref{Pr:NSC2Synthesis} are independent of the equilibrium points of the considered networked system.

\subsection{Decentralization}

To decentrally evaluate different LMI problems arising related to networked systems (e.g., for topology synthesis:  \eqref{Eq:Pr:NSC2Synthesis}), in this section, we recall the concept of ``network matrices'' first introduced in \cite{WelikalaP32022} along with some of its useful properties.

We denote a generic directed network topology as $\mathcal{G}_n=(\mathcal{V},\mathcal{E})$ where $\mathcal{V} \triangleq \{\Sigma_i:i\in\N_n\}$ is the set of nodes (subsystems), $\mathcal{E} \subset \mathcal{V}\times \mathcal{V}$ is the set of edges (interconnections), and $n\in\N$. Corresponding to a network topology $\mathcal{G}_n$, a class of matrices named ``network matrices'' \cite{WelikalaP32022} is defined as follows.

\begin{definition}\cite{WelikalaP32022}\label{Def:NetworkMatrices}
	Corresponding to a network topology $\mathcal{G}_n=(\mathcal{V},\mathcal{E})$, any $n\times n$ block matrix $\Theta = \bm{\Theta_{ij}}_{i,j\in\N_n}$ is a \emph{network matrix} if: 
	(1) $\Theta_{ij}$ consists of information specific only to the subsystems $\Sigma_i$ and $\Sigma_j$, and 
	(2) $(\Sigma_i,\Sigma_j) \not\in \mathcal{E}$ and $(\Sigma_j,\Sigma_i)\not\in \mathcal{E}$ implies $\Theta_{ij}=\Theta_{ji}=\0$, for all $i,j\in\N_n$.
\end{definition}

To provide some examples, consider a network topology $\mathcal{G}_n=(\mathcal{V},\mathcal{E})$. First, note that, any $n \times n$ block matrix $\Theta=[\Theta_{ij}]_{i,j\in\N_n}$ where each $\Theta_{ij}, i,j\in\N_n$ is a coupling weight matrix specific only to the edge $(\Sigma_i,\Sigma_j)\in\mathcal{E}$, is a network matrix. Consequently, any corresponding adjacency-like matrix is also a network matrix. Moreover, any $n \times n$ block diagonal matrix $\Theta=\diag([\Theta_{ii}]_{i\in \N_n})$ where each $\Theta_{ii}, i\in\N_n$ is specific only to the subsystem $\Sigma_i\in\mathcal{V}$, is a network matrix. 

The following proposition, compiled from \cite{WelikalaP32022}, summarizes several useful properties of such network matrices.

\begin{proposition}
\label{Pr:NetworkMatrixProperties}
\cite{WelikalaP32022}
Given a network topology $\mathcal{G}_n$, if $\Theta$ and $\Phi$ are two corresponding network matrices, then: 
\newline
\textbf{1)}\ $\Theta^\T$ and $\alpha \Theta + \beta \Phi$ are network matrices for any $\alpha,\beta \in \R$;\newline 
\textbf{2)}\ $\Phi \Theta$ and $\Theta\Phi$ are network matrices whenever $\Phi$ is a block diagonal network matrix.\newline
Moreover, if $\Psi = [\Psi^{kl}]_{k,l\in\N_m}$ is a block matrix where each block $\Psi^{kl},k,l\in\N_m$ is a network matrix of $\mathcal{G}_n$, then: 
\newline
\textbf{3)}\ $\mbox{BEW}(\Psi)\triangleq [[\Psi^{kl}_{ij}]_{k,l\in\N_m}]_{i,j\in\N_n}$ is a network matrix;\newline
\textbf{4)}\ $\Psi >0 \iff \mbox{BEW}(\Psi)>0$.
\end{proposition}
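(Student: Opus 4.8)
The final statement to prove is Proposition~\ref{Pr:NetworkMatrixProperties}, consisting of four claims about network matrices associated with a fixed directed topology $\mathcal{G}_n$. I will prove each part in turn, working directly from Definition~\ref{Def:NetworkMatrices}, which requires two things of a network matrix $\Theta$: each block $\Theta_{ij}$ depends only on information local to $\Sigma_i$ and $\Sigma_j$, and $\Theta_{ij}=\Theta_{ji}=\0$ whenever neither $(\Sigma_i,\Sigma_j)$ nor $(\Sigma_j,\Sigma_i)$ is an edge.

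\textbf{Plan for Part 1.} For $\Theta^\T$, note that $(\Theta^\T)_{ij}=\Theta_{ji}^\T$, so the locality condition is inherited (transposing a block of $\Sigma_i,\Sigma_j$-information keeps it $\Sigma_i,\Sigma_j$-local) and the zero-pattern condition is inherited because the defining condition on index pair $\{i,j\}$ is symmetric in $i,j$. For $\alpha\Theta+\beta\Phi$, the $(i,j)$ block is $\alpha\Theta_{ij}+\beta\Phi_{ij}$; if the non-edge condition holds for $\{i,j\}$ then both $\Theta_{ij},\Phi_{ij}$ (and their transposes) vanish, hence so does the combination, and locality is clearly preserved under linear combination. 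This part is essentially bookkeeping.

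\textbf{Plan for Part 2.} Here $\Phi$ is block diagonal, so $(\Phi\Theta)_{ij}=\Phi_{ii}\Theta_{ij}$ and $(\Theta\Phi)_{ij}=\Theta_{ij}\Phi_{jj}$. Locality: $\Phi_{ii}$ is $\Sigma_i$-local and $\Theta_{ij}$ is $\Sigma_i,\Sigma_j$-local, so their product is $\Sigma_i,\Sigma_j$-local (similarly for the other order). Zero pattern: if $\{i,j\}$ is a non-edge pair then $\Theta_{ij}=\Theta_{ji}=\0$, whence $\Phi_{ii}\Theta_{ij}=\0$ and $\Phi_{jj}\Theta_{ji}=\0$; for $\Theta\Phi$ we get $\Theta_{ij}\Phi_{jj}=\0$ and $\Theta_{ji}\Phi_{ii}=\0$ likewise. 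The block-diagonality of $\Phi$ is what makes the products collapse to a single term, which is the crux; without it one would pick up cross terms $\Phi_{ik}\Theta_{kj}$ that need not vanish.

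\textbf{Plan for Parts 3 and 4, the main obstacle.} Part 3 is a matter of carefully tracking the index reshuffling in $\mathrm{BEW}(\Psi)$: the $(i,j)$ block of $\mathrm{BEW}(\Psi)$ is the $m\times m$ block matrix $[\Psi^{kl}_{ij}]_{k,l\in\N_m}$, which collects, across all $k,l$, the $(i,j)$ block of the network matrix $\Psi^{kl}$; each such entry is $\Sigma_i,\Sigma_j$-local by hypothesis, and if $\{i,j\}$ is a non-edge pair then $\Psi^{kl}_{ij}=\Psi^{kl}_{ji}=\0$ for every $k,l$, so the whole $(i,j)$ and $(j,i)$ blocks of $\mathrm{BEW}(\Psi)$ vanish. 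Part 4 is the one genuinely non-trivial claim: I would prove $\Psi>0\iff \mathrm{BEW}(\Psi)>0$ by exhibiting a permutation matrix $T$ (the one that reorders the coordinates from the ``$k$-then-$i$'' grouping of $\Psi$ to the ``$i$-then-$k$'' grouping of $\mathrm{BEW}(\Psi)$) such that $\mathrm{BEW}(\Psi)=T\Psi T^\T$; since a congruence by an orthogonal (permutation) matrix preserves positive definiteness in both directions, the equivalence follows. The main work — and the part most prone to error — is writing this permutation cleanly and verifying $\mathrm{BEW}(\Psi)=T\Psi T^\T$ entrywise; alternatively one can cite \cite{WelikalaP32022} for this identity, since the proposition is attributed there, and simply remark that BEW is a symmetric permutation of rows and columns. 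I expect to handle Parts 1--3 in a few lines each and devote the bulk of the argument to stating the permutation underlying Part 4.
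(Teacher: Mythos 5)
The paper gives no proof of Proposition~\ref{Pr:NetworkMatrixProperties}: it is imported verbatim from \cite{WelikalaP32022}, so there is no in-paper argument to compare against. Your proposal is correct and is the standard argument: Parts 1--3 are routine blockwise checks of the two defining conditions (with the collapse $(\Phi\Theta)_{ij}=\Phi_{ii}\Theta_{ij}$, valid only because $\Phi$ is block diagonal, doing the real work in Part 2), and for Part 4 the identity $\mbox{BEW}(\Psi)=T\Psi T^\T$ for the permutation matrix $T$ that exchanges the outer index $k$ with the inner index $i$ is exactly the right mechanism, since congruence by an orthogonal matrix both preserves and reflects positive definiteness.
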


The above proposition can be used to identify conditions under which certain matrices become network matrices. For example, if $A$ and $P$ are block network matrices and $P$ is also block diagonal, then: $A^\T P$, $P A$, $A^\T P + PA$ and $\mbox{BEW}(\scriptsize \bm{P & A^\T P\\ PA & P})$ are all network matrices. Moreover, it can be used to transform certain block matrix inequalities of interest into network matrix inequalities, for example, 
$$\scriptsize \bm{P & A^\T P\\ PA & P}>0 \iff \normalsize \mbox{BEW}(\scriptsize \bm{P & A^\T P\\ PA & P}) >0.$$

The following proposition (inspired by Sylvester's criterion \cite{Antsaklis2006}, taken from \cite{WelikalaP32022}) provides an approach to sequentially analyze/enforce a positive definiteness condition imposed on a symmetric block matrix.

\begin{proposition}
\label{Pr:MainProposition}
\cite{WelikalaP32022}
A symmetric $N \times N$ block matrix $W = [W_{ij}]_{i,j\in\N_N} > 0$ if and only if
\begin{equation}\label{Eq:Pr:MainProposition1}
    \tilde{W}_{ii} \triangleq W_{ii} - \tilde{W}_i \mathcal{D}_i \tilde{W}_i^\T > 0, \ \ \ \ \forall i\in\N_N,
\end{equation}
where $\tilde{W}_i \triangleq [\tilde{W}_{ij}]_{j\in\N_{i-1}} \triangleq W_i(\mathcal{D}_i\mathcal{A}_i^\T)^{-1}$, 
$W_i \triangleq  [W_{ij}]_{j\in\N_{i-1}}$,
$\mathcal{D}_i \triangleq \diag(\tilde{W}_{jj}^{-1}:j\in\N_{i-1})$, and $\mathcal{A}_i$ is the block lower-triangular matrix created from $[\tilde{W}_{kl}]_{k,l\in\N_{i-1}}$.
\end{proposition}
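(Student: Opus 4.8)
The plan is to prove the equivalence by induction on the block dimension $N$, mirroring the classical recursive Schur-complement characterization of positive definiteness (which underlies Sylvester's criterion). The base case $N=1$ is immediate: the sum defining $\tilde W_1$ is empty, so $\tilde W_{11} = W_{11}$, and $W = W_{11} > 0$ is exactly $\tilde W_{11} > 0$. For the inductive step, I would partition $W$ as a $2\times 2$ block matrix
\begin{equation*}
W = \bm{ W_{[N-1]} & C \\ C^\T & W_{NN} },
\end{equation*}
where $W_{[N-1]} \triangleq [W_{ij}]_{i,j\in\N_{N-1}}$ is the leading principal block and $C \triangleq [W_{iN}]_{i\in\N_{N-1}}$ is the last block column (above the diagonal). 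By the standard Schur complement lemma, $W > 0$ if and only if $W_{[N-1]} > 0$ \emph{and} the Schur complement $W_{NN} - C^\T W_{[N-1]}^{-1} C > 0$. The induction hypothesis applied to $W_{[N-1]}$ turns the first condition into ``$\tilde W_{ii} > 0$ for all $i \in \N_{N-1}$'', which is precisely the first $N-1$ of the claimed conditions, with the quantities $\tilde W_i, \mathcal{D}_i, \mathcal{A}_i$ being intrinsic to $W_{[N-1]}$ and hence unchanged. So the entire content of the inductive step is to show that, \emph{assuming} $W_{[N-1]} > 0$, the Schur complement $W_{NN} - C^\T W_{[N-1]}^{-1} C$ equals the expression $\tilde W_{NN} = W_{NN} - \tilde W_N \mathcal{D}_N \tilde W_N^\T$ in \eqref{Eq:Pr:MainProposition1}.

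To establish that identity I would unpack the definitions: $W_N = [W_{Nj}]_{j\in\N_{N-1}} = C^\T$, $\mathcal{A}_N$ is the block lower-triangular part of $[\tilde W_{kl}]_{k,l\in\N_{N-1}}$, $\mathcal{D}_N = \diag(\tilde W_{jj}^{-1} : j \in \N_{N-1})$, and $\tilde W_N = W_N(\mathcal{D}_N \mathcal{A}_N^\T)^{-1} = C^\T (\mathcal{D}_N \mathcal{A}_N^\T)^{-1}$. Substituting,
\begin{equation*}
\tilde W_N \mathcal{D}_N \tilde W_N^\T = C^\T (\mathcal{D}_N \mathcal{A}_N^\T)^{-1} \mathcal{D}_N (\mathcal{A}_N \mathcal{D}_N)^{-1} C = C^\T \mathcal{A}_N^{-\T} \mathcal{D}_N^{-1} \mathcal{A}_N^{-1} C,
\end{equation*}
so it suffices to prove $W_{[N-1]}^{-1} = \mathcal{A}_N^{-\T} \mathcal{D}_N^{-1} \mathcal{A}_N^{-1}$, i.e. that $W_{[N-1]} = \mathcal{A}_N \mathcal{D}_N^{-1} \mathcal{A}_N^\T$ is the (block) $LDL^\T$ factorization of $W_{[N-1]}$, where $\mathcal{A}_N$ is block lower-triangular with diagonal blocks $\tilde W_{jj}$ (note $\mathcal{A}_N \mathcal{D}_N^{-1}$ is unit-block-lower-triangular times the block-diagonal $\diag(\tilde W_{jj})$). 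This is where the recursion on the $\tilde W_{ij}$ does its work: I would verify it by a second (nested) induction, showing that the defining relation $\tilde W_i = W_i (\mathcal{D}_i \mathcal{A}_i^\T)^{-1}$ is exactly the forward-substitution formula that produces the strictly-lower-triangular blocks of the $LDL^\T$ factor, and that $\tilde W_{ii} = W_{ii} - \tilde W_i \mathcal{D}_i \tilde W_i^\T$ is the corresponding pivot (diagonal) block. Equivalently, one shows by induction that the $LDL^\T$ factorization of $[W_{kl}]_{k,l\in\N_i}$ has unit-lower-triangular factor with $(i,j)$ block $\tilde W_{ij}\tilde W_{jj}^{-1}$ and pivot blocks $\tilde W_{jj}$; citing \cite{WelikalaP32022} or \cite{Antsaklis2006}, this block $LDL^\T$ / Cholesky-type recursion is standard.

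The main obstacle is purely bookkeeping: matching the paper's compact index notation ($\mathcal{A}_i$ = ``block lower-triangular matrix created from $[\tilde W_{kl}]_{k,l\in\N_{i-1}}$'', the placement of $\mathcal{D}_i$ inside and outside inverses, row- versus column-block conventions) to the standard block $LDL^\T$ recursion, and making sure the nested induction's hypothesis is stated strongly enough (it must carry not just positivity of the pivots but the full factorization identity $[W_{kl}]_{k,l\in\N_i} = \mathcal{A}_{i+1}\mathcal{D}_{i+1}^{-1}\mathcal{A}_{i+1}^\T$) so that the Schur-complement step closes. Invertibility of $\mathcal{D}_i \mathcal{A}_i^\T$ — needed even to write down $\tilde W_i$ — follows because, under the running hypothesis, each pivot $\tilde W_{jj}$ is positive definite, hence $\mathcal{D}_i$ is invertible and $\mathcal{A}_i$ is block-triangular with invertible diagonal blocks. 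Once the factorization identity is in hand, the Schur-complement lemma delivers both directions of the stated equivalence simultaneously.
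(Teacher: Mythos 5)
The paper does not actually prove Proposition \ref{Pr:MainProposition}; it is imported verbatim from \cite{WelikalaP32022} with only the remark that it is ``inspired by Sylvester's criterion,'' so there is no in-paper proof to compare against. Your route --- induction on $N$ via the leading principal block $W_{[N-1]}$, the Schur complement lemma, and the identification of the recursion $\tilde{W}_i = W_i(\mathcal{D}_i\mathcal{A}_i^\T)^{-1}$, $\tilde{W}_{ii} = W_{ii}-\tilde{W}_i\mathcal{D}_i\tilde{W}_i^\T$ with the block $LDL^\T$ pivots --- is exactly the standard argument underlying the cited result, and the outline is sound: the base case, the inductive use of the Schur complement, and the reduction to the factorization identity all close correctly.

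One equation in the middle is wrong and should be fixed before this is written out in full: you correctly state that what is needed is $W_{[N-1]}^{-1} = \mathcal{A}_N^{-\T}\mathcal{D}_N^{-1}\mathcal{A}_N^{-1}$, but then restate this as ``i.e.\ $W_{[N-1]} = \mathcal{A}_N\mathcal{D}_N^{-1}\mathcal{A}_N^\T$,'' which is not the inverse of the preceding expression and is false (already for a single block it would give $W_{11}^3$ instead of $W_{11}$). The correct factorization is $W_{[N-1]} = \mathcal{A}_N\mathcal{D}_N\mathcal{A}_N^\T$: writing $\mathcal{A}_N = L\,\diag(\tilde{W}_{jj}:j\in\N_{N-1})$ with $L \triangleq \mathcal{A}_N\mathcal{D}_N$ unit block lower-triangular (so $L_{ij}=\tilde{W}_{ij}\tilde{W}_{jj}^{-1}$ for $i>j$), one gets $\mathcal{A}_N\mathcal{D}_N\mathcal{A}_N^\T = L\,\diag(\tilde{W}_{jj})\,L^\T$, the block $LDL^\T$ factorization, whose inverse is the expression you need. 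With that correction (and the parenthetical about $\mathcal{A}_N\mathcal{D}_N^{-1}$ adjusted accordingly), the nested induction you describe does verify the factorization, and the Schur-complement step delivers both directions of the equivalence as claimed.
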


\begin{remark}\label{Rm:Pr:MainProposition}
If $W$ is a symmetric block network matrix corresponding to some network topology $\mathcal{G}_N$, the above proposition can be used to analyze/enforce $W>0$ in a decentralized (as well as compositional) manner by sequentially analyzing/enforcing $\tilde{W}_{ii}>0$ at each subsystem $\Sigma_i, i\in\N_N$. Moreover, this decentralized matrix inequality $\tilde{W}_{ii}>0$ can be transformed into an LMI in $[W_{ij}]_{j\in\N_{i}}$ using the Schur complement theory. Therefore, such a decentralized analysis/enforcement can be executed efficiently using readily available convex optimization toolboxes \cite{Boyd1994}. Additional details on Prop. \ref{Pr:MainProposition} are omitted here but can be found in \cite{WelikalaP32022,Welikala2022Ax2}.
\end{remark}


\subsection{String Stability Analysis}

To conclude this section, here we recall two ``String Stability'' concepts that can be used to characterize how disturbances propagate over the networked systems. 

Similar to before (e.g., see \eqref{Eq:SubsystemDynamics}), consider a networked system $\Sigma$ comprised of $N$ subsystems $\{\Sigma_i: i\in\mathbb{N}_N\}$, where the dynamics of each subsystem $\Sigma_i, i\in\N_N$ are given by 
\begin{equation}\label{Eq:subsystem_i_networked_system}
    \begin{split}
        \dot{x}_i=f_i(x_i,\{x_j\}_{j\in\mathcal{E}_i},w_i),
    \end{split}
\end{equation}
where $x_i\in\mathbb{R}^{n_i}$ and $w_i\in\mathbb{R}^{r_i}$ are respectively the subsystem's state and disturbance. $\{x_j: j\in\mathcal{E}_i\}$ are the states of the neighboring subsystems of the subsystem $\Sigma_i$. Consequently, the dynamics of the networked system $\Sigma$ can be written as 
\begin{equation}\label{Eq:networked_system_platoon}
\dot{x} = f(x,w)    
\end{equation}
where $x\triangleq [x_i^\T]_{i\in\N_N}^\T$, $w\triangleq [w_i^\T]_{i\in\N_N}^\T$, $f\triangleq[f_i^\T]_{i\in\N_N}^\T:\mathbb{R}^{n}\times\mathbb{R}^{r}\rightarrow\mathbb{R}^{n}$, $n \triangleq \sum_{i\in\mathbb{N}_N} n_i$ and $r \triangleq \sum_{i\in\mathbb{N}_N} r_i$. The function $f$ is assumed to be such that $f(x^*,\0) = \0, \forall x^* \in \X \subset \R^n$ ($\X$ denotes a set of equilibrium states), and locally Lipschitz continuous around each equilibrium point $x^* \in \X$. 

To characterize the string stability properties of the networked system \eqref{Eq:networked_system_platoon}, we first recall the definition of \emph{$L_2$ weak string stability} initially proposed in \cite{ploeg2013lp}. 
\begin{definition}\cite{ploeg2013lp}\label{def:l2_weak_ss}
    The networked system \eqref{Eq:networked_system_platoon} around the equilibrium point $x^* \in \X$ is $L_2$ weakly string stable if there exist class $\mathcal{K}$ functions $\alpha_1$ and $\alpha_2$ such that, for any initial state $x(0)\in\mathbb{R}^n$ and $w(\cdot)\in\mathcal{L}_2^r$, the condition: 
\begin{equation}\label{Eq:l2_wss_condition}
    \|x(\cdot)-x^*\|\leq\alpha_1(\|w(\cdot)\|)+\alpha_2(|x(0)-x^*|)
\end{equation}
holds for any $N\in\mathbb{N}$.
\end{definition}

\begin{remark}
    The above defined $L_2$ weak string stability is called ``weak'' as it allows the norm 
    $\|x(\cdot)-x^*\|$ to increase locally with respect to $N$ (i.e., the total number of subsystems in the network) while guaranteeing the boundedness of its maximum.    
    In particular, such a maximum is bounded independently of $N$, as in such a case, the condition \eqref{Eq:l2_wss_condition} needs to hold for any $N\in\N$. 
    However, $L_2$ weak string stability does not require any bounds on the external disturbances and the initial state perturbations. 
\end{remark}

\section{Problem Formulation and Basic Setup}\label{Sec:Platooning_Problem_Formulation}
\subsection{Platooning Problem Formulation} 

Following \cite{Zihao2022b}, we consider the longitudinal dynamics of the $i$\tsup{th} vehicle $\Sigma_i, i \in \bar{\N}_{N} \triangleq \N_N \cup \{0\}$ in the platoon to be 
\begin{equation}\label{Eq:VehicleDynamics}
\Sigma_i :
\begin{cases}
    \dot{x}_i(t) = v_i(t) + d_{xi}(t),\\ 
    \dot{v}_i(t) = a_i(t) + d_{vi}(t),\\
    \dot{a}_i(t) = f_i(v_i(t),a_i(t)) + \frac{1}{m_i\tau_i}\bar{u}_i(t) + d_{ai}(t),
\end{cases}
\end{equation}
where $f_i \triangleq f_i(v_i(t),a_i(t))$ with 
\begin{equation*}
f_i \triangleq -\frac{1}{\tau_i} \left(a_i(t)+C_{ri} + \frac{\rho A_{fi} C_{di} v_i(t)}{2m_i}(v_i(t)+2\tau_ia_i(t)) \right).
\end{equation*}
Note that here we use $\Sigma_0$ to represent the leading vehicle (virtual) and $\{\Sigma_i: i\in\N_N\}$ to represent the following $N$ vehicles (controllable) in the platoon. In \eqref{Eq:VehicleDynamics}, the position, velocity and acceleration of the vehicle $\Sigma_i, i\in\bar{N}_N$ are denoted by $x_i(t), v_i(t)$, $a_i(t)\in\mathbb{R}$, and the corresponding disturbances are denoted by $d_{xi}(t), d_{vi}(t)$, $d_{ai}(t) \in \mathbb{R}$, respectively. Further, $\bar{u}_i(t)\in\R$ in \eqref{Eq:VehicleDynamics} is the control input to be designed for the follower vehicle $\Sigma_i, i\in\N_N$, and it is prespecified for the leading vehicle $\Sigma_0$. The remaining coefficients appearing in \eqref{Eq:VehicleDynamics} are as follows: $\rho$ is the air density, $m_i,\tau_i$ and $A_{fi}$ are the mass, the engine time constant and the effective frontal area of $\Sigma_i$, respectively, and $C_{di}$ and $C_{ri}$ are the coefficients of aerodynamic drag and rolling resistance of $\Sigma_i$, respectively.

\paragraph*{\textbf{Linearizing State Feedback Control}} 
Using the fact that $m_i,\tau_i \neq 0$ in \eqref{Eq:VehicleDynamics}, a state feedback linearizing control law for $\bar{u}_i(t)$ can be designed as  
\begin{equation}\label{Eq:LinearizingControlInput}
    \bar{u}_i(t) = m_i\tau_i \left(-f_i(v_i(t),a_i(t))+g_i(u_i(t))\right),
\end{equation}
where $g_i(u_i(t))$ is the virtual control input law that needs to be designed for the follower vehicles $\Sigma_i, i\in\N_N$. For notational convenience, we assume that \eqref{Eq:LinearizingControlInput} is also implemented at the leading vehicle $\Sigma_0$ and $u_0(t)$ is prespecified.

Consequently, under \eqref{Eq:LinearizingControlInput}, the longitudinal dynamics of the vehicle $\Sigma_i, i\in \bar{\N}_N$ take the triple integrator form:
\begin{equation}\label{Eq:VehicleDynamicsLinearized}
    \Sigma_i : 
    \begin{cases}
    \dot{x}_i(t) = v_i(t) + d_{xi}(t),\\ 
    \dot{v}_i(t) = a_i(t) + d_{vi}(t),\\
    \dot{a}_i(t) = g_i(u_i(t)) + d_{ai}(t).\\
    \end{cases}
\end{equation}

Note that the individual vehicle dynamics \eqref{Eq:VehicleDynamicsLinearized} (or \eqref{Eq:VehicleDynamics}) are independent of each other. However, they become dependent on each other through the used virtual feedback control law $g_i(u_i(t))$ - which is typically designed using the state of some corresponding \emph{error dynamics}. In this paper, we consider two different error dynamics formulations as described next. 

In this paper, our goal is to propose a co-design framework for the distributed controller and communication topology of a platoon, where each vehicle in the platoon is with the dynamics \eqref{Eq:VehicleDynamicsLinearized}. Besides, the string stability and the compositionality of the platoon are both enforced to be guaranteed.

\subsection{Platooning Error Dynamics}

\subsubsection{\textbf{Error Dynamics I}}
Based on the other vehicle locations $\{x_j(t):j\in\bar{\N}_N \backslash \{i\}\}$, here we define the \emph{location error} of the vehicle $\Sigma_i, i\in\N_N$ as
\begin{equation}\label{Eq:LocationError}
    \tilde{x}_i(t) \triangleq \sum_{j\in\bar{\N}_N} \bar{k}_{ij} \left( x_i(t)-x_j(t)-d_{ij} \right),
\end{equation}
where $\bar{k}_{ij}\in\R$ is a weighting coefficient and $d_{ij} \in \R$ is the desired separation between vehicles $\Sigma_i$ and $\Sigma_j$. Note that $\bar{k}_{ii} \triangleq 0$ and $d_{ii} \triangleq 0$ for all $i\in\N_N$. Note also that, if vehicle $\Sigma_i$ gets information from vehicle $\Sigma_j$ (through some communication medium), then, $\bar{k}_{ij}>0$, otherwise, $\bar{k}_{ij}=0$. On the other hand, as shown in Fig. \ref{Fig:PlatoonPlacement}, $d_{ij} \triangleq d_i-d_j$ with $d_0 \triangleq 0$ and $d_m \triangleq \sum_{k\in\N_m} L_k+\delta_{k}$ for $m=i,j$, where $L_k$ is the length of the vehicle $\Sigma_k$ and $\delta_{k}$ is the desired separation between vehicles $\Sigma_{k}$ and $\Sigma_{k-1}$. 

\begin{figure}[!b]
    \centering
    \includegraphics[width=3.4in]{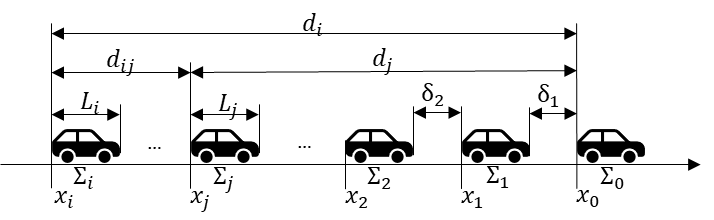}
    \caption{Vehicle placements in the platoon.}
    \label{Fig:PlatoonPlacement}
\end{figure}

Taking the derivative of  \eqref{Eq:LocationError} yields
\begin{equation}\label{Eq:LocationErrorDynamics}
\dot{\tilde{x}}_{i}(t) = \tilde{v}_{i}(t) + \bar{d}_{xi}(t),
\end{equation}
where $\bar{d}_{xi}(t)\triangleq \sum_{j\in\bar{\N}_N} \bar{k}_{ij} \left(d_{xi}(t)-d_{xj}(t)\right)$ is the disturbance associated with the location error dynamics and 
\begin{equation}\label{Eq:VelocityError}
    \tilde{v}_{i}(t)\triangleq \sum_{j\in\bar{\N}_N} \bar{k}_{ij} \left(v_i(t)-v_j(t)\right)
\end{equation} 
is the \emph{velocity error} of the vehicle $\Sigma_i, i\in\N_N$.

Next, taking the derivative of \eqref{Eq:VelocityError} yields
\begin{equation}\label{Eq:VelocityErrorDynamics0}
    \dot{\tilde{v}}_{i}(t) = \sum_{j\in\bar{\N}_N} \bar{k}_{ij} \left(a_i(t)-a_j(t)\right) + \bar{d}_{vi}(t),
\end{equation}
where $\bar{d}_{vi}(t)\triangleq \sum_{j\in\bar{\N}_N} \bar{k}_{ij} \left(d_{vi}(t)-d_{vj}(t)\right)$ is the disturbance associated with the velocity error dynamics. 
However, unlike in the previous step, here we do not define the first term in \eqref{Eq:VelocityErrorDynamics0} as the \emph{acceleration error} due to its interconnected and sensitive nature. Instead, we define the \emph{acceleration error} of the vehicle $\Sigma_i, i\in\N_N$ as
\begin{equation}\label{Eq:AccelerationError}
    \tilde{a}_{i}(t) \triangleq \left(a_i(t) - a_0(t)\right).
\end{equation}
Therefore, \eqref{Eq:VelocityErrorDynamics0} can be re-stated as
\begin{equation}
    \dot{\tilde{v}}_{i}(t) = \sum_{j\in \N_N}k_{ij}\tilde{a}_{j}(t) + \bar{d}_{vi}(t),
\end{equation}
where 
\begin{equation}\label{Eq:bTocCoefficients}
    k_{ii} \triangleq \bar{k}_{i0}+\sum_{j\in \N_N} \bar{k}_{ij} \ \mbox{ and } \ 
    k_{ij} \triangleq -\bar{k}_{ij}, \forall j\in \N_N \backslash \{i\}.
\end{equation} 
It is worth noting that the coefficients $\{k_{ij}:j\in\N_N\}$ can uniquely determine the coefficients $\{\bar{k}_{ij}:j\in\bar{\N}_N\}$ as $\bar{k}_{ii}\triangleq 0$.

Finally, taking the derivative of \eqref{Eq:AccelerationError} yields
\begin{equation}
    \dot{\tilde{a}}_{i}(t) = g_i(u_i(t)) + \bar{d}_{ai}(t)
\end{equation}
where $\bar{d}_{ai}(t)\triangleq - g_0(u_0(t)) + \left(d_{ai}(t)-d_{a0}(t)\right)$ is the disturbance associated with the acceleration error dynamics. Note that, here we have considered the leader's virtual control input law $g_0(u_0(t))$ as a disturbance, because, typically, $g_0(u_0(t))=u_0=0$ holds for all $t\geq 0$ except for a few time instants where the leader changes its acceleration to a different (still constant) level, e.g., see \cite{Zihao2022Ax}.

In all, the error dynamics of the vehicle $\Sigma_i, i\in\N_N$ are:
\begin{equation}\label{Eq:VehicleErrorDynamics}
\begin{cases}
\dot{\tilde{x}}_{i}(t) =&\ \tilde{v}_{i}(t) + \bar{d}_{xi}(t),\\
\dot{\tilde{v}}_{i}(t) =&\ \sum_{j\in \N_N}k_{ij}\tilde{a}_{j}(t) + \bar{d}_{vi}(t),\\
\dot{\tilde{a}}_{i}(t) =&\ g_i(u_i(t)) + \bar{d}_{ai}(t).
\end{cases}
\end{equation}
We select the virtual control input law $g_i(u_i(t))$ in the form
\begin{equation}\label{Eq:VirtualControlInputLawForm}
    g_i(u_i(t)) \triangleq \bar{L}_{ii}e_i(t) + u_i(t),
\end{equation}
where $\bar{L}_{ii}\in\R^{1\times 3}$ is a \textbf{local controller gain} and $e_i(t) \triangleq \bm{\tilde{x}_{i}(t) & \tilde{v}_{i}(t) & \tilde{a}_{i}(t)}^\T$. Now, by defining  
\begin{equation}\label{Eq:VehicleErrorDynamicsAMat1}
    A \triangleq \bm{0&1&0\\0&0&1\\0&0&0}, \quad B \triangleq \bm{0 \\ 0 \\ 1},
\end{equation}
we can restate \eqref{Eq:VehicleErrorDynamics} (under the choice \eqref{Eq:VirtualControlInputLawForm}) as 
\begin{equation}\label{Eq:VehicleErrorDynamicsVectored}
    \dot{e}_i(t) = (A+B\bar{L}_{ii})e_i(t) + B u_i(t)
    +\sum_{j\in\N_N} \bar{K}_{ij} e_{j}(t) + \bar{d}_i^{(1)}(t),
\end{equation}
where (recall: $\mathsf{e}_{ij} \triangleq \mb{1}_{\{i=j\}}$)
\begin{equation}     
    \bar{K}_{ij} \triangleq \bm{0&0&0\\0&0&k_{ij}-\mathsf{e}_{ij}\\0&0&0},
    \ \ \ \ 
    \bar{d}_i^{(1)}(t) \triangleq \bm{\bar{d}_{xi}(t) \\ \bar{d}_{vi}(t) \\ \bar{d}_{ai}(t)}.    
\end{equation}

\paragraph*{\textbf{Feedback Control}} 
To control the error dynamics \eqref{Eq:VehicleErrorDynamicsVectored}, we select the virtual control input $u_i(t)$ as
\begin{equation}\label{Eq:PlatoonFeedbackControl}
    u_i(t) \triangleq L_{ii} e_i(t), 
\end{equation}
where $L_{ii} \triangleq \bm{l_{ii}^x & l_{ii}^v & l_{ii}^a} \in \R^{1\times 3}$ is a \textbf{global controller gain} (besides $\{k_{ij}:j\in\N_N\}$, see also \eqref{Eq:VirtualControlInputLawForm}). Now, the closed-loop error dynamics of the vehicle $\Sigma_i,i\in\N_N$ take the form
\begin{equation}\label{Eq:PlatoonClosedLoop1}
\begin{aligned}
    \tilde{\Sigma}_{i}: \quad \begin{cases} \dot{e}_i(t) = (A + B\bar{L}_{ii})e_i(t) + \eta_i(t),\end{cases}
\end{aligned}
\end{equation}
where 
\begin{equation}\label{Eq:PlatoonClosedLoop1Eta}
    \begin{aligned}
        \eta_i(t) \triangleq&\ \sum_{j\in\N_N} K_{ij} e_{j}(t) + w_i(t),
    \end{aligned}
\end{equation}
with 
\begin{equation}\label{Eq:PlatoonClosedLoopK1}
K_{ij} \triangleq \bm{0&0&0\\0&0&k_{ij}-\mathsf{e}_{ij}\\l_{ii}^x \mathsf{e}_{ij} & l_{ii}^v \mathsf{e}_{ij} & l_{ii}^a \mathsf{e}_{ij}}, 
\ \ \ \ 
w_i(t)\triangleq \bar{d}_i^{(1)}(t).
\end{equation}

\subsubsection{\textbf{Error Dynamics II}}\label{sec:ErrorDynamicsII}
Compared to \eqref{Eq:LocationError}, here we define the \emph{location error} of the vehicle $\Sigma_i,i\in\N_N$ as
\begin{equation}\label{Eq:LocationError2}
    \tilde{x}_i(t) \triangleq  x_i(t) - (x_0(t)-d_{i0}),
\end{equation}
where $d_{i0}$ is the desired separation between vehicles $\Sigma_i$ and $\Sigma_0$ (note that we use the same notation as before).  

Taking the derivative of \eqref{Eq:LocationError2} yields
\begin{equation}\label{Eq:LocationErrorDynamics2}
    \dot{\tilde{x}}(t) = \tilde{v}_i(t) + d_{xi}(t),
\end{equation}
where 
\begin{equation}\label{Eq:VelocityError2}
    \tilde{v}_i(t) \triangleq v_i(t) - v_0(t) - d_{x0}(t),   
\end{equation}
is the \emph{velocity error} of the vehicle $\Sigma_i, i\in\N_N$. Note that $d_{xi}(t)$ and $d_{x0}(t)$ are directly from \eqref{Eq:VehicleDynamicsLinearized} (or \eqref{Eq:VehicleDynamics}). 

Next, taking the derivative of \eqref{Eq:VelocityError2} yields
\begin{equation}\label{Eq:VelocityErrorDynamics2}
    \dot{\tilde{v}}_i(t) = \tilde{a}(t) + d_{vi}(t),
\end{equation}
where 
\begin{equation}\label{Eq:AccelerationError2}
    \tilde{a}_i(t) \triangleq a_i(t) - a_0(t) - d_{v0}(t) - \dot{d}_{x0}(t),
\end{equation}
is the \emph{acceleration error} of the vehicle $\Sigma_i, i\in\N_N$.  

Finally, taking the derivative of \eqref{Eq:AccelerationError2} yields
\begin{equation}\label{Eq:AccelerationErrorDynamics2}
    \dot{\tilde{a}}_i(t) = g_i(u_i(t)) + d_{ai}(t) - \tilde{u}_0(t),
\end{equation}
where 
\begin{equation}
    \tilde{u}_0(t) \triangleq g_0(u_0(t)) + d_{a0}(t) + \dot{d}_{v0}(t) + \ddot{d}_{x0}(t),
\end{equation}
can be thought of as a disturbance induced by the leader's motion. Note that, if we constrained the leader's motion by assuming it is noiseless and $\dot{a}_0(t)=0, \forall t\in\R_{\geq 0}$, then, $\tilde{u}_0(t)=0, \forall t\in\R_{\geq 0}$.

In all, the error dynamics of the vehicle $\Sigma_i,i\in\N_N$ are:
\begin{equation}\label{Eq:VehicleErrorDynamics2} 
    \begin{cases}
    \dot{\tilde{x}}(t) = \tilde{v}_i(t) + d_{xi}(t), \\
    \dot{\tilde{v}}_i(t) = \tilde{a}_i(t) + d_{vi}(t),\\
    \dot{\tilde{a}}_i(t) = g_i(u_i(t)) + d_{ai}(t) - \tilde{u}_0(t).
    \end{cases}
\end{equation}

Now, using the same definitions in \eqref{Eq:VirtualControlInputLawForm} (that includes a \textbf{local controller gain} $\bar{L}_{ii}$) and \eqref{Eq:VehicleErrorDynamicsAMat1}, we can restate \eqref{Eq:VehicleErrorDynamics2} as 
\begin{equation}\label{Eq:VehicleErrorDynamicsVectored2}
    \dot{e}_i(t) = (A+B\bar{L}_{ii})e_i(t) + Bu_i(t) + \bar{d}_i^{(2)}(t),
\end{equation}
where $\bar{d}_i^{(2)}(t) \triangleq \bm{d_{xi}(t)&d_{vi}(t)&d_{ai}(t)-\tilde{u}_0(t)}^\T$.
Note that, unlike in \eqref{Eq:VehicleErrorDynamicsVectored}, the error dynamics obtained in \eqref{Eq:VehicleErrorDynamicsVectored2} are not coupled (with those of other follower vehicles). This is due to the decoupled location \eqref{Eq:LocationError2}, velocity \eqref{Eq:VelocityError2}, and acceleration \eqref{Eq:AccelerationError2} errors considered in the derivation of \eqref{Eq:VehicleErrorDynamicsVectored2}.

\paragraph*{\textbf{Feedback Control}} 
To control the error dynamics \eqref{Eq:VehicleErrorDynamicsVectored2}, instead \eqref{Eq:PlatoonFeedbackControl}, here we select the virtual control input $u_i(t)$ as  
\begin{equation}
    \label{Eq:PlatoonFeedbackControl2}
    u_i(t) = L_{ii} e_i(t) + \sum_{j\in\N_N\backslash\{i\}} L_{ij} (e_i(t)-e_j(t)), 
\end{equation}
where $L_{ij} \triangleq \bm{l_{ij}^x & l_{ij}^v & l_{ij}^a}\in\R^{1\times 3}, \forall j\in \N_N$ are \textbf{global controller gains}. For notational convenience, let us restate \eqref{Eq:PlatoonFeedbackControl2} as 
\begin{equation}
    u_i(t) = \sum_{j\in\N_N}\bar{K}_{ij} e_j(t)
\end{equation}
where $\bar{K}_{ij} \triangleq -L_{ij}, \forall j\neq i$ and $\bar{K}_{ii} \triangleq L_{ii} + \sum_{j\in\N_N\backslash\{i\}}L_{ij}$. 
Finally, by defining $\bm{k_{ij}^x&k_{ij}^v&k_{ij}^a} \triangleq \bar{K}_{ij}, \forall j\in\N_N$, the closed-loop error dynamics of the vehicle $\Sigma_i,i\in\N_N$ can be shown to have the same form as in \eqref{Eq:PlatoonClosedLoop1}-\eqref{Eq:PlatoonClosedLoop1Eta}, but with 
\begin{equation}\label{Eq:PlatoonClosedLoopK2}
K_{ij} \triangleq \bm{0&0&0\\0&0&0\\k_{ij}^x&k_{ij}^v&k_{ij}^a},
\quad
w_i(t) \triangleq \bar{d}_i^{(2)}(t),  
\end{equation}
as opposed to the choices saw in \eqref{Eq:PlatoonClosedLoopK1}.

\begin{remark}
For a fully connected vehicular platoon, the error dynamics formulations I and II presented above respectively have $N^2+6$ and $3N^2+3$ scalar controller gain parameters to be synthesized. This is due to their respective difference in \eqref{Eq:PlatoonClosedLoopK1} and \eqref{Eq:PlatoonClosedLoopK2} (despite sharing the same closed-loop error dynamics \eqref{Eq:PlatoonClosedLoop1}-\eqref{Eq:PlatoonClosedLoop1Eta}). As a consequence, we can expect the latter approach to be less conservative while the prior approach is more computationally inexpensive to synthesize. 
\end{remark}

\subsection{Networked System View}

Irrespective of the used error dynamics formulation, both resulting closed-loop error dynamics share a common format \eqref{Eq:PlatoonClosedLoop1}-\eqref{Eq:PlatoonClosedLoop1Eta} that can be modeled as a networked system as shown in Fig. \ref{Fig:PlatoonProblem} (analogous to Fig. \ref{Fig:Interconnection2}). 

In particular, here, each subsystem $\tilde{\Sigma}_i, i\in\N_N$ in the networked system (denoted $\tilde{\Sigma}$) is an LTI system $\tilde{\Sigma}_i:\eta_i\rightarrow e_i$ as given in \eqref{Eq:PlatoonClosedLoop1}-\eqref{Eq:PlatoonClosedLoop1Eta}. Let us collectively denote the subsystem inputs as $\eta(t)\triangleq [\eta_i^\T(t)]_{i\in\N_N}$ and subsystem outputs as $e(t) \triangleq [e_i^\T(t)]_{i\in\N_N}$. In addition, the overall networked system $\tilde{\Sigma}$ also include a disturbance input $w(t) \triangleq [w_i^\T(t)]_{i\in\N_N}$ and a performance output $z(t) = [z_i^\T(t)]_{i\in\N_N}$. These four components/ports are interconnected according to the relationship:
\begin{equation}\label{Eq:PlatoonInterconnection}
    \bm{\eta\\z} = \bm{M_{\eta e} & M_{\eta w}\\M_{ze} & M_{zw}}\bm{e\\w} \equiv M\bm{e\\w},
\end{equation}
where $M$ is the interconnection matrix, $M_{\eta e} \triangleq [K_{ij}]_{i,j\in\N_N}$ (from \eqref{Eq:PlatoonClosedLoopK1} or \eqref{Eq:PlatoonClosedLoopK2}), $M_{\eta w} \triangleq \I$, $M_{ze} \triangleq \I$, and $M_{zw} \triangleq \0$.  

Based on this networked system view of the closed-loop error dynamics of the platoon, we can now apply Prop. \ref{Pr:NSC2Synthesis} to synthesize a suitable $M_{\eta e} \triangleq [K_{ij}]_{i,j\in\N_N}$ while constraining each $K_{ij}, i,j\in\N_N$ to be of the form \eqref{Eq:PlatoonClosedLoopK1} or \eqref{Eq:PlatoonClosedLoopK2} (depending on the used error dynamics formulation). Consequently, it can be used to obtain the corresponding controller gains required to evaluate the virtual control inputs \eqref{Eq:PlatoonFeedbackControl} or \eqref{Eq:PlatoonFeedbackControl2}. Finally, we conclude this section by pointing out that synthesizing $M_{\eta e}$ will not only reveal the desired individual controllers but also a preferable communication topology for the platoon. 

\vspace{-3mm}
\begin{figure}[!h]
    \centering
    \includegraphics[width=3in]{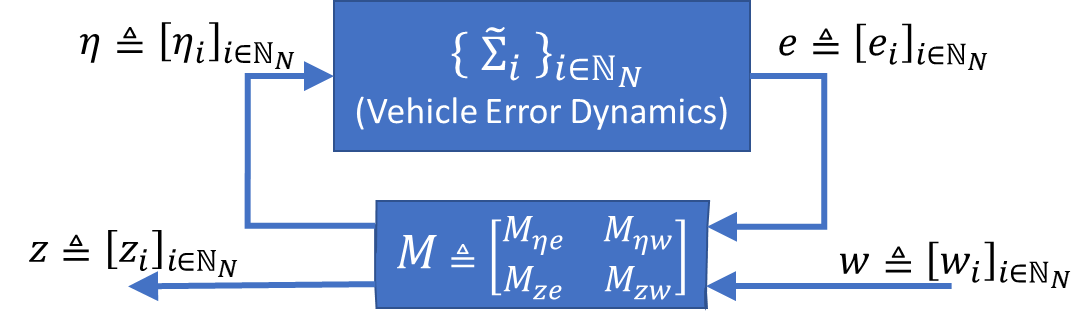}
    \caption{Platoon error dynamics as a networked system $\tilde{\Sigma}$.}
    \label{Fig:PlatoonProblem}
\end{figure}

\section{Centralized Design of the Platoon}\label{Sec:CentralizedCoDesign}

In this section, we discuss centralized LMI-based techniques to design the platoon, i.e., to obtain the local controller gains, the global controller gains and the communication topology.

As detailed earlier, since we view the overall closed-loop error dynamics of the platoon as a networked system $\tilde{\Sigma}$ (see Fig. \ref{Fig:PlatoonProblem}) and intend to apply Prop. \ref{Pr:NSC2Synthesis} for the global controller and topology design, we require the dissipativity properties of the individual subsystems, i.e., of the individual vehicle closed-loop error dynamics $\tilde{\Sigma}_i, i\in\N_N$ (see \eqref{Eq:PlatoonClosedLoop1}). However, based on \eqref{Eq:PlatoonClosedLoop1} and Co. \ref{Co:LTISystemXDisspativation}, it is clear that such subsystem dissipativity properties will be heavily dependent upon the used local controllers $\bar{L}_{ii}, i\in\N_N$ (see \eqref{Eq:VirtualControlInputLawForm}). Consequently, designing local controllers and global controllers (along with the topology) are two highly coupled problems.

However, solving such two coupled problems together as a single combined problem is not practical as such a combined problem will be a significantly larger, non-decentralizable and non-linear matrix inequality problem - as opposed to being a smaller, decentralizable and linear matrix inequality problem (e.g., like the problems in \eqref{Eq:Co:LTISystemXDisspativation2} and \eqref{Eq:Pr:NSC2Dissipativity1}).

On the other hand, we also cannot solve these two coupled problems as two entirely independent/decoupled problems. This is mainly because subsystem dissipativity properties enforced by an independent local controller design may not be sufficient/suitable for obtaing a feasible global controller design for the networked system. Conversely, subsystem dissipativity properties required for obtaining a feasible global controller design for the networked system may not be achievable by an independent local controller design. Moreover, the optimality of the overall design is also a concern if these two coupled problems are to be treated as two entirely independent/decoupled problem. 

To address this challenge, in this paper, we consider these two problems as two ``loosely coupled'' problems. In particular, we include several additional parameters and conditions in the local controller design so as to ensure that the resulting subsystem dissipativity properties are feasible as well as effective to be used in the global controller design. Moreover, the proposed loosely coupled approach to design local and global controllers is also decentralizable due to its emphasis on strengthening the local controller design at subsystems. 

This section is arranged as follows. We first present the global controller design problem for the networked system. Next, we identify several necessary conditions on the subsystem dissipativity properties so as to achieve a feasible and effective global controller design. Subsequently, we present the (strengthened) local controller design problem that enforces the identified necessary conditions. Finally, we summarize the overall local and global controller design process.

\subsection{Global Controller and Topology Synthesis}

Assume each subsystem $\tilde{\Sigma}_i,i\in\N_N$ \eqref{Eq:PlatoonClosedLoop1} to be $X_i$-EID with 
\begin{equation}\label{Eq:SubsystemDissipativity}
    X_i = \bm{X_i^{11} & X_i^{12} \\
    X_i^{21} & X_i^{22}} \triangleq \bm{-\nu_i\I & \frac{1}{2}\I \\ \frac{1}{2} \I & -\rho_i\I},
\end{equation}
i.e., IF-OFP($\nu_i$,$\rho_i$) (see Rm. \ref{Rm:X-DissipativityVersions}), where passivity indices $\nu_i$ and $\rho_i$ are fixed and known (we will relax this assumption later on). These subsystem EID properties will be used to apply Prop. \ref{Pr:NSC2Synthesis} to synthesize the interconnection matrix $M$ \eqref{Eq:PlatoonInterconnection} (particularly, its block $M_{\eta e}=[K_{ij}]_{i,j\in\N_N}$). Upon the synthesis of matrices $[K_{ij}]_{i,j\in\N_N}$, we can uniquely determine the controller gains $\{\bar{k}_{ij}:i,j\in\N_N\}$ \eqref{Eq:LocationError} or $\{L_{ij}:i,j\in\N_N\}$ \eqref{Eq:PlatoonFeedbackControl2}, depending on the used error dynamics formulation. Moreover, it will reveal the desired communication topology for the platoon. 

As the main specification for this controller and topology synthesis, we require the overall platoon error dynamics (i.e., the networked system $\tilde{\Sigma}$ in Fig. \ref{Fig:PlatoonProblem}) to be finite-gain $L_2$-stable with some $L_2$-gain $\gamma$ from its disturbance input $w(t)$ to performance output $z(t)$. Particularly, letting $\tilde{\gamma} \triangleq \gamma^2$ we require $0<\tilde{\gamma}<\bar{\gamma}$ where $\bar{\gamma}\in\R_{>0}$ is a pre-specified constant. Based on Rm. \ref{Rm:X-DissipativityVersions}, this specification is akin to requiring the networked system $\tilde{\Sigma}$ to be $\textbf{Y}$-dissipative with 
\begin{equation}\label{Eq:NetworkedSystemDissipativity}
\textbf{Y}=\bm{\gamma^2 \I & \0 \\ \0 & -\I}
=\bm{\tilde{\gamma} \I & \0 \\ \0 & -\I}
\end{equation}
and $0 < \tilde{\gamma} < \bar{\gamma}$. Note also that we can encode the hard constraint $\tilde{\gamma} < \bar{\gamma}$ as a soft constraint by including the $\tilde{\gamma}$ term in the objective function to be optimized in this synthesis.

Apart from the said $L_2$-gain $\gamma$ related term $\tilde{\gamma}$, in this objective function, we also propose to include a term analogous to $\sum_{i,j\in\N_N} c_{ij}\Vert K_{ij}\Vert_1$ where $C=[c_{ij}]_{i,j\in\N_N}\in\R^{N\times N}$ is a prespecified cost matrix corresponding to the inter-vehicle communication topology in the platoon. For example, if $N=4$ and the platoon requires a uniform separation in distance among the vehicles, to penalize inter-vehicle communications based on the length of each communication channel's length, one can use the cost matrix:
$$
C \triangleq \bm{0&1&2&3\\1&0&1&2\\2&1&0&1\\3&2&1&0}.
$$
It is worth noting that here we use an objective function containing 1-norms of the design variables (i.e., $\{\Vert K_{ij}\Vert_1:i,j\in\N_N\}$ as they are well-known to induce sparsity in the solution \cite{Sun2020,Tibshirani1996} -  leading to a sparse communication topology solution that is easy to implement and maintain.  

Taking these concerns into account, the following theorem formulates the centralized controller and topology synthesis problem as an LMI problem that can be solved using standard convex optimization toolboxes \cite{Boyd1994,Lofberg2004}.

\begin{theorem}\label{Th:CentralizedTopologyDesign}
The closed-loop error dynamics of the platoon (shown in Fig. \ref{Fig:PlatoonProblem}) can be made finite-gain $L_2$-stable with some $L_2$-gain $\gamma$ (where $\tilde{\gamma} \triangleq \gamma^2 < \bar{\gamma}$) from disturbance input $w(t)$ to performance output $z(t)$, by using the interconnection matrix block $M_{\eta e}=[K_{ij}]_{i,j\in\N_N}$ \eqref{Eq:PlatoonInterconnection} synthesized via solving the LMI problem:
\begin{subequations}\label{Eq:Th:CentralizedTopologyDesign}
\begin{equation}
\begin{aligned}
\min_{Q,\gamma,\{p_i: i\in\N_N\}}& \sum_{i,j\in\N_N} c_{ij} \Vert Q_{ij} \Vert_1 + c_0 \tilde{\gamma}, \\
\mbox{Sub. to: }& p_i > 0, \forall i\in\N_N,\ \  0 < \tilde{\gamma} < \bar{\gamma} \ \ \mbox{ and }
\end{aligned}
\end{equation}
\begin{equation}\label{Eq:Th:CentralizedTopologyDesignMain}
\scriptsize \bm{
\textbf{X}_p^{11} & \0 & Q & \textbf{X}_p^{11} \\
\0 & \I & \I & \0\\ 
Q^\T & \I & -Q^\T \textbf{X}^{12}-\textbf{X}^{21}Q-\textbf{X}_p^{22} & -\textbf{X}^{21}\textbf{X}_p^{11} \\
\textbf{X}_p^{11} & \0 & -\textbf{X}_p^{11} \textbf{X}^{12}&  \tilde{\gamma} \I
} \normalsize >0
\end{equation}
\end{subequations}
where $c_0>0$ is a prespecified constant, $Q\triangleq[Q_{ij}]_{i,j\in\N_N}$ shares the same structure as $M_{\eta e}$, $\textbf{X}^{12} \triangleq \diag(-\frac{1}{2\nu_i}\I:i\in\N_N)$, $\textbf{X}^{21} \triangleq (\textbf{X}^{12})^\T$,
$\textbf{X}_p^{11} \triangleq \diag(-p_i\nu_i\I:i\in\N_N)$, 
$\textbf{X}_p^{22} \triangleq \diag(-p_i\rho_i\I:i\in\N_N)$, and 
$M_{\eta e} \triangleq (\textbf{X}_p^{11})^{-1} Q$.
\end{theorem}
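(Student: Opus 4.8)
The plan is to obtain Theorem~\ref{Th:CentralizedTopologyDesign} as a direct specialization of the general topology-synthesis result in Proposition~\ref{Pr:NSC2Synthesis}, applied to the networked system $\tilde{\Sigma}$ (Fig.~\ref{Fig:PlatoonProblem}) that models the closed-loop platoon error dynamics \eqref{Eq:PlatoonClosedLoop1}--\eqref{Eq:PlatoonClosedLoop1Eta}. First I would recall that the subsystems $\tilde{\Sigma}_i$ are assumed $X_i$-EID with the IF-OFP form \eqref{Eq:SubsystemDissipativity}, that the interconnection matrix of $\tilde{\Sigma}$ in \eqref{Eq:PlatoonInterconnection} has the \emph{known} blocks $M_{\eta w}=\I$, $M_{ze}=\I$, $M_{zw}=\0$ and the single \emph{unknown} block $M_{\eta e}=[K_{ij}]_{i,j\in\N_N}$, and that — by case (3) of Remark~\ref{Rm:X-DissipativityVersions} — requiring the closed loop to be finite-gain $L_2$-stable with gain $\gamma$ from $w$ to $z$ is precisely requiring $\tilde{\Sigma}$ to be $\textbf{Y}$-EID with $\textbf{Y}=\diag(\tilde{\gamma}\I,-\I)$ as in \eqref{Eq:NetworkedSystemDissipativity}. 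Thus it suffices to invoke Proposition~\ref{Pr:NSC2Synthesis} with this $\textbf{Y}$ and these partially fixed $M$-blocks.

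Next I would check the hypotheses of Proposition~\ref{Pr:NSC2Synthesis}. Assumption~\ref{As:NegativeDissipativity} holds since $\textbf{Y}^{22}=-\I<0$. For Assumption~\ref{As:PositiveDissipativity} I would work in the branch $\nu_i<0$ for all $i$ (the vehicle error subsystems are non-passive, which is natural given the triple-integrator structure $A+B\bar L_{ii}$), so that $X_i^{11}=-\nu_i\I>0$ and the governing LMI of Proposition~\ref{Pr:NSC2Synthesis} is \eqref{Eq:Pr:NSC2Synthesis2}. The core step is then to substitute the problem data into \eqref{Eq:Pr:NSC2Synthesis2}: set $\textbf{Y}^{11}=\tilde{\gamma}\I$, $\textbf{Y}^{12}=\textbf{Y}^{21}=\0$, $\textbf{Y}^{22}=-\I$; set $M_{zy}=M_{ze}=\I$, $M_{zw}=\0$; and, since $M_{uw}$ is required to equal the fixed block $M_{\eta w}=\I$ while Proposition~\ref{Pr:NSC2Synthesis} relates them through $M_{uw}=(\textbf{X}_p^{11})^{-1}L_{uw}$, fix $L_{uw}=\textbf{X}_p^{11}$. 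With $L_{uy}$ renamed $Q$ and the block definitions $\textbf{X}^{12}=\diag(-\tfrac{1}{2\nu_i}\I)$, $\textbf{X}_p^{11}=\diag(-p_i\nu_i\I)$, $\textbf{X}_p^{22}=\diag(-p_i\rho_i\I)$ read off from \eqref{Eq:SubsystemDissipativity}, a block-by-block bookkeeping turns \eqref{Eq:Pr:NSC2Synthesis2} into exactly \eqref{Eq:Th:CentralizedTopologyDesignMain}, and the relation $M_{uy}=(\textbf{X}_p^{11})^{-1}L_{uy}$ becomes $M_{\eta e}=(\textbf{X}_p^{11})^{-1}Q$.

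It then remains to account for the structural constraint and the objective. Because $\textbf{X}_p^{11}$ is block diagonal, $M_{\eta e}=(\textbf{X}_p^{11})^{-1}Q$ is obtained from $Q$ by rescaling each block row $Q_{ij}$ by the scalar $(-p_i\nu_i)^{-1}$; hence constraining $Q=[Q_{ij}]_{i,j\in\N_N}$ to have each $Q_{ij}$ of the form \eqref{Eq:PlatoonClosedLoopK1} or \eqref{Eq:PlatoonClosedLoopK2} is equivalent to imposing the same structure on $M_{\eta e}$, and (as noted right after Proposition~\ref{Pr:NSC2Synthesis}) fixing/structuring blocks of $M$ only removes variables and preserves linearity of the LMI. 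The same rescaling shows $\|K_{ij}\|_1$ and $\|Q_{ij}\|_1$ share zero patterns, so $\sum_{i,j\in\N_N}c_{ij}\|Q_{ij}\|_1$ is a legitimate sparsity-promoting surrogate for a low communication-cost topology, while $c_0\tilde{\gamma}$ encodes the (soft) requirement $\tilde{\gamma}<\bar{\gamma}$ and drives the $L_2$-gain down; together with $p_i>0$ and $0<\tilde{\gamma}<\bar{\gamma}$ this yields the problem \eqref{Eq:Th:CentralizedTopologyDesign}. Feasibility of \eqref{Eq:Th:CentralizedTopologyDesign} then produces, via Proposition~\ref{Pr:NSC2Synthesis}, an $M_{\eta e}$ making $\tilde{\Sigma}$ $\textbf{Y}$-EID, i.e. L2G($\gamma$), i.e. the closed-loop platoon finite-gain $L_2$-stable from $w$ to $z$; and, as explained in Section~\ref{Sec:Platooning_Problem_Formulation}, this same $M_{\eta e}=[K_{ij}]_{i,j\in\N_N}$ simultaneously determines the global controller gains and the communication topology.

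The main obstacle I anticipate is the careful justification that the substitution $L_{uw}=\textbf{X}_p^{11}$ is admissible inside Proposition~\ref{Pr:NSC2Synthesis}: there $L_{uw}$ is a free decision variable, whereas here it is tied to the unknowns $p_i$, so one must verify that the change-of-variables / congruence argument underlying \eqref{Eq:Pr:NSC2Synthesis2} still goes through when $L_{uw}$ is constrained in this way — equivalently, that no step used the freedom of $L_{uw}$ beyond the affine identity $M_{uw}=(\textbf{X}_p^{11})^{-1}L_{uw}$. A secondary point to pin down is the sign convention $\nu_i<0$ that puts us in the $X_i^{11}>0$ branch of Proposition~\ref{Pr:NSC2Synthesis} and makes $\textbf{X}^{12}=\diag(-\tfrac{1}{2\nu_i}\I)$ well defined (it needs $\nu_i\neq 0$); the complementary case $\nu_i>0$ would instead route through the variant \eqref{Eq:Pr:NSC2Synthesis2Alternative}. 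Everything else is routine block-matrix bookkeeping.
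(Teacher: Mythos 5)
Your proposal is correct and follows essentially the same route as the paper: the paper's proof is a one-line invocation of Proposition \ref{Pr:NSC2Synthesis} with the subsystem properties \eqref{Eq:SubsystemDissipativity} and the target \eqref{Eq:NetworkedSystemDissipativity}, and your write-up is exactly that specialization carried out explicitly ($\textbf{Y}^{22}=-\I<0$, the $\nu_i<0$ branch so $X_i^{11}>0$, $M_{ze}=\I$, $M_{zw}=\0$, $L_{uw}=\textbf{X}_p^{11}$, $L_{uy}=Q$), with the block substitution reproducing \eqref{Eq:Th:CentralizedTopologyDesignMain} verbatim. The "obstacle" you flag about fixing $L_{uw}=\textbf{X}_p^{11}$ is benign, since this is affine in the variables $p_i$ and the paper already notes after Proposition \ref{Pr:NSC2Synthesis} that partially fixing $M$ only removes variables from the LMI.
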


\begin{proof}
The proof follows from applying the subsystem dissipativity properties assumed in \eqref{Eq:SubsystemDissipativity} and the networked system dissipativity properties considered in \eqref{Eq:NetworkedSystemDissipativity} in the interconnection topology synthesis result stated in Prop. \ref{Pr:NSC2Synthesis}.
\end{proof}

From the preceding discussion, it is clear that the interconnection matrix block $M_{\eta e}=[K_{ij}]_{i,j\in\N_N}$ synthesized in Th. \ref{Th:CentralizedTopologyDesign} provides both the vehicle (global) controller gains and the inter-vehicle communication topology for the platoon that optimally attenuates the disturbance effects on the performance while also minimizing a measure of total communication cost.   

For completeness, here we provide the direct relationship between the synthesized interconnection matrix block $[K_{ij}]_{i,j\in\N_N}$ in Th. \ref{Th:CentralizedTopologyDesign} and the individual vehicle (global) controller gains required in \eqref{Eq:LocationError} or \eqref{Eq:PlatoonFeedbackControl2}. In particular, under the error dynamics formulations I and II, the off-diagonal elements of $[K_{ij}]_{i,j\in\N_N}$ are, respectively,
\begin{equation*}\label{Eq:ControllerGainsOffDiagonal}
    K_{ij} = \bm{0&0&0\\0&0&-\bar{k}_{ij}\\0&0&0} \mbox{ and } K_{ij} = \bm{0&0&0\\0&0&0\\-l_{ij}^x&-l_{ij}^v&-l_{ij}^a},
\end{equation*}
for all $i\in\N_N, j\in\N_N\backslash\{i\}$. The diagonal elements are
\begin{equation}\label{Eq:ControllerGainsDiagonal}
    K_{ii} = K_{i0} - \sum_{j\in\N_N\backslash\{i\}} K_{ij},
\end{equation}
for all $i\in\N_N$, where each $K_{i0},\,i\in\N_N$, under the error dynamics formulations I and II are, respectively,
\begin{equation*}\label{Eq:ControllerGainsDiagonalLeader}
    K_{i0} = \bm{0&0&0\\0&0&\bar{k}_{i0}-1\\l_{ii}^x&l_{ii}^v&l_{ii}^a} \mbox{ and } K_{i0} = \bm{0&0&0\\0&0&0\\l_{ii}^x&l_{ii}^v&l_{ii}^a}.
\end{equation*}

\subsection{Necessary Conditions on Subsystem Passivity Indices}
Based on \eqref{Eq:Th:CentralizedTopologyDesignMain} in Th. \ref{Th:CentralizedTopologyDesign} (particularly, see the terms involving $\textbf{X}_p^{11}$, $\textbf{X}_p^{22}$, $\textbf{X}^{12}$ and $\textbf{X}^{21}$), it is clear that both the feasibility and the effectiveness of the global controller design hinges on the used subsystem passivity indices $\{\nu_i,\rho_i:i\in\N_N\}$ \eqref{Eq:SubsystemDissipativity} assumed for the subsystems $\{\tilde{\Sigma}_i:i\in\N_N\}$ \eqref{Eq:PlatoonClosedLoop1}. 

However, based on Co. \ref{Co:LTISystemXDisspativation}, local controllers $\{\bar{L}_{ii}:i\in\N_N\}$ \eqref{Eq:VirtualControlInputLawForm} can be designed for the subsystems $\{\tilde{\Sigma}_i:i\in\N_N\}$ \eqref{Eq:PlatoonClosedLoop1} to achieve a certain set of subsystem passivity indices (e.g., the passivity indices that maximizes $\nu_i+\rho_i, i\in\N_N$), independently from the global controller design \eqref{Eq:Th:CentralizedTopologyDesign}.

As mentioned earlier, executing the local controller designs in such an independent manner can lead to an infeasible and/or ineffective global controller design. Therefore, local controller designs have to account for specific conditions required in global controller design \eqref{Eq:Th:CentralizedTopologyDesign}. A few of such conditions have been identified in the following lemma.

\begin{lemma}\label{Lm:CodesignConditions}
For the main LMI condition \eqref{Eq:Th:CentralizedTopologyDesignMain} in Th. \ref{Th:CentralizedTopologyDesign} to hold, it is necessary that, at each subsystem $\tilde{\Sigma}_i,i\in\N_N$ \eqref{Eq:PlatoonClosedLoop1}, for some scalar parameter $p_i>0$, the passivity indices $\nu_i$ and $\rho_i$ \eqref{Eq:SubsystemDissipativity} (let $\tilde{\rho}_i \triangleq \frac{1}{\rho_i}$) are such that the LMI problem: 
\begin{subequations}\label{Eq:Lm:CodesignConditions}
\begin{align}
\mbox{Find: }\ \ &\nu_i,\ \ \tilde{\rho}_i,\ \ \tilde{\gamma}_i, \quad 
\mbox{Sub. to: }\\
&-\frac{\tilde{\gamma}_i}{p_i}<\nu_i<0,   \label{Eq:Lm:CodesignConditions1}      \\
&0<\tilde{\rho_i}<p_i,       \label{Eq:Lm:CodesignConditions2}     \\
&0<\tilde{\rho_i}<\frac{4\tilde{\gamma}_i}{p_i},   \label{Eq:Lm:CodesignConditions3}
\end{align}
\end{subequations}
is feasible.
\end{lemma}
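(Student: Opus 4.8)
The plan is to derive all three inequalities in \eqref{Eq:Lm:CodesignConditions} from the positive definiteness of a single, carefully chosen principal submatrix of the block LMI \eqref{Eq:Th:CentralizedTopologyDesignMain}. Each of the sixteen blocks occurring in \eqref{Eq:Th:CentralizedTopologyDesignMain} (such as $\textbf{X}_p^{11}$, $Q$, $-Q^\T\textbf{X}^{12}-\textbf{X}^{21}Q-\textbf{X}_p^{22}$ and $\tilde\gamma\I$) is itself an $N\times N$ array of $3\times3$ blocks, with $\textbf{X}^{12}$, $\textbf{X}_p^{11}$ and $\textbf{X}_p^{22}$ block diagonal. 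First I would collect, from each of those sixteen super-blocks, its $(i,i)$ sub-block; because $\textbf{X}^{12}$ is block diagonal (so $(\textbf{X}^{12})_{ii}=(X_i^{11})^{-1}X_i^{12}=-\tfrac{1}{2\nu_i}\I$, $(\textbf{X}_p^{11})_{ii}=-p_i\nu_i\I$, $(\textbf{X}_p^{22})_{ii}=-p_i\rho_i\I$), the \emph{off-diagonal} blocks of $Q$ never enter, and the resulting $4\times4$ block ($12\times12$) principal submatrix of \eqref{Eq:Th:CentralizedTopologyDesignMain} is
\[
W_i \triangleq \bm{
-p_i\nu_i\I & \0 & Q_{ii} & -p_i\nu_i\I\\
\0 & \I & \I & \0\\
Q_{ii}^\T & \I & \tfrac{1}{2\nu_i}\H(Q_{ii})+p_i\rho_i\I & -\tfrac{p_i}{2}\I\\
-p_i\nu_i\I & \0 & -\tfrac{p_i}{2}\I & \tilde\gamma\I
}.
\]
Since a principal submatrix of a positive definite matrix is positive definite, \eqref{Eq:Th:CentralizedTopologyDesignMain} forces $W_i>0$ for every $i\in\N_N$, and it is this matrix that I would mine for the stated conditions.

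The crucial point is structural: in \emph{both} error-dynamics formulations the first row of every $K_{ij}$ — hence of $Q_{ii}$ — is identically zero (compare \eqref{Eq:PlatoonClosedLoopK1} and \eqref{Eq:PlatoonClosedLoopK2}). Consequently $(\H(Q_{ii}))_{11}=2(Q_{ii})_{11}=0$, so the scalar $(1,1)$ entry of the $(3,3)$ block of $W_i$ collapses to $p_i\rho_i$, with no residual dependence on the synthesized $Q$. This is exactly what lets one extract necessary conditions involving $\nu_i$ and $\rho_i$ alone.

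With that observation, I would set $\tilde\gamma_i\triangleq\tilde\gamma$ and keep $\nu_i,\ \tilde\rho_i\triangleq 1/\rho_i$ at their given values, and read the three inequalities off suitable $1\times1$ and $2\times2$ scalar principal submatrices of $W_i>0$: (a) the scalar $(1,1)$ entry gives $-p_i\nu_i>0$, i.e.\ $\nu_i<0$, while the $2\times2$ submatrix on the first components of blocks $1$ and $4$, namely $\bm{-p_i\nu_i & -p_i\nu_i\\ -p_i\nu_i & \tilde\gamma}>0$, forces (via its determinant) $\tilde\gamma>-p_i\nu_i$ — together these are \eqref{Eq:Lm:CodesignConditions1}; (b) by the structural observation the $2\times2$ submatrix on the first components of blocks $2$ and $3$ equals $\bm{1 & 1\\ 1 & p_i\rho_i}$, and its positive definiteness gives $p_i\rho_i>1$ (hence $\rho_i>0$), i.e.\ $0<\tilde\rho_i<p_i$, which is \eqref{Eq:Lm:CodesignConditions2}; (c) the $2\times2$ submatrix on the first components of blocks $3$ and $4$ equals $\bm{p_i\rho_i & -p_i/2\\ -p_i/2 & \tilde\gamma}$, and since $\tilde\gamma>0$ was already established, its determinant condition gives $p_i\rho_i\tilde\gamma>p_i^2/4$, i.e.\ $1/\rho_i<4\tilde\gamma/p_i$, which is \eqref{Eq:Lm:CodesignConditions3}. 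Thus $(\nu_i,\tilde\rho_i,\tilde\gamma_i)$ is feasible for \eqref{Eq:Lm:CodesignConditions}.

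I expect the main obstacle to be the bookkeeping in the first step: correctly identifying $W_i$ as the $(i,i)$ principal sub-block of the heavily structured LMI \eqref{Eq:Th:CentralizedTopologyDesignMain} — in particular verifying that block-diagonality of $\textbf{X}^{12}$ keeps the off-diagonal blocks of $Q$ out of $W_i$ — and recognizing that the structural zero in the top row of $K_{ii}$ is precisely what turns the otherwise $Q$-entangled $(3,3)$ block into something usable. Once $W_i$ and that zero are in place, each of \eqref{Eq:Lm:CodesignConditions1}–\eqref{Eq:Lm:CodesignConditions3} reduces to a one-line $2\times2$ determinant inequality.
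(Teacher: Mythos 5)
Your proof is correct and follows essentially the same route as the paper's: both extract the three conditions from the same scalar $2\times 2$ principal minors of \eqref{Eq:Th:CentralizedTopologyDesignMain} (you merely bundle them into a single $12\times 12$ subsystem-level principal submatrix $W_i$ first, and take $\tilde{\gamma}_i=\tilde{\gamma}$ rather than introducing a block-diagonal $\tilde{\Gamma}<\tilde{\gamma}\I$, which changes nothing for feasibility of \eqref{Eq:Lm:CodesignConditions}). If anything, your justification that the $Q_{ii}$-dependent term vanishes from the relevant scalar entry — via the structural zero in the first row of $K_{ij}$ in both \eqref{Eq:PlatoonClosedLoopK1} and \eqref{Eq:PlatoonClosedLoopK2} — is more precise than the paper's appeal to ``$Q_{ii}$ has zeros on its diagonal,'' which is not literally true of the $(3,3)$ entry.
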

\begin{proof}
If \eqref{Eq:Th:CentralizedTopologyDesignMain} holds, there should exist some scalars $0<\tilde{\gamma}_i<\tilde{\gamma}, \forall i\in\N_N$. In other words, \eqref{Eq:Th:CentralizedTopologyDesignMain} holds if there exists $\tilde{\Gamma} \triangleq \diag(\tilde{\gamma}_i\I:i\in\N_N)$ such that $0<\tilde{\Gamma}<\tilde{\gamma}\I$ and 
\begin{equation}\label{Eq:Lm:CodesignConditionsStep1}
\scriptsize \bm{
\textbf{X}_p^{11} & \0 & Q & \textbf{X}_p^{11} \\
\0 & \I & \I & \0\\ 
Q^\T & \I & -Q^\T \textbf{X}^{12}-\textbf{X}^{21}Q-\textbf{X}_p^{22} & -\textbf{X}^{21}\textbf{X}_p^{11} \\
\textbf{X}_p^{11} & \0 & -\textbf{X}_p^{11} \textbf{X}^{12}&  \tilde{\Gamma}} \normalsize > 0.
\end{equation}
Now, for the LMI condition in \eqref{Eq:Lm:CodesignConditionsStep1} to hold, the following set of necessary conditions can be identified (by considering some specific sub(block)matrices of the block matrix in \eqref{Eq:Lm:CodesignConditionsStep1}):
\begin{subequations}
\begin{equation}\label{Eq:Lm:CodesignConditionsStep2_1}
\textbf{X}_p^{11} > 0    
\end{equation}
\begin{equation}\label{Eq:Lm:CodesignConditionsStep2_2}
-Q^\T \textbf{X}^{12}-\textbf{X}^{21}Q-\textbf{X}_p^{22}>0
\end{equation}
\begin{equation}\label{Eq:Lm:CodesignConditionsStep2_4}
\bm{\textbf{X}_p^{11} & \textbf{X}_p^{11}  \\
\textbf{X}_p^{11} & \tilde{\Gamma}}>0,
\end{equation}
\begin{equation}\label{Eq:Lm:CodesignConditionsStep2_5}
\bm{\I & \I  \\ \I & -Q^\T \textbf{X}^{12}-\textbf{X}^{21} Q-\textbf{X}_p^{22}}>0,    
\end{equation}
\begin{equation}\label{Eq:Lm:CodesignConditionsStep2_6}
\bm{-Q^\T \textbf{X}^{12}-\textbf{X}^{21} Q-\textbf{X}_p^{22} &  -\textbf{X}^{21}\textbf{X}_p^{11} \\ -\textbf{X}_p^{11} \textbf{X}^{12} & \tilde{\Gamma}}>0,    
\end{equation}
\end{subequations}

A necessary condition for \eqref{Eq:Lm:CodesignConditionsStep2_1} can be obtained as 
\begin{equation}\label{Eq:Lm:CodesignConditionsStep3_1}
\mbox{\eqref{Eq:Lm:CodesignConditionsStep2_1}}  \iff  -p_i\nu_i > 0 \iff  \nu_i < 0, \forall i\in\N_N.
\end{equation}
Considering the diagonal elements of the matrix in \eqref{Eq:Lm:CodesignConditionsStep2_2} and then using structure of the matrix $Q$ (particularly, using the fact that each $Q_{ii}$ block has zeros in its diagonal, for any $i\in\N_N$), a necessary condition for \eqref{Eq:Lm:CodesignConditionsStep2_2} can be obtained as 
\begin{align}
\mbox{\eqref{Eq:Lm:CodesignConditionsStep2_2}} \implies&\ 
-Q_{ii}^\T (-\frac{1}{2\nu_i}\I) - (-\frac{1}{2\nu_i}\I)Q_{ii} - (-p_i\rho_i\I) > 0 \nonumber \\
\label{Eq:Lm:CodesignConditionsStep3_2}
\implies&\ p_i\rho_i > 0 \iff \rho_i > 0, \forall i\in\N_N.
\end{align}

To obtain the necessary conditions for \eqref{Eq:Lm:CodesignConditionsStep2_4}-\eqref{Eq:Lm:CodesignConditionsStep2_6}, apart from the above two techniques, we also need to use the result in Pr. \ref{Pr:NetworkMatrixProperties} (Part 4). Consequently, we can obtain: 
\begin{align}
\mbox{\eqref{Eq:Lm:CodesignConditionsStep2_4}} \iff&\ \bm{-p_i\nu_i & -p_i\nu_i \\ -p_i\nu_i & \tilde{\gamma}_i } > 0 \iff -p_i\nu_i(\tilde{\gamma}_i + p_i\nu_i)  > 0 \nonumber \\ 
\label{Eq:Lm:CodesignConditionsStep3_4}
\iff&\ \nu_i > -\frac{\tilde{\gamma}_i}{p_i}, \forall i\in\N_N;
\end{align}
\begin{align}
\label{Eq:Lm:CodesignConditionsStep3_5}
\mbox{\eqref{Eq:Lm:CodesignConditionsStep2_5}} \implies&\ \bm{1 & 1 \\ 1 & p_i\rho_i} > 0 \iff \rho_i > \frac{1}{p_i}, \forall i\in\N_N; 
\end{align}
\begin{align}
\mbox{\eqref{Eq:Lm:CodesignConditionsStep2_6}} \implies&\ \bm{p_i\rho_i & -\frac{1}{2}p_i \\ -\frac{1}{2}p_i & \tilde{\gamma}_i} > 0 \iff p_i(\rho_i\tilde{\gamma}_i -\frac{1}{4}p_i) > 0 \nonumber \\
\label{Eq:Lm:CodesignConditionsStep3_6}
\iff&\ \rho_i > \frac{p_i}{4\tilde{\gamma}_i}, \forall i\in\N_N. 
\end{align}

Finally, by: (1) compiling the necessary conditions found in \eqref{Eq:Lm:CodesignConditionsStep3_1}-\eqref{Eq:Lm:CodesignConditionsStep3_6}, (2) using the change of variables $\tilde{\rho}_i \triangleq \frac{1}{\rho_i}, \forall i\in\N_N$, and (3) considering each $p_i, i\in\N_N$ as a predefined positive scalar parameter, we can obtain the set of LMI problems (each in variables $\nu_i,\ \tilde{\rho}_i,\ \tilde{\gamma}_i$, for any $i\in\N_N$) given in \eqref{Eq:Lm:CodesignConditions}. 
\end{proof}

Regarding the predefined parameters $\{p_i:i\in\N_N\}$ and the LMI variables $\{\tilde{\gamma}_i:i\in\N_N\}$ seen in the LMI problems \eqref{Eq:Lm:CodesignConditions} (and also their respective counterparts: the LMI variables $\{p_i:i\in\N_N\}$ and $\tilde{\gamma}$ seen in the LMI problem \eqref{Eq:Th:CentralizedTopologyDesign}), the following two respective remarks can be made.

\begin{remark} \label{Rm:CodesignVariables1}
In the LMI problems \eqref{Eq:Lm:CodesignConditions}, the reason for considering the parameters $\{p_i:i\in\N_N\}$ as ``predefined'' (as opposed to LMI variables) is that they appear non-linearly with the other LMI variables $\{\nu_i, \tilde{\rho}_i,\tilde{\gamma}_i: i\in\N_N\}$. Note, however, that, these predefined parameters $\{p_i:i\in\N_N\}$ are not required to be equal to the eventual values of the LMI variables $\{p_i:i\in\N_N\}$ seen in the LMI problem \eqref{Eq:Th:CentralizedTopologyDesign}. This is because, if for some set of predefined parameters $\{p_i:i\in\N_N\}$, the LMI problems \eqref{Eq:Lm:CodesignConditions} are feasible, then, it in itself can be thought of as a necessary condition for the LMI problem \eqref{Eq:Th:CentralizedTopologyDesign}. Note also that, these predefined parameters $\{p_i:i\in\N_N\}$ also enable the optimization of the overall local and global controller design process. Regardless, in practice, particularly for homogeneous networks, selecting $p_i = \frac{1}{N}, \forall i\in\N_N$ in the LMI problems \eqref{Eq:Lm:CodesignConditions} has rendered reasonable local and global controller designs.
\end{remark}

\begin{remark}  \label{Rm:CodesignVariables2}
Similar to the set $\{p_i:i\in\N_N\}$, one can also consider the set $\{\tilde{\gamma}_i: i\in\N_N\}$ in the LMI problems \eqref{Eq:Lm:CodesignConditions} as a set of ``predefined'' parameters. However, it is not necessary as $\{\tilde{\gamma}_i: i\in\N_N\}$ appear linearly with the other LMI variables $\{\nu_i, \tilde{\rho}_i: i\in\N_N\}$ in \eqref{Eq:Lm:CodesignConditions}. Further, under the same arguments made in Rm. \ref{Rm:CodesignVariables1}, note that these LMI variables $\{\tilde{\gamma}_i: i\in\N_N\}$ are not required to relate (in any way) to the LMI variable $\tilde{\gamma}$ seen in the LMI problem \eqref{Eq:Th:CentralizedTopologyDesign}. Note also that, minimizing these LMI variables $\tilde{\gamma}_i$ in the LMI problems \eqref{Eq:Lm:CodesignConditions} lead to higher passivity indices $\{\nu_i,\rho_i:i\in\N_N\}$ from \eqref{Eq:Lm:CodesignConditions}. And consequently, can lead to a lower/better $\tilde{\gamma}$ value in the LMI problem \eqref{Eq:Th:CentralizedTopologyDesign}. Therefore, these LMI variables $\{p_i:i\in\N_N\}$ also enable the optimization of the overall local and global controller design process.
\end{remark}

\subsection{Local Controller Sysnthesis}

Having established the fundamental result in Co. \ref{Co:LTISystemXDisspativation}, the subsystem (vehicle error dynamics) models in \eqref{Eq:PlatoonClosedLoop1}, and the necessary conditions on subsystem passivity indices in Lm. \ref{Lm:CodesignConditions}, we are now ready to present the local controller synthesis problem as an LMI problem formally in a theorem. 

\begin{theorem}\label{Th:LocalControllerDesign}
At each vehicle $\Sigma_i, i\in\N_N$, for some prespecified scalar parameter $p_i>0$, to make the corresponding closed-loop error dynamics (subsystem) $\tilde{\Sigma}_i$ \eqref{Eq:PlatoonClosedLoop1} IF-OFP($\nu_i,\rho_i$) (as assumed in \eqref{Eq:SubsystemDissipativity}) while also satisfying the necessary conditions identified in Lm. \ref{Lm:CodesignConditions}, the local controller gains $\bar{L}_{ii}$ \eqref{Eq:VirtualControlInputLawForm} should be synthesized via solving the LMI problem:
\begin{equation}\label{Eq:Th:LocalControllerDesign}
\begin{aligned}
\mbox{Find: }&\ \tilde{L}_{ii},\ P_i,\ \nu_i,\ \tilde{\rho}_i,\ \tilde{\gamma}_i, \\
\mbox{Sub. to: }&\ P_i > 0, \\
&\ 
\bm{\tilde{\rho}_i\I & P_i & \0 \\
P_i &-\mathcal{H}(AP_i + B \tilde{L}_{ii})& -I + \frac{1}{2}P_i\\
\0 & -I + \frac{1}{2}P_i & -\nu_i\I} > 0,\\ 
&\ -\frac{\tilde{\gamma}_i}{p_i}<\nu_i<0, \\
&\ 0<\tilde{\rho_i}< \min\left\{p_i,\frac{4\tilde{\gamma}_i}{p_i}\right\}
\end{aligned}
\end{equation}  
where $\rho_i \triangleq \frac{1}{\tilde{\rho}_i}$ and $\bar{L}_{ii} \triangleq \tilde{L}_{ii}P_i^{-1}$.
\end{theorem}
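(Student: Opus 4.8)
The plan is to obtain this result as a direct specialization of Corollary~\ref{Co:LTISystemXDisspativation} to the closed-loop vehicle error dynamics $\tilde{\Sigma}_i$ in \eqref{Eq:PlatoonClosedLoop1}, and then simply to adjoin the necessary conditions isolated in Lemma~\ref{Lm:CodesignConditions}. First I would note that $\tilde{\Sigma}_i$ is exactly of the form $\dot{e}_i(t)=(A+B\bar{L}_{ii})e_i(t)+\eta_i(t)$ covered by Corollary~\ref{Co:LTISystemXDisspativation}, with state $e_i$, external input $\eta_i$, the local controller gain $\bar{L}_{ii}$ playing the role of $L$, and $(A,B)$ as in \eqref{Eq:VehicleErrorDynamicsAMat1}. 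The target property is $X_i$-EID with $X_i$ as in \eqref{Eq:SubsystemDissipativity}, i.e.\ IF-OFP($\nu_i,\rho_i$); since $X_i^{22}=-\rho_i\I$, the hypothesis $X_i^{22}<0$ of Corollary~\ref{Co:LTISystemXDisspativation} is just $\rho_i>0$, which is automatically enforced once we reparametrize by $\tilde{\rho}_i\triangleq 1/\rho_i$ and require $\tilde{\rho}_i>0$ (indeed the $(1,1)$ block forces it).

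Next I would substitute the blocks of $X_i$ into \eqref{Eq:Co:LTISystemXDisspativation2}: $-(X_i^{22})^{-1}=\tilde{\rho}_i\I$, $X_i^{21}=\tfrac{1}{2}\I$ so that $P_iX_i^{21}=\tfrac{1}{2}P_i$, and $X_i^{11}=-\nu_i\I$, with $K\triangleq\tilde{L}_{ii}$ and $P\triangleq P_i$. This produces precisely the $3\times 3$ block matrix appearing in \eqref{Eq:Th:LocalControllerDesign} (with the non-strict $\geq$ of Corollary~\ref{Co:LTISystemXDisspativation} replaced by the strict $>$, which is a sufficient tightening rather than an equivalence), and Corollary~\ref{Co:LTISystemXDisspativation} then certifies that $\tilde{\Sigma}_i$ is IF-OFP($\nu_i,\rho_i$) from $\eta_i$ to $e_i$ with $\bar{L}_{ii}=\tilde{L}_{ii}P_i^{-1}$. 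The key point that keeps this an LMI is that, after the change of variables $\tilde{\rho}_i=1/\rho_i$ (and the customary $\tilde{L}_{ii}=\bar{L}_{ii}P_i$), every entry of the matrix is affine in the decision variables $(P_i,\tilde{L}_{ii},\nu_i,\tilde{\rho}_i)$ — in particular the output-feedback passivity index enters only through its reciprocal in the $(1,1)$ block.

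Finally I would append the three scalar conditions \eqref{Eq:Lm:CodesignConditions1}--\eqref{Eq:Lm:CodesignConditions3} from Lemma~\ref{Lm:CodesignConditions}, namely $-\tilde{\gamma}_i/p_i<\nu_i<0$ and $0<\tilde{\rho}_i<\min\{p_i,\,4\tilde{\gamma}_i/p_i\}$, as extra linear constraints in the additional decision variable $\tilde{\gamma}_i$, keeping $p_i>0$ as the prespecified parameter exactly as in Lemma~\ref{Lm:CodesignConditions}. These only shrink the feasible set, so any feasible point of the combined problem still certifies IF-OFP($\nu_i,\rho_i$) through the first block LMI while, by construction, also meeting the conditions that Lemma~\ref{Lm:CodesignConditions} identifies as necessary for the centralized synthesis LMI \eqref{Eq:Th:CentralizedTopologyDesignMain}. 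With $p_i$ fixed, the whole system is an LMI in $(\tilde{L}_{ii},P_i,\nu_i,\tilde{\rho}_i,\tilde{\gamma}_i)$ solvable by standard toolboxes, and $\rho_i=1/\tilde{\rho}_i$, $\bar{L}_{ii}=\tilde{L}_{ii}P_i^{-1}$ recover the quantities of interest.

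I do not expect a genuine obstacle here; the only care required is bookkeeping — correctly mapping the generic $(X^{11},X^{12},X^{21},X^{22})$ blocks of Corollary~\ref{Co:LTISystemXDisspativation} onto the platoon-specific $X_i$ of \eqref{Eq:SubsystemDissipativity}, verifying that the reciprocal substitution $\rho_i\leftrightarrow\tilde{\rho}_i$ is benign because $\tilde{\rho}_i>0$ is already implied by the $(1,1)$ block, and acknowledging that passing from $\geq$ in \eqref{Eq:Co:LTISystemXDisspativation2} to the strict $>$ in \eqref{Eq:Th:LocalControllerDesign} makes the stated LMI a (mildly conservative) sufficient condition rather than an exact characterization.
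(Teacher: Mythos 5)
Your proposal is correct and follows essentially the same route as the paper's own (much terser) proof: specialize Corollary~\ref{Co:LTISystemXDisspativation} to $\tilde{\Sigma}_i$ with the $X_i$ blocks from \eqref{Eq:SubsystemDissipativity} to obtain the $3\times 3$ block LMI, then adjoin the scalar necessary conditions from Lemma~\ref{Lm:CodesignConditions}. Your added bookkeeping — the explicit block substitutions, the observation that $\tilde{\rho}_i>0$ is forced by the $(1,1)$ block, and the remark that the strict inequality is a mild sufficient tightening of the corollary's $\geq$ — is accurate and goes beyond what the paper spells out.
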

\begin{proof}
The proof follows from: (1) considering the vehicle error dynamics (subsystem) model identified in \eqref{Eq:PlatoonClosedLoop1} and then applying the LMI-based controller synthesis technique given in Co. \ref{Co:LTISystemXDisspativation} so as to enforce the subsystem passivity indices assumed in \eqref{Eq:SubsystemDissipativity}, and then (2) enforcing the necessary conditions on subsystem passivity indices identified in Lm. \ref{Lm:CodesignConditions} so as to ensure a feasible and effective global controller design at \eqref{Eq:Th:CentralizedTopologyDesign}. 
\end{proof}
 
To conclude this section, we summarize the three main steps necessary for designing local controllers, global controllers and communication topology (in a centralized manner, for the considered vehicular platoon) as: 

\noindent
\textbf{Step\,1:} Pre-specify the scalar parameters: $p_i>0,\forall i\in\N_N$;\\
\noindent
\textbf{Step\,2:} Synthesize local controllers using Th. \ref{Th:LocalControllerDesign};\\
\noindent
\textbf{Step\,3:} Syntesize global controllers and topology using Th. \ref{Th:CentralizedTopologyDesign}.

\begin{remark}
Note that, apart from executing the above three steps just once, they can also be executed in an iterative manner, because, \textbf{Step 3} may reveal a more suitable set of parameters $\{p_i:i\in\N_N\}$ to be used in a \textbf{Step 1} in a following iteration. The effectiveness and the convergence of such an iterative co-design scheme is the subject of on-going research and is considered beyond the scope of this paper. On the other hand, one can also use a bruteforce or non-linear optimization approach to find a candidate set of parameters $\{p_i:i\in\N_N\}$ that renders a lower objective function value at \textbf{Step 3}.     
\end{remark}

\begin{remark}
According to \eqref{Eq:PlatoonClosedLoop1}, subsystems $\{\tilde{\Sigma}_i:i\in\N_N\}$ are linear and homogeneous. This is mainly a consequence of the used feedback linearization step \eqref{Eq:LinearizingControlInput}. If needed, one can promote/retain heterogeneity in \eqref{Eq:PlatoonClosedLoop1} by modifying \eqref{Eq:LinearizingControlInput} to exclude some natural and helpful terms (e.g., $-\frac{1}{\tau_i}a_i$) from being canceled in the original vehicle dynamics \eqref{Eq:VehicleDynamics}. Similarly, we can also retain/promote some non-linearities in \eqref{Eq:PlatoonClosedLoop1} (from \eqref{Eq:VehicleDynamics}), particularly if they are helpful and lead to stronger dissipativity properties than those can be established via Th. \ref{Th:LocalControllerDesign}. In all, such modifications are helpful and can highlight the impact of the proposed dissipativity-based approach. 
\end{remark}



\section{Decentralize Design of the Platoon}\label{Sec:DecentralizedCoDesign}

In this section, we first show how the centralized controller and topology co-design problems formulated in Theorems \ref{Th:CentralizedTopologyDesign} and \ref{Th:LocalControllerDesign} can be executed in a decentralized and compositional manner over the platoon. Subsequently, exploiting the said compositionality features we detail how scenarios like merging and splitting that occur in vehicular platoons can be handled efficiently. Finally, we discuss the analysis and enforcement of the string stability properties of the resulting platoons.

\subsection{Decentralized Co-design of Controllers and Topology}\label{subsec:Decentralized_Codesign}

To decentrally implement the LMI problem \eqref{Eq:Th:CentralizedTopologyDesign}, we cannot directly apply the network matrices-based decentralization technique proposed in \cite{WelikalaP32022}. This is due to the dependence of the diagonal blocks in $M_{\eta e}=[K_{ij}]_{i,j\in\N_N}$ (synthesized in \eqref{Eq:Th:CentralizedTopologyDesign}) on its off-diagonal blocks (see \eqref{Eq:ControllerGainsDiagonal}), and the presence of the global constraint $\tilde{\gamma} < \bar{\gamma}$. To overcome these two challenges, we propose to exploit the term $K_{i0}$ included in each diagonal term $K_{ii}$ in $M_{\eta e}$, and the concept of diagonal dominance, respectively. The following theorem provides the proposed decentralized implementation of the LMI problems \eqref{Eq:Th:CentralizedTopologyDesign} and \eqref{Eq:Th:LocalControllerDesign} that can be used to decentrally and compositionally synthesize the controllers and the topology for the platoon.

\begin{theorem}\label{Th:DecentralizedTopologyDesign}
The closed-loop error dynamics of the platoon (shown in Fig. \ref{Fig:PlatoonProblem}) can be made finite-gain $L_2$-stable with some $L_2$-gain $\gamma$ (where  $\tilde{\gamma} \triangleq \gamma^2 < \bar{\gamma}$ from disturbance input $w(t)$ to performance output $z(t)$, if at each vehicle $\Sigma_i, i\in\N_N$: 
(1) the local controller gains $\bar{L}_{ii}$ \eqref{Eq:PlatoonClosedLoop1} are designed using the local LMI problem \eqref{Eq:Th:LocalControllerDesign}, (2) the interconnection/global controller gain blocks $\{K^i\}$ are designed using the local LMI problem:
\begin{equation}\label{Eq:Th:DecentralizedTopologyDesign}
\begin{aligned}
\min_{\{Q^i\},\ \hat{\gamma}_i,\ p_i}& \sum_{j\in\N_{i-1}}c_{ij}\Vert Q_{ij} \Vert_1 + c_{ji} \Vert Q_{ji} \Vert_1 + c_{0i}\hat{\gamma}_i + c_{i}\vert \hat{\gamma}_i-\tilde{\gamma}_i\vert\\ 
\mbox{Sub. to: }& p_i>0,\ \hat{\gamma}_i < \bar{\gamma},\ \tilde{W}_{ii}>0,
\end{aligned}
\end{equation}
where $\tilde{\gamma}_i$ is from \eqref{Eq:Th:LocalControllerDesign} (obtained in Step 1), $\tilde{W}_{ii}$ is from \eqref{Eq:Pr:MainProposition1} when enforcing $W=[W_{ij}]_{i,j\in \N_N} > 0$ with 
\begin{equation*}
W_{ij} \triangleq 
\scriptsize \bm{
\mathsf{e}_{ij}V_p^{ii} & \0 & Q_{ij} & \mathsf{e}_{ij}V_p^{ii} \\
\0 & \mathsf{e}_{ij}\I & \mathsf{e}_{ij}\I & \0\\ 
Q_{ji}^\T & \mathsf{e}_{ij}\I & -Q_{ji}^\T S_{jj}-S_{ii}Q_{ij}-\mathsf{e}_{ij}R_p^{ii} & -\mathsf{e}_{ij}S_{ii}V_p^{ii} \\
\mathsf{e}_{ij}V_p^{ii} & \0 & -\mathsf{e}_{ij}V_p^{ii} S_{jj} &  \hat{\gamma}_i\mathsf{e}_{ij} \I
},
\end{equation*}
$V_p^{ii} \triangleq -p_i\nu_i \I$,
$R_p^{ii} \triangleq -p_i\rho_i \I$,
$S_{ii} \triangleq -\frac{1}{2\nu_i}\I$ and blocks $\{K^{i}\}$ are determined by $K_{ij} = (V_p^{ii})^{-1}Q_{ij}$, and 
(3) the update: 
\begin{equation}\label{Eq:Th:DecentralizedTopologyDesign2}
 K_{j0} := K_{j0} + K_{ji}.   
\end{equation}
(using the obtained blocks $\{K_{ji}:j\in\N_{i-1}\}$ in Step 2) is requested at each prior neighboring vehicle $\Sigma_j, j\in\N_{i-1}\cap \F_i$. 
\end{theorem}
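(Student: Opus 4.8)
The plan is to derive this decentralized result as a direct consequence of three things already established: the centralized topology-synthesis LMI in Theorem \ref{Th:CentralizedTopologyDesign}, the network-matrix decentralization machinery in Proposition \ref{Pr:MainProposition} (together with Proposition \ref{Pr:NetworkMatrixProperties} and Remark \ref{Rm:Pr:MainProposition}), and the local-controller LMI in Theorem \ref{Th:LocalControllerDesign}. First I would observe that the matrix $W = [W_{ij}]_{i,j\in\N_N}$ appearing in the statement is exactly the block-element-wise (BEW) rearrangement of the centralized matrix on the left-hand side of \eqref{Eq:Th:CentralizedTopologyDesignMain}, with the substitutions $\textbf{X}_p^{11}\leftrightarrow \diag(V_p^{ii})$, $\textbf{X}_p^{22}\leftrightarrow\diag(R_p^{ii})$, $\textbf{X}^{12}\leftrightarrow\diag(S_{ii})$, $Q\leftrightarrow[Q_{ij}]$, and with the single scalar $\tilde\gamma$ replaced by the per-vehicle $\hat\gamma_i$ placed on the $i$th diagonal block. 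So the first step is to verify that each $W^{kl}$ (the $(k,l)$-block across all vehicles) is a network matrix of the platoon's communication topology $\mathcal{G}_N$ — which holds because $Q_{ij}$ is supported on edges and the diagonal-block terms $V_p^{ii}, R_p^{ii}, S_{ii}, \hat\gamma_i\I$ are subsystem-local — and hence, by Proposition \ref{Pr:NetworkMatrixProperties}(3)–(4), $W>0 \iff \mbox{BEW}(W)>0$, i.e. positivity of $W$ is equivalent to positivity of the centralized matrix (with the $\hat\gamma_i$ relaxation).

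Second, I would invoke Proposition \ref{Pr:MainProposition}: the global condition $W>0$ decomposes, vehicle by vehicle, into the local conditions $\tilde W_{ii}>0$, where $\tilde W_{ii}$ is built only from $[W_{ij}]_{j\in\N_i}$ — i.e. from quantities known once vehicles $\Sigma_1,\dots,\Sigma_{i-1}$ have been processed plus the new blocks $\{Q^i\}$ and the scalar $p_i$ at vehicle $\Sigma_i$. This is precisely the constraint $\tilde W_{ii}>0$ in \eqref{Eq:Th:DecentralizedTopologyDesign}, and it is why the synthesis can proceed sequentially and compositionally. Third, I would explain the role of the update \eqref{Eq:Th:DecentralizedTopologyDesign2}: in Theorem \ref{Th:CentralizedTopologyDesign} each diagonal block obeys $K_{ii}=K_{i0}-\sum_{j\neq i}K_{ij}$, so when vehicle $\Sigma_i$ commits to its off-diagonal blocks $\{K_{ji}:j\in\N_{i-1}\cap\F_i\}$, the already-fixed diagonal block $K_{jj}$ of each prior neighbor $\Sigma_j$ would be perturbed; absorbing $K_{ji}$ into $K_{j0}$ keeps $K_{jj}=K_{j0}-\sum_{l\neq j}K_{jl}$ invariant and leaves $\Sigma_j$'s already-verified local LMI untouched — this is the "exploit the $K_{i0}$ term" trick mentioned before the theorem. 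Finally, the relaxation of the global constraint $\tilde\gamma<\bar\gamma$ to the per-vehicle $\hat\gamma_i<\bar\gamma$ is justified by diagonal dominance: it suffices that each diagonal block of $W$ carry a $\hat\gamma_i\I$ with $\hat\gamma_i<\bar\gamma$, and the penalty terms $c_{0i}\hat\gamma_i + c_i|\hat\gamma_i-\tilde\gamma_i|$ in the objective keep $\hat\gamma_i$ close to the locally-achievable $\tilde\gamma_i$ from Theorem \ref{Th:LocalControllerDesign}, ensuring the stitched-together certificate is consistent. Combining these, if every local LMI \eqref{Eq:Th:LocalControllerDesign} and every local LMI \eqref{Eq:Th:DecentralizedTopologyDesign} is feasible along the sequence $i=1,\dots,N$, then $W>0$, hence \eqref{Eq:Th:CentralizedTopologyDesignMain} holds (with $\tilde\gamma=\max_i\hat\gamma_i<\bar\gamma$), and Theorem \ref{Th:CentralizedTopologyDesign} delivers finite-gain $L_2$-stability.

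I expect the main obstacle to be making the two "patching" arguments fully rigorous: (i) showing the $K_{j0}$-update genuinely preserves the previously certified local conditions $\tilde W_{jj}>0$ for all $j<i$ — this requires checking that the update changes only the block $K_{jj}$ in a way that cancels against the newly added $Q_{ij},Q_{ji}$ contributions, so that the Schur-complement quantity $\tilde W_{jj}$ (which depends on $[W_{jl}]_{l\in\N_j}$) is literally unchanged; and (ii) reconstructing a single scalar $\tilde\gamma$, and a single positive-definite certificate for \eqref{Eq:Th:CentralizedTopologyDesignMain}, from the collection of per-vehicle $\hat\gamma_i$ — i.e. verifying that replacing each diagonal $\hat\gamma_i\I$ block by $(\max_i\hat\gamma_i)\I$ preserves positive definiteness (monotonicity of the relevant Schur complements in the diagonal entries). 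Both are essentially bookkeeping on top of Proposition \ref{Pr:MainProposition}, but they are where the argument must be done carefully rather than waved through.
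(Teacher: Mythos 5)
Your proposal is correct and follows essentially the same route as the paper's proof: identify $W$ as the BEW rearrangement of the centralized matrix (with per-vehicle $\hat{\gamma}_i$ blocks), apply Propositions \ref{Pr:NetworkMatrixProperties} and \ref{Pr:MainProposition} to decentralize $W>0$ into the sequential conditions $\tilde{W}_{ii}>0$, recover the global condition via $\tilde{\gamma}\triangleq\max_i\hat{\gamma}_i$ and diagonal dominance, and use the $K_{j0}$ update to keep $K_{jj}$ (and hence the network-matrix structure of $[K_{ij}]$) invariant. The two "obstacles" you flag are handled in the paper exactly as you anticipate, so no further work is needed.
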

\begin{proof}
According to Prop. \ref{Pr:MainProposition}, enforcing $\tilde{W}_{ii}>0$ at each vehicle $\Sigma_i,i\in\N_N$ is equivalent to enforcing $W=[W_{ij}]_{i,j\in\N_N} > 0$ for the entire platoon $\Sigma$. It is easy to see that this $W=\text{BEW}(\bar{W})$, where $\bar{W}$ is (also using some notations in Th. \ref{Th:CentralizedTopologyDesign}):
\begin{equation}
\normalsize \bar{W}  \triangleq 
    \scriptsize \bm{
		\textbf{X}_p^{11} & \0 & Q & \textbf{X}_p^{11} \\
		\0 & \I & \I & \0\\ 
		Q^\T & \I & -Q^\T \textbf{X}^{12}-\textbf{X}^{21}Q-\textbf{X}_p^{22} & -\textbf{X}^{21}\textbf{X}_p^{11} \\
		\textbf{X}_p^{11} & \0 & -\textbf{X}_p^{11} \textbf{X}^{12}&  \Gamma
	}\normalsize 
\end{equation}
with $\Gamma \triangleq \diag([\hat{\gamma}_i\I]_{i\in\N_N})$. Let us define $\tilde{\gamma} \triangleq \max_{i\in\N_N}\hat{\gamma}_i$, which satisfies $\tilde{\gamma} < \bar{\gamma}$ due to local enforcements of $\hat{\gamma}_i < \bar{\gamma}$ at each vehicle $\Sigma_i, i\in\N_N$, and $\Gamma < \tilde{\gamma} \I$ due to the diagonal dominance concept. Since $\{\tilde{W}_{ii}>0:\forall i\in\N_N\} \iff W = \text{BEW}(\bar{W}) > 0 \iff \bar{W}>0 \implies  \mbox{\eqref{Eq:Th:CentralizedTopologyDesignMain}}$ (respectively from Prop. \ref{Pr:MainProposition}, Prop. \ref{Pr:NetworkMatrixProperties} and the diagonal dominance concept), and $\{\hat{\gamma}_i < \bar{\gamma}:\forall i\in\N_N\} \implies \tilde{\gamma} < \bar{\gamma}$, it is clear that any solution that satisfies the decentralized set of LMI conditions in \eqref{Eq:Th:DecentralizedTopologyDesign} will also satisfy the centralized set of LMI conditions in \eqref{Eq:Th:CentralizedTopologyDesign}. 

However, there is one remaining concern due to the special dependence explained in \eqref{Eq:ControllerGainsDiagonal}. Note that the proposed decentralized process \eqref{Eq:Th:DecentralizedTopologyDesign} derives the interconnection/controller gain blocks iteratively, i.e., at each vehicle/iteration $\Sigma_i,i\in\N_N$, the set of blocks 
$\{K^i\}\triangleq \{K_{ii}\}\cup\{K_{ij}:j\in\N_{i-1}\}\cup\{K_{ji}:j\in\N_{i-1}\}$ is derived. In such an iteration (say at vehicle $\Sigma_i,i\in\N_N$), due to \eqref{Eq:ControllerGainsDiagonal}, each derived matrix $K_{ji}, j\in\N_{i-1}$ will affect the matrix $K_{jj}$ derived previously at the prior neighboring vehicle $\Sigma_j$ (of vehicle $\Sigma_i$). Explicitly,
\begin{equation}
    K_{jj}^{\text{New}} = \left(K_{j0} - \sum_{l<i, l \neq j} K_{jl}\right) - K_{ji} = K_{jj}^{\text{Old}} - K_{ji}. 
\end{equation}
Since the proposed decentralization (i.e., the application of Prop. \ref{Pr:MainProposition}) is valid only if $W$ is a network matrix, we require both $Q$, and by extension, $[K_{ij}]_{i,j\in\N_N}=[(V_p^{ii})^{-1}Q_{ij}]_{i,j\in\N_N}$, to be network matrices (see Prop. \ref{Pr:NetworkMatrixProperties}). The latter requirement holds only if we constrain $K_{jj}^{\text{New}}=K_{jj}^{\text{Old}}$. For this, we require the updates given in \eqref{Eq:Th:DecentralizedTopologyDesign2}. This completes the proof.
\end{proof}

\begin{remark}
The design of local controllers given in Th. \ref{Th:LocalControllerDesign}, due to its decentralized form, seamlessly fits in with the overall decentralized solution given in Th. \ref{Th:DecentralizedTopologyDesign}. 
Consequently, the local controller design (under Step 1 of Th. \ref{Th:DecentralizedTopologyDesign}) enforces a favorable set of local passivity indices for the subsequent decentralized global controller design (under Step 2 of Th. \ref{Th:DecentralizedTopologyDesign}). 
Note, however, that, the objective function used in this decentralized global controller design \eqref{Eq:Th:DecentralizedTopologyDesign} now includes a term $\vert \hat{\gamma}_i-\tilde{\gamma}_i\vert$ that penalizes the mismatch between the local and global controller designs. The role of this term is to prevent such a mismatch from growing over the decentralized iterations in Th. \ref{Th:DecentralizedTopologyDesign}.
\end{remark}

\begin{remark}
\label{Rm:OptimalityAndConservativeness}
Even though we can use Th. \ref{Th:DecentralizedTopologyDesign} to decentrally synthesize the controllers and topology so that a global constraint like $\gamma^2 \leq \bar{\gamma}$ holds for the platoon, its optimality is not guaranteed when compared to the centrally synthesized controllers and topology given by Th. \ref{Th:CentralizedTopologyDesign}. However, the LMI conditions in the decentralized implementation \eqref{Eq:Th:DecentralizedTopologyDesign} are more flexible than those in the centralized implementation \eqref{Eq:Th:CentralizedTopologyDesign} due to the involved additional design variables in \eqref{Eq:Th:DecentralizedTopologyDesign}. Regardless, the decentrally synthesized controllers and topology can be more conservative compared to their centralized counterparts due to the update steps \eqref{Eq:Th:DecentralizedTopologyDesign2}. In the ongoing research, we attempt to manipulate the cost function coefficients $\{c_{ij}: i,j\in\N_N\}\cup\{(c_{0i},c_i):i\in\N_N\}$ in \eqref{Eq:Th:DecentralizedTopologyDesign} to mitigate such optimality and conservativeness concerns. 
\end{remark}

\begin{remark}
\label{Rm:CustomStringStability}
One advantage of the proposed decentralized design \eqref{Eq:Th:DecentralizedTopologyDesign} is that it can be used to encode custom string stability conditions. For example, let us consider a custom string stability condition for the platoon error dynamics as: 
\begin{equation*}
\begin{aligned}
    \sup_{[w_j(\cdot)]_{j\in\N_i}}  \frac{\Vert [z_j(\cdot)]_{j\in\N_i} \Vert^2}{\Vert [w_j(\cdot)]_{j\in\N_i} \Vert^2} <&  \sup_{[w_j(\cdot)]_{j\in\N_{i-1}}}  \frac{\Vert [z_j(\cdot)]_{j\in\N_{i-1}} \Vert^2}{\Vert [w_j(\cdot)]_{j\in\N_{i-1}} \Vert^2},
\end{aligned}
\end{equation*}
for all $i\in\N_N$ (with the right hand side for $i=1$ being the prespecified constant $\bar{\gamma}$). It is easy to see that this condition can be encoded in the proposed decentralized design \eqref{Eq:Th:DecentralizedTopologyDesign} by replacing its local constraint $\hat{\gamma}_i < \bar{\gamma}$ by an alternative local constraint of the form $\hat{\gamma}_i < \bar{\gamma}_i \triangleq \min_{j\in\N_{i-1}}\hat{\gamma}_j$. 
\end{remark}

\vspace{-5mm}
\subsection{Vehicular Merging and Splitting}

The proposed decentralized scheme in Th. \ref{Th:DecentralizedTopologyDesign} is ``compositional'' due to two main reasons: (1) each of its iterations (decentralized steps) is independent of the total number of vehicles in the platoon (i.e., $N$), and (2) at any iteration, the designed controllers do not affect the already designed controllers at previous iterations (except for the minor update \eqref{Eq:Th:DecentralizedTopologyDesign2}). This compositionality property allows one to conveniently and efficiently append new vehicles to an existing platoon, and thus, it is extremely helpful in handling scenarios where multiple platoons are merged in sequence.

Moreover, the proposed decentralized scheme in Th. \ref{Th:DecentralizedTopologyDesign} 
can be executed in any arbitrary order. In other words, we can start the decentralized design procedures from any arbitrary vehicle (not necessarily from $\Sigma_1$) and continue until all the vehicles in the platoon are covered. While the optimality and the conservativeness of the resulting decentralized designs may vary depending on the followed order of the vehicles (see also Rm. \ref{Rm:OptimalityAndConservativeness}), a unique implication of this quality is that it can be used to conveniently and efficiently add an external vehicle into an existing platoon in the middle (not necessarily at the end) of that platoon. Consequently, this flexibility can be used to handle scenarios where multiple platoons are merged in parallel (i.e., not necessarily in sequence).

On the other hand, the compositionality property of the proposed decentralized designs can also be exploited to split an existing platoon into multiple platoons. For example, consider a situation where the decentralized design procedure is executed sequentially from $\Sigma_1$ to $\Sigma_N$, in that order. Now, if the local design problem \eqref{Eq:Th:DecentralizedTopologyDesign} 
fails to find a feasible solution at some vehicle $\Sigma_i,i\in\N_N\backslash\{1,N\}$, we can start a new platoon from that vehicle $\Sigma_i$ and resume the design procedure there on-wards. This splitting strategy is particularly useful when enforcing additional stricter constraints like the string stability conditions mentioned in Rm. \ref{Rm:CustomStringStability} (also discussed in detail next).

\subsection{String Stability Properties of the Platoon}

We conclude this section by providing some formal string stability results (recall Defs. \ref{def:l2_weak_ss} and \ref{def:DSS}) for the closed-loop error dynamics of the platoon (shown in Fig. \ref{Fig:PlatoonProblem}). In particular, these string stability results stem from the local passivity and global finite-gain $L_2$ stability properties enforced for the platoon using local and global controller designs, respectively (e.g., via Th. \ref{Th:DecentralizedTopologyDesign}). In the following theorem, we first establish the $L_2$ weak string stability (defined in Def. \ref{def:l2_weak_ss}) for the closed-loop error dynamics of the platoon.

\begin{theorem}\label{Th:l2_wss}
If the closed-loop error dynamics of the platoon is finite-gain $L_2$-stable with some $L_2$ gain $\gamma$ such that $\gamma^2 < \bar{\gamma}$ as in Theorems \ref{Th:CentralizedTopologyDesign} or \ref{Th:DecentralizedTopologyDesign}, then it is $L_2$ weakly string stable.
\end{theorem}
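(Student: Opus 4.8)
The plan is to show that finite-gain $L_2$ stability with gain $\gamma$ (independent of $N$, since it is bounded by $\sqrt{\bar\gamma}$) directly yields the two class-$\mathcal{K}$ functions required by Definition \ref{def:l2_weak_ss}. First I would recall that, by the definition of finite-gain $L_2$ stability together with the $\textbf{Y}$-EID characterization in Remark \ref{Rm:X-DissipativityVersions}(3) and the storage-function inequality of Definition \ref{Def:EID}, integrating $\dot V(e(t),e^*) \le \gamma^2 |w(t)-w^*|^2 - |z(t)-z^*|^2$ from $0$ to $\infty$ gives
\begin{equation*}
\|z(\cdot)-z^*\|^2 \le \gamma^2 \|w(\cdot)-w^*\|^2 + V(e(0),e^*),
\end{equation*}
where here $w^*=\0$ (equilibrium disturbance) so $w(\cdot)-w^* = w(\cdot)$. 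Since in the platoon networked system $\tilde\Sigma$ we have $z = M_{ze} e = \I\cdot e$ (so $z(t) = e(t)$ and $z^* = e^* = x^*$ after identifying notation), the left side is exactly $\|x(\cdot)-x^*\|^2$.

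Next I would bound the storage-function term $V(e(0),e^*)$ from above in terms of $|e(0)-e^*|^2 = |x(0)-x^*|^2$. This is where a mild extra argument is needed: Definition \ref{Def:EID} only asserts $V>0$ and $V(x^*,x^*)=0$, not an upper bound. However, since each subsystem $\tilde\Sigma_i$ is LTI (see \eqref{Eq:PlatoonClosedLoop1}) and the storage function certifying IF-OFP$(\nu_i,\rho_i)$ is the quadratic $V_i(e_i,e_i^*) = (e_i-e_i^*)^\T P_i (e_i-e_i^*)$ with $P_i>0$ from Theorem \ref{Th:LocalControllerDesign}, the aggregate storage function is $V(e,e^*) = \sum_{i\in\N_N} p_i (e_i-e_i^*)^\T P_i (e_i-e_i^*) \le \lambda_{\max} |e-e^*|^2$ for $\lambda_{\max} \triangleq \max_{i\in\N_N} p_i \lambda_{\max}(P_i)$. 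The key point for string stability is that this constant, like $\gamma$, can be taken uniform in $N$ — each $p_i$ and $P_i$ is fixed at the decentralized step for vehicle $\Sigma_i$ and does not change as more vehicles are appended.

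Then I would assemble the bound: from $\|x(\cdot)-x^*\|^2 \le \gamma^2\|w(\cdot)\|^2 + \lambda_{\max}|x(0)-x^*|^2$ and the elementary inequality $\sqrt{a+b}\le\sqrt a + \sqrt b$ for $a,b\ge 0$, we get
\begin{equation*}
\|x(\cdot)-x^*\| \le \gamma\|w(\cdot)\| + \sqrt{\lambda_{\max}}\,|x(0)-x^*|,
\end{equation*}
so setting $\alpha_1(s) \triangleq \gamma s$ and $\alpha_2(s) \triangleq \sqrt{\lambda_{\max}}\,s$ — both class $\mathcal{K}$ (indeed $\mathcal{K}_\infty$) and both independent of $N$ — establishes \eqref{Eq:l2_wss_condition} for every $N\in\N$, which is exactly Definition \ref{def:l2_weak_ss}.

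\textbf{Main obstacle.} The routine part is the dissipation-inequality integration; the one genuinely delicate point is the $N$-uniformity of the constants in $\alpha_1,\alpha_2$. One must argue that $\gamma$ stays bounded (guaranteed by the hard constraint $\tilde\gamma < \bar\gamma$ carried through Theorems \ref{Th:CentralizedTopologyDesign}/\ref{Th:DecentralizedTopologyDesign}) and that the storage-function upper-bound constant $\lambda_{\max}$ does not blow up with $N$ — which follows because the local designs are compositional and each $(p_i,P_i)$ is frozen once set. If one instead tried to use a generic (non-quadratic, merely continuous) storage function from Definition \ref{Def:EID}, the upper bound would not be automatic; exploiting the LTI structure and the explicit quadratic certificate from Theorem \ref{Th:LocalControllerDesign} is what makes the argument clean.
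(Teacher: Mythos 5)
Your proposal is correct, and its core is the same as the paper's: integrate the differential dissipation inequality $\dot V(e(t),e^*)\le \gamma^2|w(t)|^2-|z(t)|^2$ implied by the $L_2$-gain property, use $V\ge 0$ to discard $V(e(\tau),\cdot)$, and identify $z=e$ to read off the bound on $\|e(\cdot)\|$. Where you differ is in the treatment of the initial condition, and there your version is actually the more complete one. The paper simply sets $e(0)=\0$, obtains $\|e\|\le\bar\gamma\|w\|$, and then asserts that \eqref{Eq:l2_wss_condition} holds with $\alpha_1(s)=\bar\gamma s$ and ``any arbitrary class $\mathcal{K}$ function $\alpha_2$''; since Definition \ref{def:l2_weak_ss} demands the bound for \emph{every} $x(0)\in\R^n$, that step leaves the $\alpha_2$ term unjustified. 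You close exactly that gap by keeping $V(e(0),e^*)$ in the integrated inequality and bounding it by $\lambda_{\max}|e(0)-e^*|^2$ using the explicit quadratic storage certificate $\sum_{i\in\N_N}p_i\,(e_i-e_i^*)^\T P_i(e_i-e_i^*)$ available from Corollary \ref{Co:LTISystemXDisspativation} / Theorem \ref{Th:LocalControllerDesign} and Proposition \ref{Pr:NSC2Dissipativity}, then split the square root. This buys a genuine class-$\mathcal{K}$ function $\alpha_2(s)=\sqrt{\lambda_{\max}}\,s$ rather than a placeholder.

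One point to tighten: your claim that $\lambda_{\max}\triangleq\max_{i\in\N_N}p_i\lambda_{\max}(P_i)$ is uniform in $N$ is not automatic from ``each $(p_i,P_i)$ is frozen once set'' — a running maximum over an unbounded collection of per-vehicle certificates could still grow with $N$ unless the admissible $(p_i,P_i)$ are drawn from a bounded family (which holds for the homogeneous platoon considered here, where all local designs coincide, and could otherwise be enforced by a uniform normalization of the local LMIs). Also note the mild notational mismatch that the $P_i$ appearing in \eqref{Eq:Th:LocalControllerDesign} is the \emph{inverse} of the Lyapunov matrix in the storage function (see the change of variables in the proof of Corollary \ref{Co:LTISystemXDisspativation}); this does not affect the argument, only which matrix's extremal eigenvalue enters $\lambda_{\max}$. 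With those caveats stated, your proof is a valid — and in the initial-condition respect stronger — version of the paper's argument.
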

\begin{proof}
The differential dissipativity inequality corresponding to the established $L_2$-stability property takes the form:
\begin{equation}\label{Eq:Th:l2_wssStep1}
\dot{V}(e(t),e^*) \leq \bar{\gamma}^2 \vert w(t) \vert^2 - \vert z(t) \vert^2,   \quad \forall t\geq 0,
\end{equation}
where $e(t), w(t)$ and $z(t)$ are the error state, disturbance input and performance output of the closed-loop error dynamics of the platoon, respectively, and $e^* = \0$ (see also Def. \ref{Def:EID} and Rm. \ref{Rm:X-DissipativityVersions}). Assuming the initial state as $e(0)=\0$, upon integrating \eqref{Eq:Th:l2_wssStep1} and using the fact that $V(\0,\0)=0$, we can obtain  
\begin{equation}\label{Eq:Th:l2_wssStep2}
V(e(\tau),\0) \leq \int_0^\tau \bar{\gamma}^2 \vert w(t) \vert^2 - \vert z(t) \vert^2 dt, \quad \forall \tau \geq 0.
\end{equation}
Since $0 \leq V(e(\tau),\0)$, for $\tau=\infty$, \eqref{Eq:Th:l2_wssStep2} implies $0 \leq \bar{\gamma}^2 \Vert w \Vert^2 - \Vert z \Vert^2$. Now, using the fact that $z(t) \triangleq e(t)$, we get
\begin{equation}
\Vert e \Vert \leq \bar{\gamma} \Vert w \Vert, 
\end{equation}
which implies that $L_2$ weakly string stability condition \eqref{Eq:l2_wss_condition} in Def. \ref{def:l2_weak_ss} (with $\alpha_1(s) \triangleq \bar{\gamma}s$ and any arbitrary class $\mathcal{K}$ function $\alpha_2(s)$ independent of $N$) for the considered closed-loop error dynamics of the platoon. This completes the proof.
\end{proof}

\vspace{-1mm}

\section{Numerical Results}\label{Sec:NumericalResults}

In this section, we verify our proposed co-design framework in the previous sections through numerical simulations. We study two cases of the platooning centralized and decentralized co-design with $L_2$ weak string stability, respectively. 
For each case, we consider a vehicular platoon containing ten vehicles; the first vehicle is viewed as the leader, and the rest are the followers. Without loss of generality, we assume the platoon to be homogeneous and each vehicle in \eqref{Eq:VehicleDynamics} is with the parameters $m_i=1500\ \mbox{kg}$, $\tau_i=0.25\ \mbox{s}$, $A_{f,i}=2.2\ \mbox{m}^2$, $\rho=0.78\ \mbox{kg/m}^3$, $C_{d,i}=0.35$ and $C_{r,i}=0.067$, for any $i\in\N_{9}$. Besides, the desired separating distances between two successive vehicles and the length of a vehicle are selected as $\delta_{di}=5\ \mbox{m}$ and $L_i=2.5\ \mbox{m}$, respectively. Furthermore, the external disturbances are assumed to be Gaussian random noise with the mean value $w_m\sim\mbox{U}_{3\times 1}(-0.5,0.5)$ and the variance $w_v$ is normal distribution with standard deviation of the distribution $\mbox{U}_{3\times 1}(0,0.1)$ ($\mbox{U}_{3\times 1}$ stands for a $3\times 1$ vector with each element being uniform distribution). 
The initial states of all vehicles in the platoon are assumed to be $x_i(0)=x_0-x_{im}-x_{iv}$, for all $i\in\N_{9}$, where $x_0=0$ is the position of the leader, $x_{im}=5$ and $x_{iv}\sim\mbox{U}(-1,1)$ are the mean and variance of the separation between two successive vehicles, respectively.

Simulation results are generated by a simulator developed in MATLAB\footnote{Publicly available in \url{https://github.com/NDzsong2/Longitudinal-Vehicular-Platoon-Simulator.git}}. Considered simulation configurations can be seen in Figs. \ref{Fig:L2_WSS_CentralizedTopology} and \ref{Fig:L2_WSS_DecentralizedTopology}, where the simulation time, cost, normed position, velocity and acceleration tracking errors are presented. Certain state profiles and tracking errors are also plotted in Figs. \ref{Fig:L2_WSS_PositionTracking_CentralizedPlatoon}-\ref{Fig:L2_WSS_VelocityTrackingErrors_CentralizedPlatoon} and Figs. \ref{Fig:L2_WSS_PositionTracking_DecentralizedPlatoon}-\ref{Fig:L2_WSS_VelocityTrackingErrors_DecentralizedPlatoon}. For the simulation of both cases, we assume the leader's reference velocity to be a piecewise linear function (as shown in Fig. \ref{Fig:leader_reference_velocity}):
\begin{equation}
v_1(t): \left \{\begin{array}{ll}
15t\ \mbox{m/s}, & 0\leq t\leq 2\ \mbox{s}\\
(5t+20)\ \mbox{m/s}, & 2\leq t\leq 4\ \mbox{s}\\
40\ \mbox{m/s}, & 4\leq t\leq 6\ \mbox{s}\\
(-10t+100)\ \mbox{m/s}, & 6\leq t\leq 8\ \mbox{s}\\
20\ \mbox{m/s}. & 8\leq t\leq 10\ \mbox{s}
\end{array}\right.
\end{equation}
\begin{figure}[!h]
    \centering
    \includegraphics[width=3.5in]{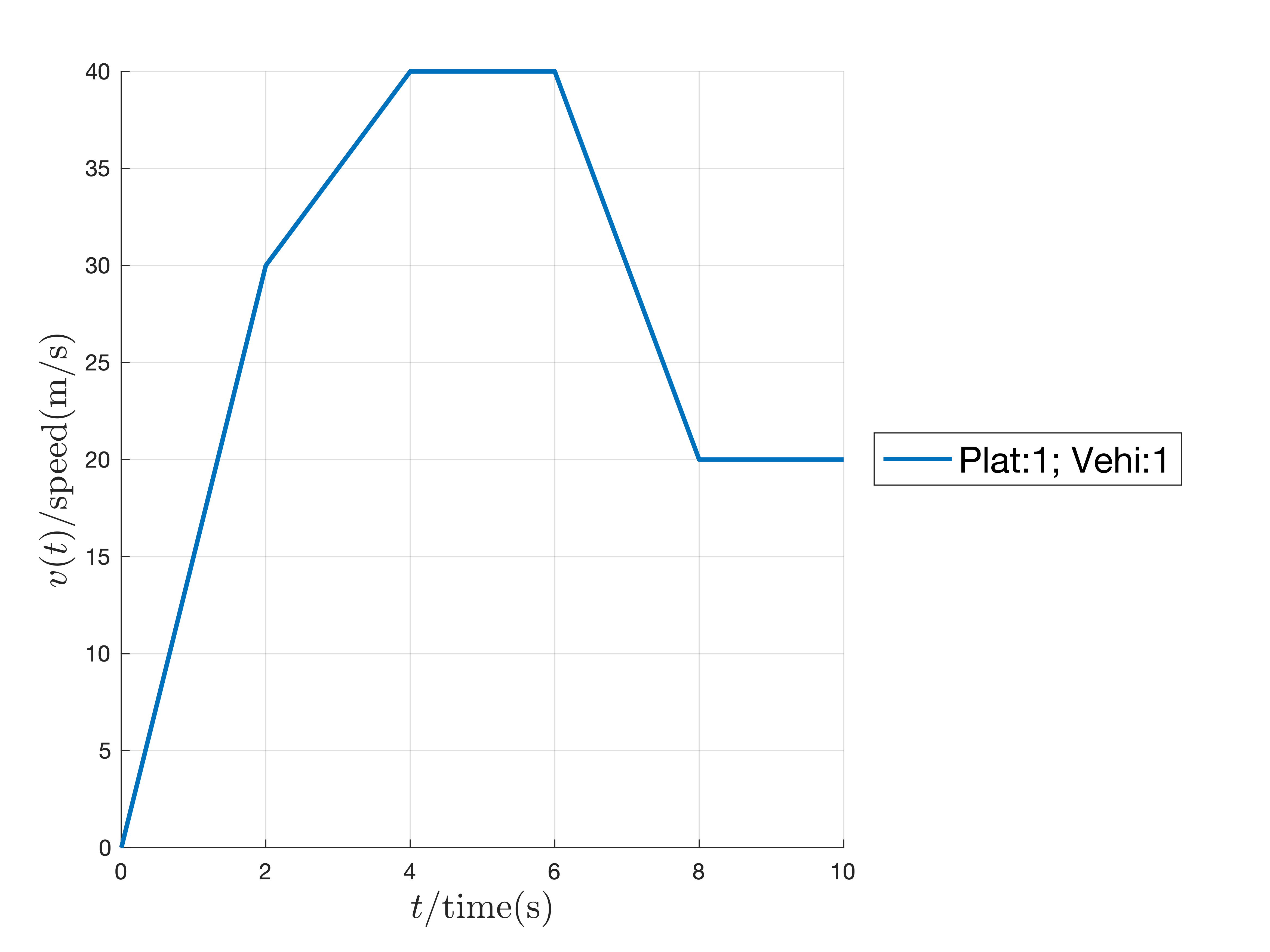}
    \caption{Reference velocity of the leader. }
    \label{Fig:leader_reference_velocity}
\end{figure}

\subsection{Case 1: $L_2$ Weak String Stability-based Centralized and Decentralized Platooning Co-Design}

We first verify the centralized version of the proposed co-design method, i.e., the program \eqref{Eq:Th:CentralizedTopologyDesign} in Theorem \ref{Th:CentralizedTopologyDesign} (with the program \eqref{Eq:Th:LocalControllerDesign} in Theorem \ref{Th:LocalControllerDesign}).
Based on the above platoon parameters, our simulation results are outlined in Figs. \ref{Fig:L2_WSS_CentralizedTopology}-\ref{Fig:L2_WSS_VelocityTrackingErrors_CentralizedPlatoon}. 
Specifically, the platoon under the synthesized centralized communication topology is shown in the following Fig. \ref{Fig:L2_WSS_CentralizedTopology}, where the communication links are dense between further neighbors and nearly all the followers maintain the connection with the leader. This results from the trade-off between the communication cost and the stability and dissipativity behaviors. For example, the communication links from the successive vehicles are beneficial for the stability and dissipativity of the platoon. Still, the density of such links will significantly increase the communication cost. Therefore, links from further neighboring vehicles and those from the leader are kept, meaning these links are more efficient than the nearest neighbors in terms of cost and performance.

The obtained (optimal) $L_2$-gain from the centralized co-design is $\gamma = 2.5093$.
To quantify the tracking control result and the string stability, the position and velocity tracking of the platoon under the centrally synthesized controller are further plotted in Figs. \ref{Fig:L2_WSS_PositionTracking_CentralizedPlatoon}-\ref{Fig:L2_WSS_VelocityTrackingErrors_CentralizedPlatoon}. From these plots, it is readily observed that the position and velocity tracking of each vehicle can be well achieved, despite of some offsets at the steady states. 
This is because of the existence of noise in the model and the method we proposed is inherently dissipativity-based without any compensation. 
Specifically, the real-time position of each vehicle is smoothly evolved and no collisions (or intersections between two curves of two vehicles) occur. Besides, the real-time tracking errors of positions and velocities are bounded and non-increasing along the platoon.
In fact, the resulting controller satisfies not only the $L_2$ weakly string stability (via Th. \ref{Th:l2_wss}) but also the so-called Disturbance String Stability (DSS) \cite{besselink2017string}, since the weak coupling condition $\max\{\sum_{i\in\N_9}|P_iK_{ij}|\}=\max\{2.9212e{-8},1.8539e{-8},1.0814e{-8},1.1025e{-8},
6.27\text{-}
\\
60e{-9},5.0243e{-9},1.2251e{-8},2.3824e{-8},2.1363e{-8}\}<1$ holds for all $i\in\N_9$.

\begin{figure}[!h]
    \centering
    \includegraphics[width=3in]{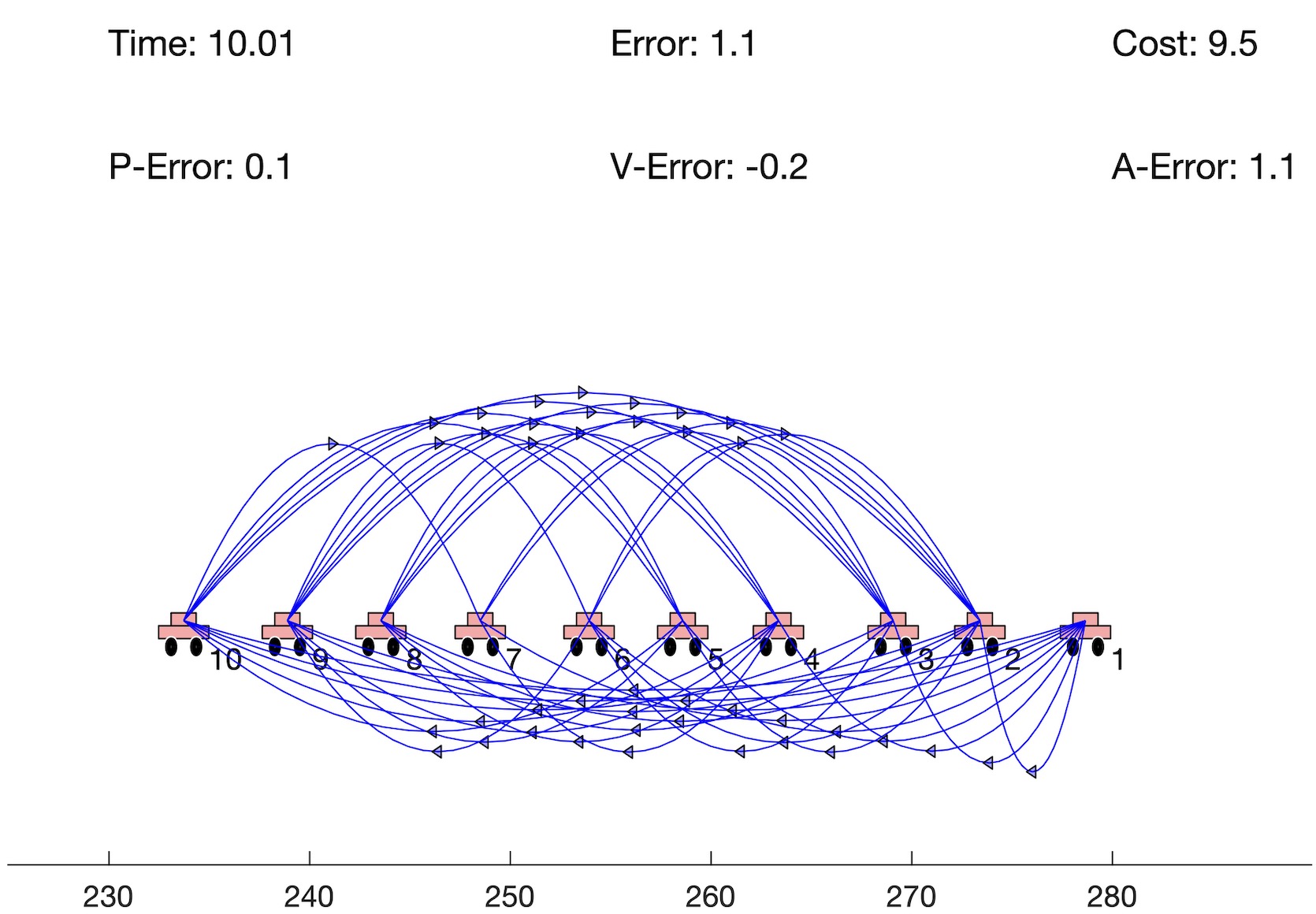}
    \vspace{-3mm}
    \caption{The platoon under the centralized controller (topology). Directed arrows represent the interconnection topology, where the upper ones are the communication edges from the follower to the predecessor and vice versa. }
    \label{Fig:L2_WSS_CentralizedTopology}
\end{figure}

\begin{figure}[!h]
    \centering
    \includegraphics[width=3in]{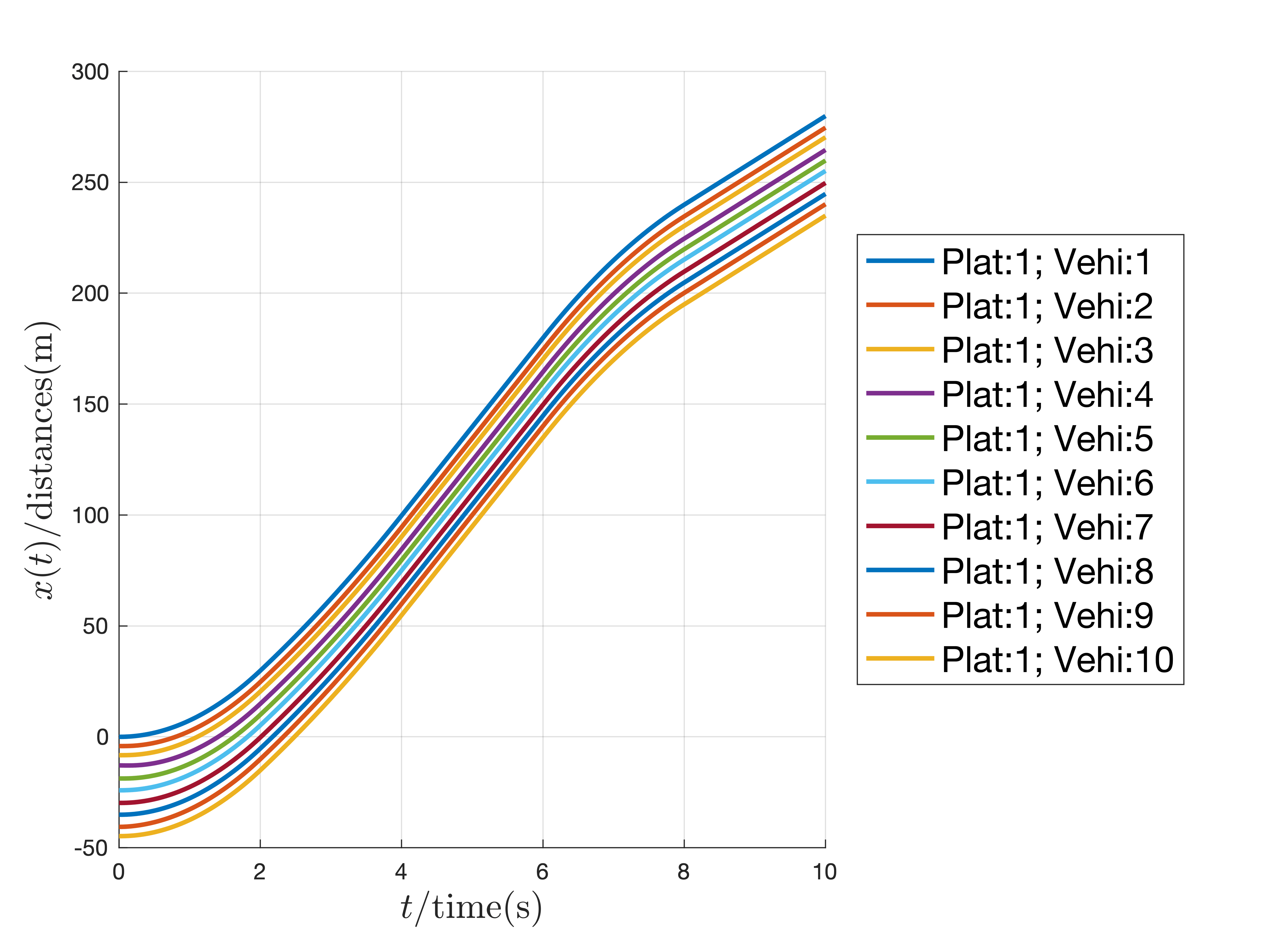}
    \vspace{-3mm}
    \caption{Position tracking under centralized co-design.}
    \label{Fig:L2_WSS_PositionTracking_CentralizedPlatoon}
\end{figure}

\begin{figure}[!h]
    \centering
    \includegraphics[width=3in]{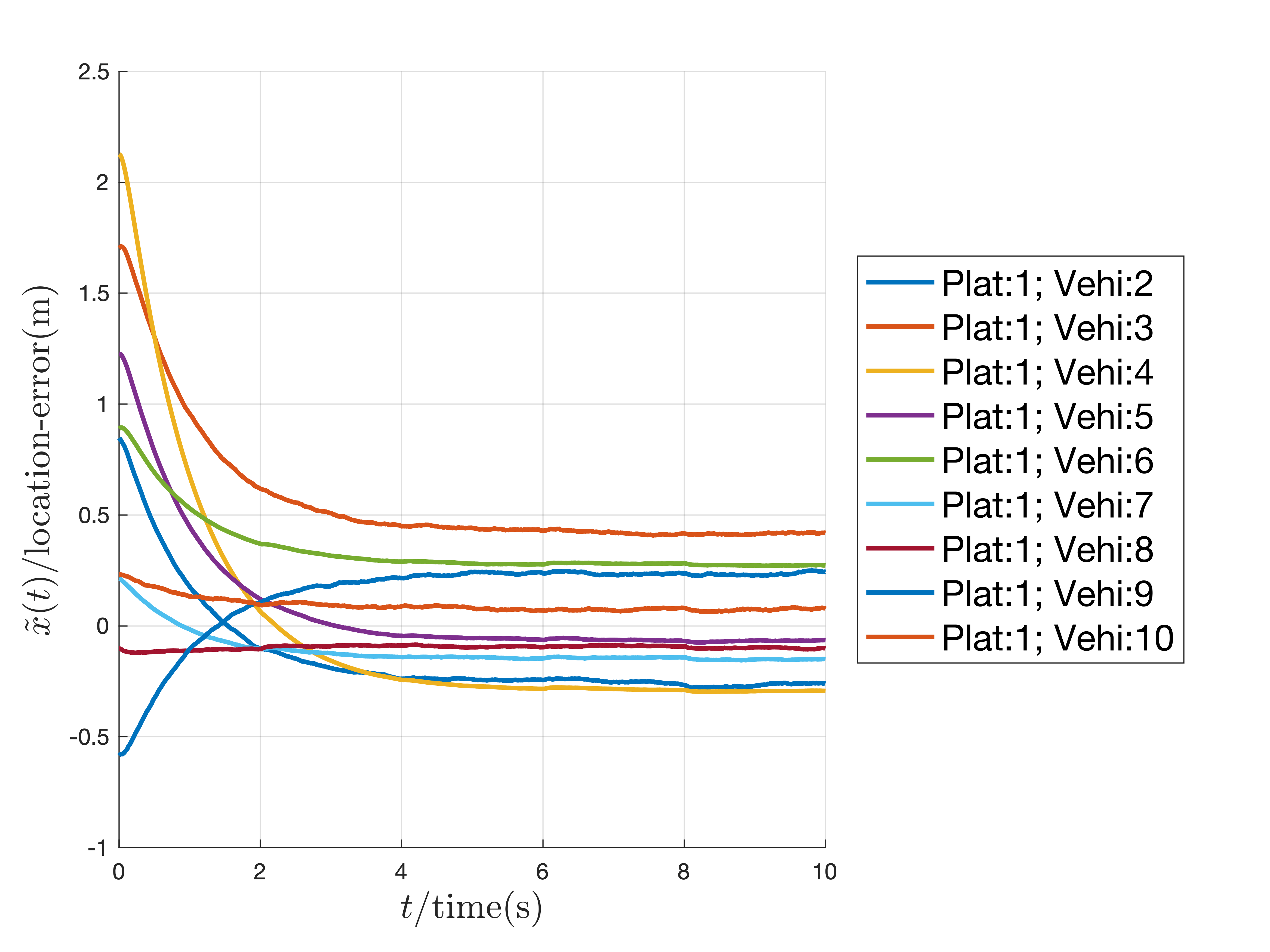}
    \vspace{-3mm}
    \caption{Position tracking errors under centralized co-design.}
    \label{Fig:L2_WSS_PositionTrackingErrors_DecentralizedPlatoon}
\end{figure}

\begin{figure}[!h]
    \centering
    \includegraphics[width=3in]{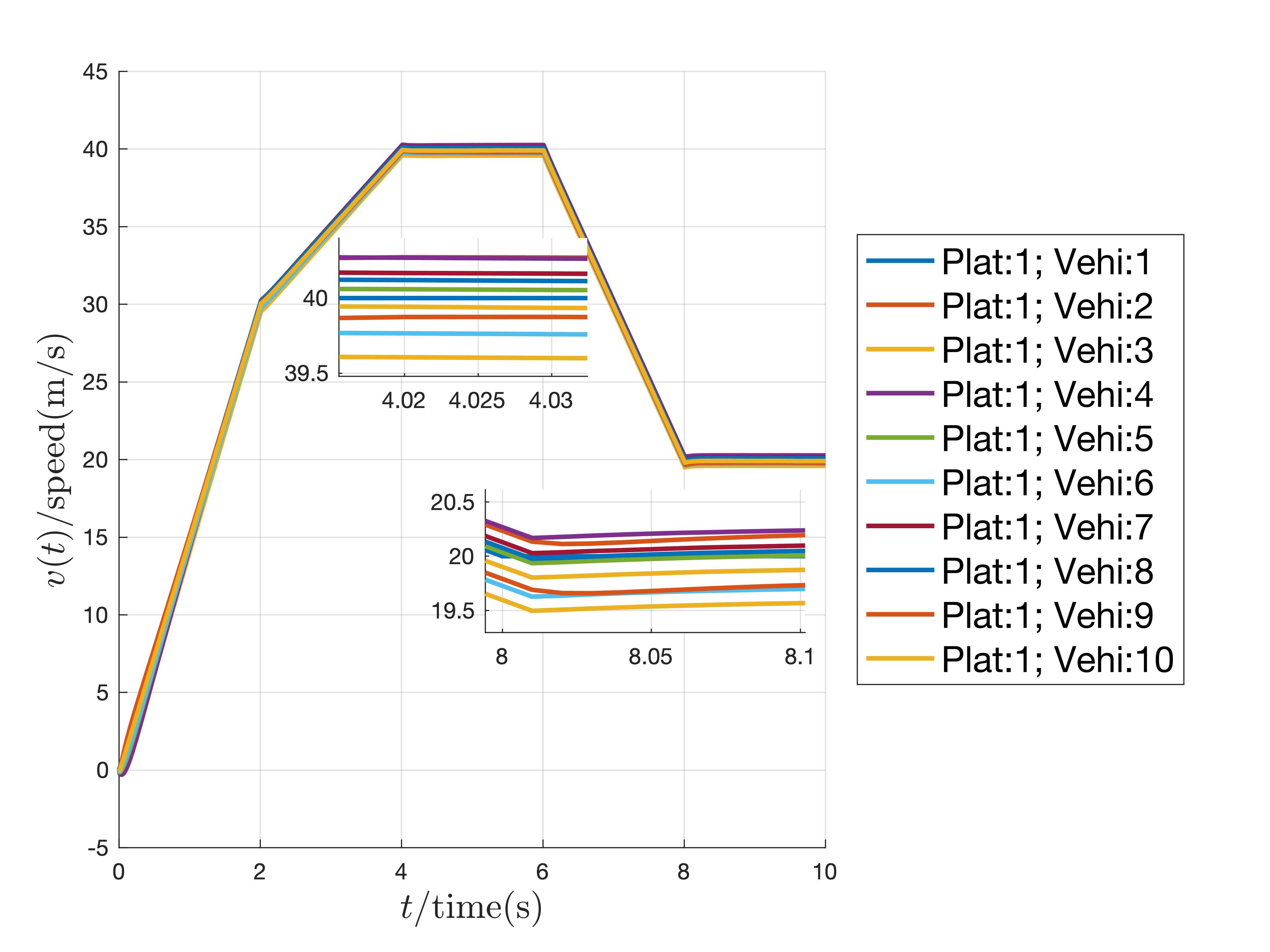}
    \vspace{-3mm}
    \caption{Velocity tracking of the platoon under centralized co-design.}
    \label{Fig:L2_WSS_VelocityTracking_CentralizedPlatoon}
\end{figure}

\begin{figure}[!h]
    \centering
    \includegraphics[width=3in]{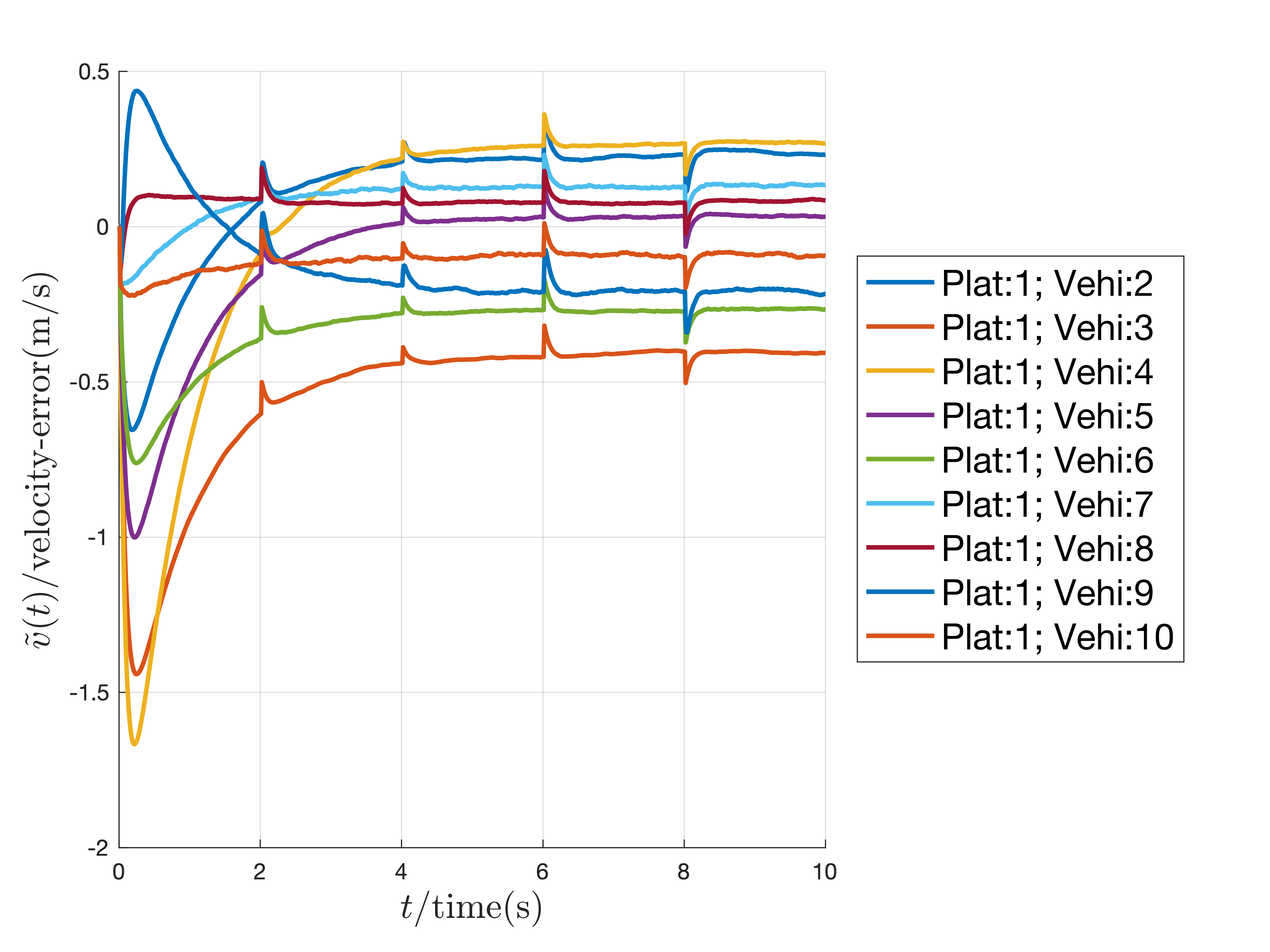}
    \vspace{-3mm}
    \caption{Velocity tracking errors under centralized co-design.}
    \label{Fig:L2_WSS_VelocityTrackingErrors_CentralizedPlatoon}
\end{figure}

For the vehicular platoon, we also verify the proposed decentralized co-design method. For the considered platoon, the synthesized decentralized communication topology is shown in Fig. \ref{Fig:L2_WSS_DecentralizedTopology}. Here, note that the follower vehicles are sequentially added/dropped in the simulation, i.e., the program \eqref{Eq:Th:DecentralizedTopologyDesign} in Theorem \ref{Th:DecentralizedTopologyDesign} is incrementally/iteratively solved.

An interesting phenomenon shown in Fig. \ref{Fig:L2_WSS_DecentralizedTopology} is the appearance/disappearance of communication links (controller gains $K_{ij}$) when vehicles joining/leaving the platoon. This practically makes sense because vehicular joining/leaving will correspondingly change the stability and dissipativity behaviors of the entire platoon, which also indicates the change of the positive definiteness of the matrices in \eqref{Eq:Th:DecentralizedTopologyDesign}.
Another phenomenon observed from Fig. \ref{Fig:L2_WSS_DecentralizedTopology} is the increased number of both leader-followers and followers-followers communication links as compared to the centralized case in Fig. \ref{Fig:L2_WSS_CentralizedTopology}. This is likely caused by the decentralized synthesis of the topology optimization problem \eqref{Eq:Th:DecentralizedTopologyDesign}, since the capability of the disturbance attenuation is incrementally adjusted, which ends up with a more conservative result than the centralized case in \eqref{Eq:Th:CentralizedTopologyDesign}. 

\begin{figure}[!h]
    \centering
    \includegraphics[width=3in]{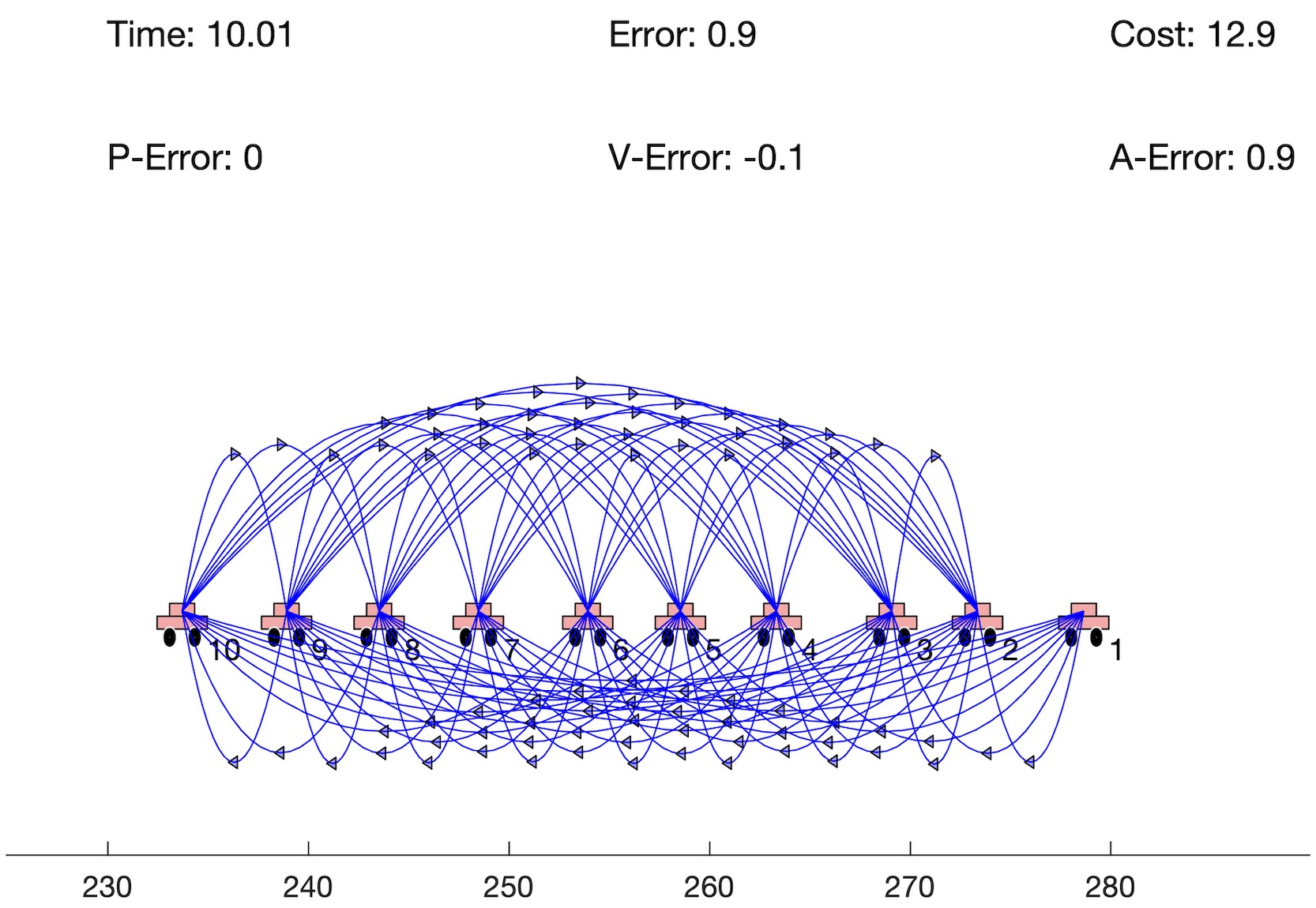}
    \vspace{-3mm}
    \caption{The platoon under the decentralized controller (topology). Directed arrows represent the interconnection topology, where the upper ones are the communication edges from the follower to the predecessor and vice versa. }
    \label{Fig:L2_WSS_DecentralizedTopology}
\end{figure}

\begin{figure}[!h]
    \centering
    \includegraphics[width=3in]{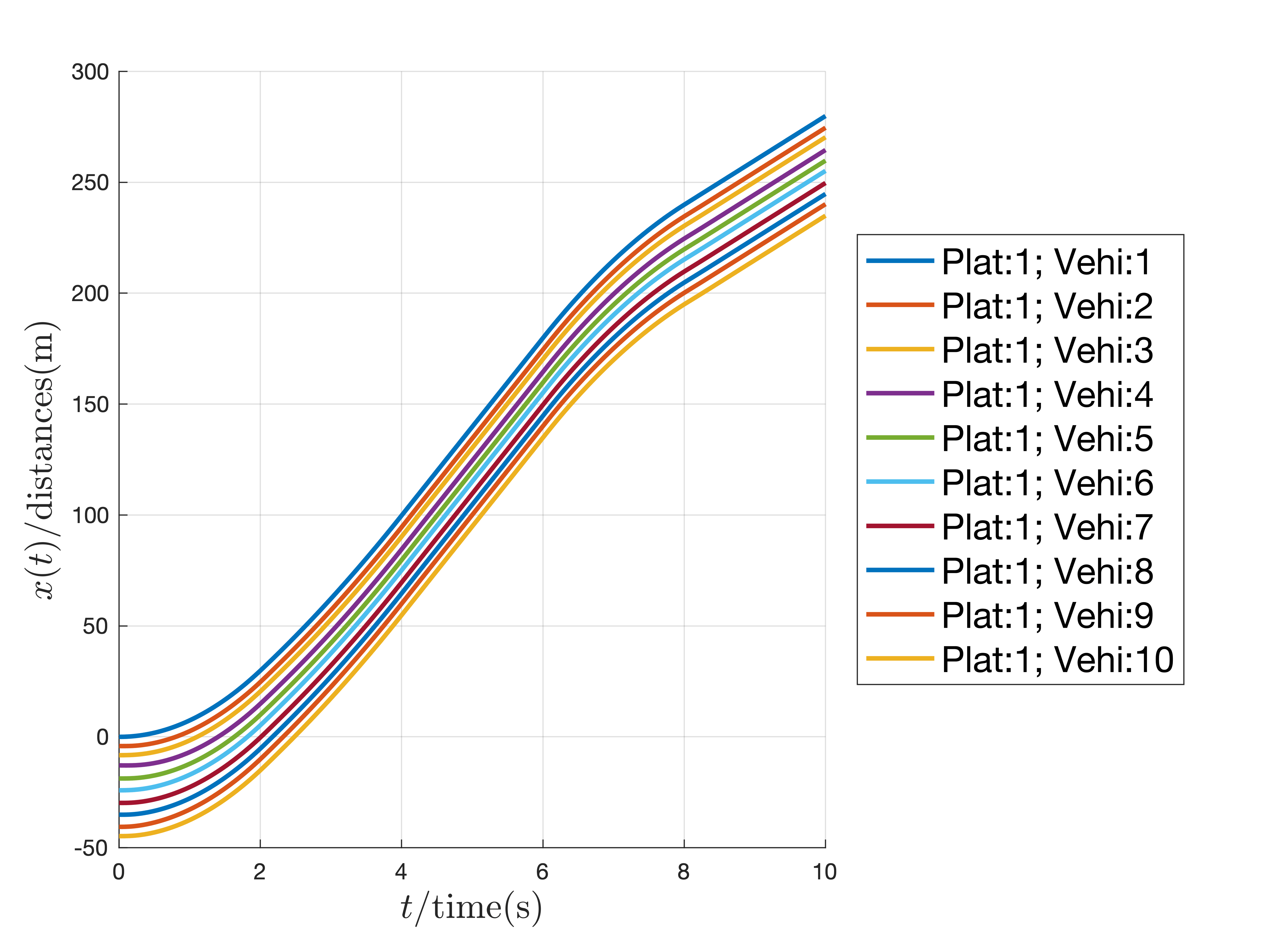}
    \vspace{-3mm}
    \caption{Position tracking under decentralized co-design.}
    \label{Fig:L2_WSS_PositionTracking_DecentralizedPlatoon}
\end{figure}

\begin{figure}[!h]
    \centering
    \includegraphics[width=3in]{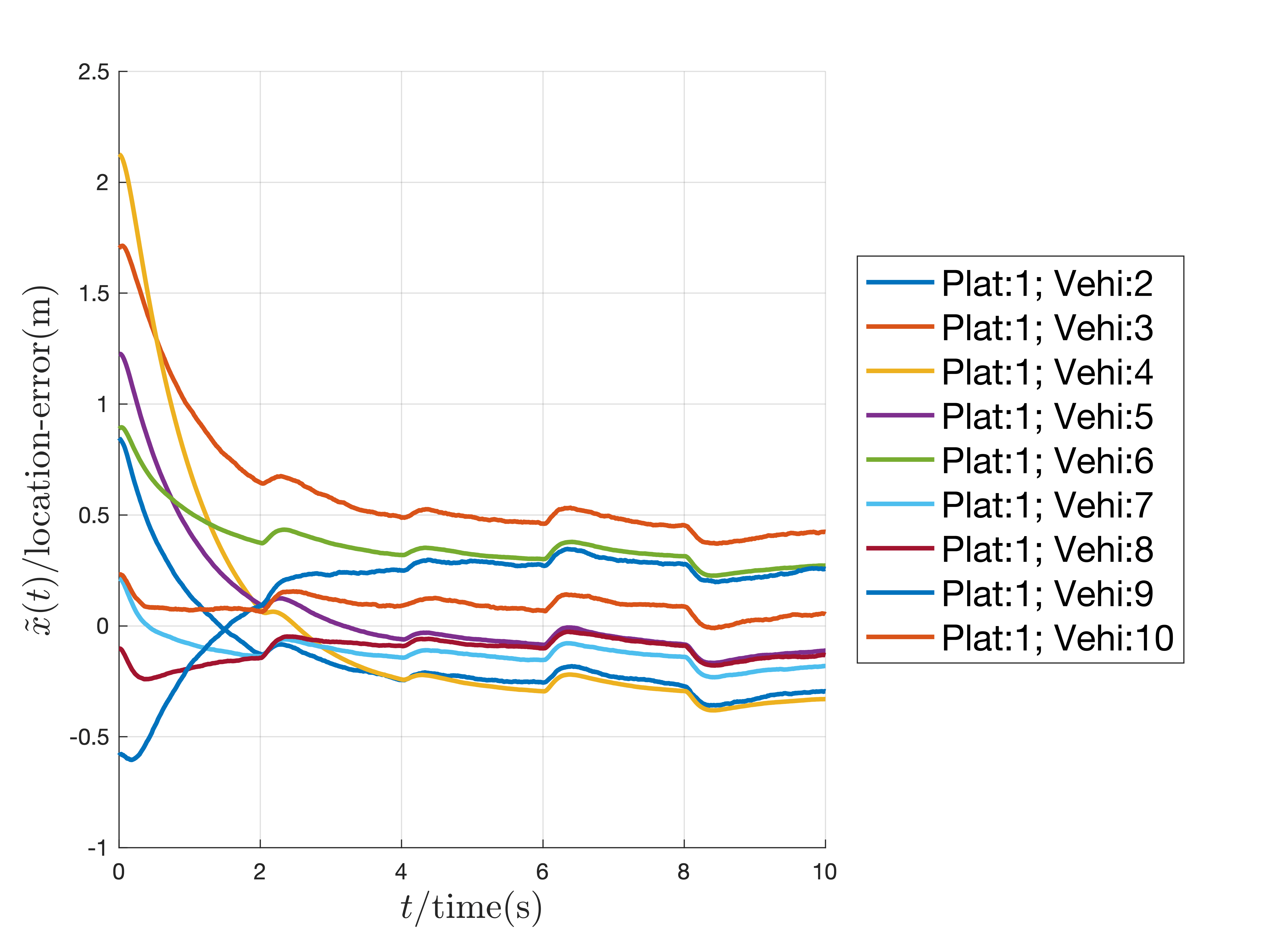}
    \caption{Position tracking errors under decentralized co-design.}
    \vspace{-3mm}
    \label{Fig:L2_WSS_PositionTrackingErrors_DecentralizedPlatoon}
\end{figure}

\begin{figure}[!h]
    \centering
    \includegraphics[width=3in]{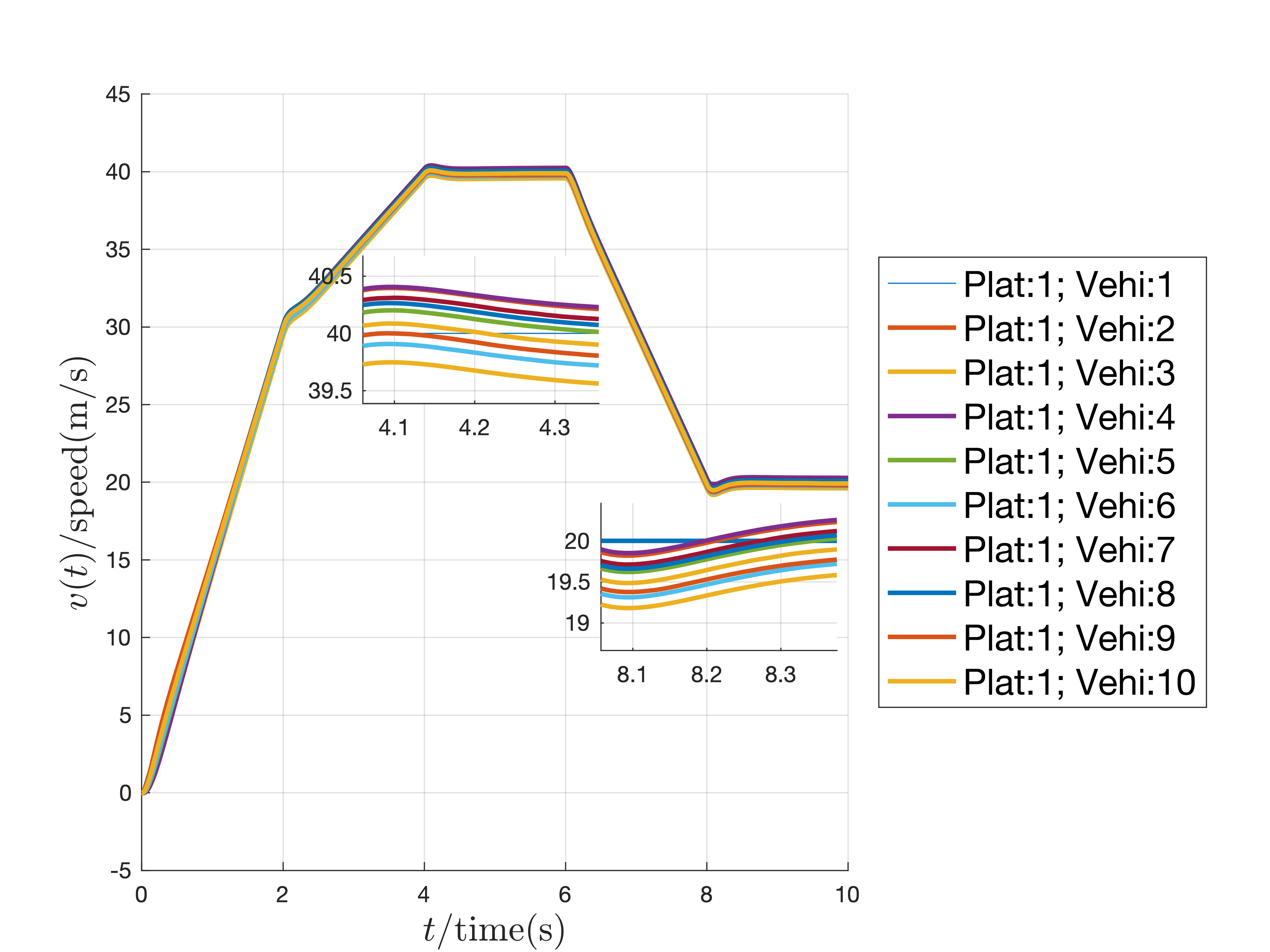}
    \vspace{-3mm}
    \caption{Velocity tracking under decentralized co-design.}
    \label{Fig:L2_WSS_PositionTrackingErrors_DecentralizedPlatoon}
\end{figure}

\begin{figure}[!h]
    \centering
    \includegraphics[width=3in]{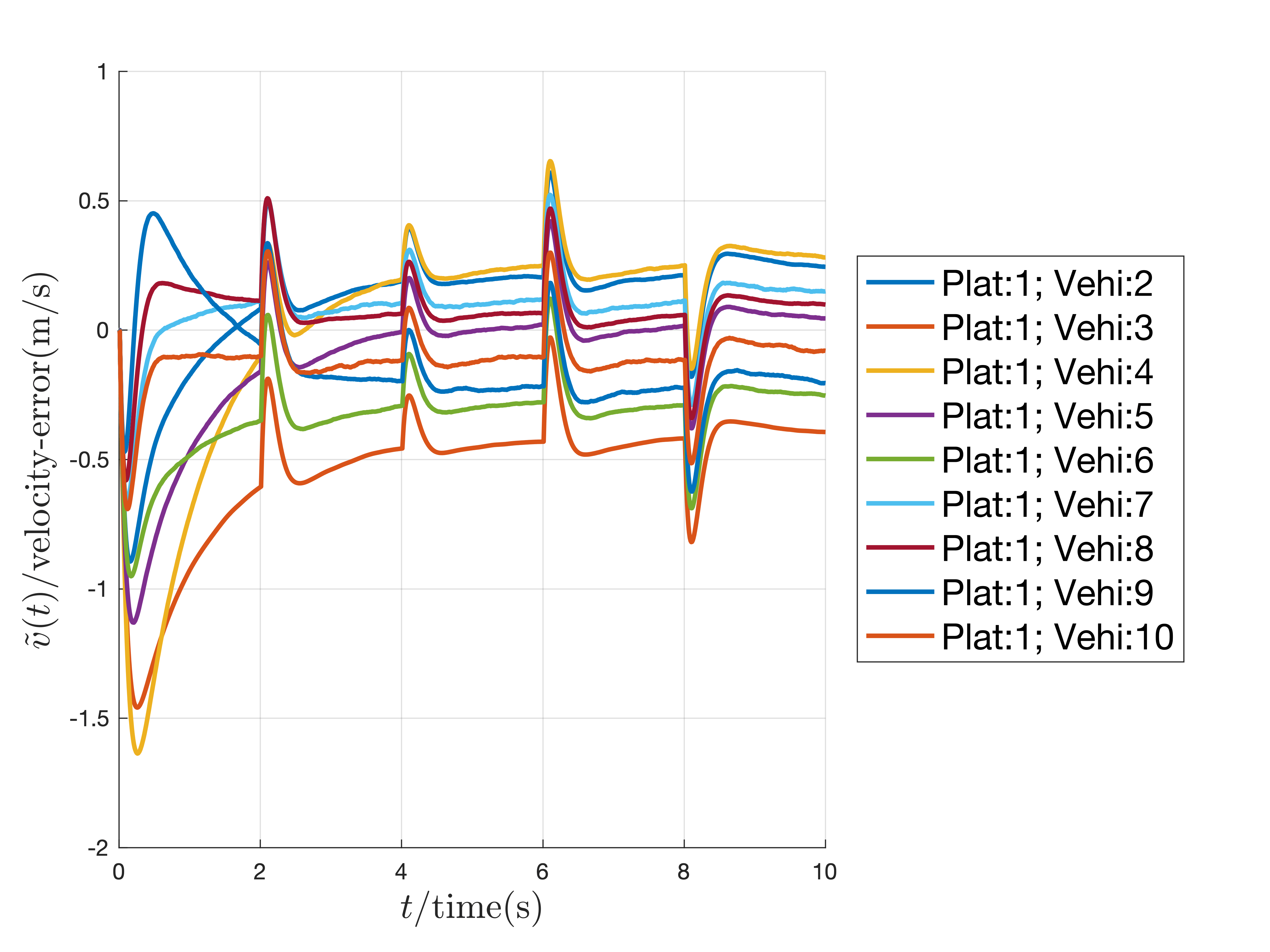}
    \vspace{-3mm}
    \caption{Velocity tracking errors under decentralized co-design.}
    \label{Fig:L2_WSS_VelocityTrackingErrors_DecentralizedPlatoon}
\end{figure}

The obtained (optimal) $L_2$-gain from the decentralized co-design is $5.2214$. This increased $L_2$-gain value (with respect to that in the centralized case) is indicative of the said conservatism (which led to more communication links) of the decentralized case.
To quantify the tracking control results and the string stability, the real-time position and velocity tracking results and the errors are shown in Figs. \ref{Fig:L2_WSS_PositionTracking_DecentralizedPlatoon}-\ref{Fig:L2_WSS_VelocityTrackingErrors_DecentralizedPlatoon}.
The major changes are the transient behaviors in the position and velocity tracking, but the string stability still holds from the observation that the position and velocity tracking errors are bounded and are non-increasing along the platoon. Similar to the centralized case, the weak coupling condition for DSS also holds, i.e., $\max\{\sum_{i\in\N_9}|P_iK_{ij}|\}=\max\{0.0139,0.0521,0.0165,0.0204,0.0081,0.0058,0.0165,\\
0.0128,0.0001\}<1$ holds for all $i\in\N_9$. Notice that these values are larger than those observed in the centralized case. This is intuitive as more communication links will increase the available avenues to propagate the errors and disturbances over the platoon.

\vspace{-1mm}

\section{Conclusion}\label{Sec:Conclusion}


In this paper, we proposed a dissipativity-based co-design framework for vehicular platoons' distributed controller and communication topology. 
This is a well-motivated problem and of immense practical importance, especially when a platoon is splitting into multiple platoons or two or more platoons merging. 
Our proposed design method is distributed since the controller for each vehicle only uses the information from the neighboring vehicles. 
Furthermore, our solution is compositional in that the distributed controller can be composed incrementally, and the controllers (topology weights) of the remaining vehicles do not need to be re-designed when vehicles join/leave the platoon. 
We showed that the proposed co-design method ensures the $L_2$ stability of the closed-loop platooning system, which further implies the $L_2$ weak string stability. 
Simulation results illustrated the applicability of the proposed method. 
In particular, it was observed that both centralized and decentralized co-design frameworks provide feasible solutions, but the decentralized case requires more communication links than that in the centralized case. The ongoing research explores establishing stronger string stability results and the impact of vehicle indexes and communication cost coefficients used in the decentralized co-design framework.

\section*{appendix}\label{Sec:Appendix}
\subsection{Proof of Proposition \ref{Pr:LTISystemXDisspativity}}
    
    \begin{proof}
Note that, under a constant input $u(t)=u^*$, the state $x(t)=x^*$ is an equilibrium state of \eqref{Eq:Pr:LTISystemXDisspativity1} if $Ax^*+Bu^*=\0$, and the corresponding equilibrium output is $y(t) = y^* \triangleq Cx^*+Du^*$. 

\textbf{(Necessity)} Consider the candidate storage function $V(x,x^*) = (x-x^*)^\T P (x-x^*)$ where $P>0$ which leads to 
\begin{align*}
    \dot{V}(x,x^*) = (x-x^*)^\T P (Ax + Bu) + (Ax + Bu)^\T P(x-x^*).
\end{align*}
Now, since $Ax+Bu=Ax^*+Bu^*$ where $\bar{x}\triangleq x-x^*$ and $\bar{u} \triangleq u - u^*$, we can simplify the above $\dot{V}(x,x^*)$ expression as 
\begin{equation}\label{Eq:Pr:LTISystemXDisspativity1Step1}
	\dot{V}(x,x^*) = \bar{x}^\T \H(PA)\bar{x} + \bar{x}^\T PB \bar{u} + \bar{u}^\T B^\T P \bar{x}.
\end{equation}

On the other hand, according to Def. \ref{Def:X-EID}, for \eqref{Eq:Pr:LTISystemXDisspativity1} to be $X$-EID, 
\begin{equation*}
    \dot{V}(x,x^*) \leq 
    \bm{u-u^*\\y-y^*}^\T 
    \bm{X^{11} & X^{12}\\X^{21} & X^{22}}
    \bm{u-u^*\\y-y^*}.
\end{equation*} 
However, since $y-y^* = C\bar{x}+D\bar{u}$, the above inequality can be simplified into the form: 
\begin{align}
	\dot{V}(x,x^*) 
	 \leq&\  
	 \bar{x}^\T (C^\T X^{22} C)\bar{x} 
	 + \bar{x}^\T (C^\T X^{21} + C^\T X^{22} D) \bar{u}\nonumber \\ 
	 &\ +\bar{u}^\T (X^{12}C + D^\T X^{22}C) \bar{x}\nonumber \\
	 &\ + \bar{u}^\T (X^{11} + \H(X^{12} D) + D^\T X^{22} D)\bar{u}.
	 \label{Eq:Pr:LTISystemXDisspativity1Step2}
\end{align} 

Combining \eqref{Eq:Pr:LTISystemXDisspativity1Step1} and \eqref{Eq:Pr:LTISystemXDisspativity1Step2}, we can obtain the condition:
\begin{equation*}
	\bm{\bar{x}\\\bar{u}}^\T W  \bm{\bar{x}\\\bar{u}} \geq 0,
\end{equation*}
where $W$ is the block matrix seen in the LMI in Prop. \ref{Pr:LTISystemXDisspativity}. This condition holds whenever $W \geq 0$, irrespective of $x,u,x^*$ and $u^*$. Thus, \eqref{Eq:Pr:LTISystemXDisspativity1} is $X$-EID if there exists $P>0$ such that $W \geq 0$.

\textbf{(Sufficiency)} Assume that we can find some $P>0$ such that $W\geq 0$ irrespective of $x,u,x^*$ and $u^*$. This implies that the storage function $V(x,x^*)\triangleq(x-x^*)^\T P (x-x^*)$ satisfies the conditions stated in Def. \ref{Def:EID} and  Def. \ref{Def:X-EID} (i.e., the conditions for \eqref{Eq:Pr:LTISystemXDisspativity1} to be $X$-EID, see also \cite{Madeira2022}). This completes the proof. 
\end{proof}

\subsection{Proof of Corollary \ref{Co:LTISystemXDisspativation}}

\begin{proof}
The proof starts by applying Prop. \ref{Pr:LTISystemXDisspativity} to the considered system \eqref{Eq:Co:LTISystemXDisspativation1}. Basically, in \eqref{Eq:Pr:LTISystemXDisspativity2}, we need to change the variables: $A \rightarrow A+BL$, $B \rightarrow \I$, $C\rightarrow\I$ and $D\rightarrow\0$, which gives
\begin{equation*}
\bm{-\mathcal{H}(P(A+BL))+ X^{22}& -P + X^{21}\\
\star & X^{11}}\geq 0.
\end{equation*}
Now, using the congruence principle (in particular, by pre- and post-multiplying by $\diag(P^{-1},\I)$), we can obtain
\begin{equation*}
\bm{-\mathcal{H}((A+BL)P^{-1})+ P^{-1}X^{22}P^{-1}& -I + P^{-1}X^{21}\\
\star & X^{11}}\geq 0.
\end{equation*}
Changing variables via $K \triangleq LP^{-1}$ and $P \triangleq P^{-1}$ then gives
\begin{equation*}
\bm{-\mathcal{H}(AP+BK)+ PX^{22}P& -I + PX^{21}\\
\star & X^{11}}\geq 0.
\end{equation*}
Finally, using $X^{22}<0$ and applying the Schur complements theory, we can obtain the  LMI condition in \eqref{Eq:Co:LTISystemXDisspativation2}.
\end{proof}

\subsection{Proof of Proposition \ref{Pr:NSC2Dissipativity}}

\begin{proof}
As each subsystem $\Sigma_i, i\in\N_N$ is $X_i$-EID, there exists a storage function $V_i(x_i,x_i^*)$ such that 
\begin{equation}\label{Eq:Pr:NSC2DissipativityStep1}
    \dot{V}_i(x_i,x_i^*) \leq \bm{u_i-u_i^*\\y_i-y_i^*}^\T X_i \bm{u_i-u_i^*\\y_i-y_i^*},
\end{equation}
for all $(x_i,x_i^*,u_i)\in \R^{n_i}\times \X_i \times \R^{q_i}$ and $i\in\N_N$. Now, for the networked system $\Sigma$, consider the candidate storage function $V(x,x^*)\triangleq \sum_{i\in\N_N} p_i V_i(x_i,x_i^*)$, which, with \eqref{Eq:Pr:NSC2DissipativityStep1}, leads to 
\begin{equation*}
\dot{V}(x,x^*) \leq  \bm{u-u^*\\y-y^*}^\T \textbf{X}_p \bm{u-u^*\\y-y^*},
\end{equation*}
for all $(x,x^*,u)\in \R^n\times \X \times \R^q$. Consequently, for the networked system $\Sigma$ to be $\textbf{Y}$-EID, we require 
\begin{equation}\label{Eq:Pr:NSC2DissipativityStep2}
\bm{u-u^*\\y-y^*}^\T \textbf{X}_p \bm{u-u^*\\y-y^*} \leq 
\bm{w-w^*\\z-z^*}^\T \textbf{Y} \bm{w-w^*\\z-z^*}.
\end{equation}
Besides, using the interconnection relationship \eqref{Eq:NSC2Interconnection} and equilibrium properties of the networked system $\Sigma$, we can write 
\begin{equation}\label{Eq:Pr:NSC2DissipativityStep3}
\begin{aligned}
    \bm{u-u^*\\y-y^*} =& \bm{M_{uy} & M_{uw} \\ \I & \0} \bm{y-y^*\\w-w^*},\\
    \bm{w-w^*\\z-z^*} =& \bm{\0 & \I \\ M_{zy} & M_{zw}}\bm{y-y^*\\w-w^*}.
\end{aligned}
\end{equation}
Finally, by applying \eqref{Eq:Pr:NSC2DissipativityStep3} in \eqref{Eq:Pr:NSC2DissipativityStep2} and re-arranging the terms, we can obtain the LMI in \eqref{Eq:Pr:NSC2Dissipativity1}. This completes the proof.
\end{proof}

\subsection{Proof of Proposition \ref{Pr:NSC2Synthesis}}

\begin{proof}
First, we restate the condition for the networked system $\Sigma$ to be $\textbf{Y}$-EID (i.e., \eqref{Eq:Pr:NSC2Dissipativity1} from Prop. \ref{Pr:NSC2Dissipativity}) as
\begin{equation}\label{Eq:Pr:NSC2SynthesisStep1}
    \begin{aligned}
    &\scriptsize
    \bm{M_{uy} & M_{uw}\\ M_{zy} & M_{zw}}^\T 
    \bm{\textbf{X}_p^{11} & \0 \\ \0 & -\textbf{Y}^{22}}
    \bm{M_{uy} & M_{uw}\\ M_{zy} & M_{zw}}\\
    &\scriptsize +\bm{ M_{uy}^\T \textbf{X}_p^{12} + \textbf{X}_p^{21}M_{uy} + \textbf{X}_p^{22} &  \textbf{X}_p^{21}M_{uw} - M_{zy}^\T \textbf{Y}^{21} \\ M_{uw}^\T \textbf{X}_p^{12} - \textbf{Y}^{12}M_{zy} & -(M_{zw}^\T \textbf{Y}^{21} +  \textbf{Y}^{12}M_{zw} + \textbf{Y}^{11})}
    \normalsize   
    \leq 0.     
    \end{aligned}
\end{equation}
Note that, under As. \ref{As:NegativeDissipativity}, when $X_i^{11}>0, \forall i\in\N_N$, we get $\scriptsize \bm{\textbf{X}_p^{11} & \0 \\ \0 & -\textbf{Y}^{22}}>0$. Therefore, \cite[Lm. 1]{WelikalaP52022} is applicable to obtain an equivalent condition to \eqref{Eq:Pr:NSC2SynthesisStep1}, which, under the change of variables $L_{uy} \triangleq \textbf{X}_p^{11}M_{uy}$ and $L_{uw} \triangleq \textbf{X}_p^{11}M_{uw}$, leads to \eqref{Eq:Pr:NSC2Synthesis2}. On the other hand, under As. \ref{As:NegativeDissipativity}, when $X_i^{11}<0, \forall i\in\N_N$, \cite[Lm. 2]{WelikalaP52022} is applicable to obtain an alternative equivalent condition to \eqref{Eq:Pr:NSC2SynthesisStep1} as \eqref{Eq:Pr:NSC2Synthesis2Alternative} (see also \cite[Rm. 14]{Welikala2022Ax3}). 
\end{proof}

\vspace{-1mm}
\bibliographystyle{IEEEtran}
\bibliography{References}

\end{document}